\newcommand*\samethanks[1][\value{footnote}]{\footnotemark[#1]}
\title{A PTAS for the Steiner Forest Problem in Doubling Metrics}
\date{}
\author{T-H. Hubert Chan\thanks{Department of Computer Science, the University of Hong Kong. {\texttt{\{hubert,sghu,sfjiang\}@cs.hku.hk}}} \and Shuguang Hu\samethanks \and  Shaofeng H.-C. Jiang\samethanks}
\newif\ifplain
\newtheorem{fact}{Fact}[section]
\newtheorem{claim}{Claim}[section]
\newtheorem{theorem}{Theorem}[section]
\newtheorem{lemma}{Lemma}[section]
\newtheorem{corollary}{Corollary}[section]
\newtheorem{definition}{Definition}[section]
\newcommand{\eps}{\epsilon}
\newcommand{\enr}{\eps}
\newcommand{\ep}{\eps}
\newcommand{\ceil}[1]{\ensuremath{\lceil #1 \rceil}}
\newcommand{\poly}{\operatorname{poly}}
\newcommand{\TSP}{\ensuremath{\mathsf{TSP}}\xspace}
\newcommand{\SFP}{\ensuremath{\mathsf{SFP}}\xspace}
\newcommand{\STP}{\ensuremath{\mathsf{STP}}\xspace}
\newcommand{\INS}{\ensuremath{W}\xspace}
\newcommand{\OPT}{\ensuremath{\mathsf{OPT}}\xspace}
\newcommand{\T}{\ensuremath{\mathsf{T}}\xspace}
\newcommand{\heur}{\ensuremath{\mathsf{H}}\xspace}
\newcommand{\Exp}{\ensuremath{\mathsf{Exp}}\xspace}
\newcommand{\ALG}{\ensuremath{\mathsf{ALG}}\xspace}
\newcommand{\DP}{\ensuremath{\mathsf{DP}}\xspace}
\newcommand{\Diam}{\ensuremath{\mathsf{Diam}}}
\newcommand{\expct}[1]{\ensuremath{\text{{\bf E}$\left[#1\right]$}}}
\newcommand{\ignore}[1]{}
\newcommand{\Ht}{\ensuremath{\mathsf{ht}}\xspace}
\newcommand{\Par}{\ensuremath{\mathsf{par}}\xspace}
\newcommand{\Des}{\ensuremath{\mathsf{des}}\xspace}
\newcommand{\Bas}{\ensuremath{\mathsf{Bas}}\xspace}
\newcommand{\Pro}{\ensuremath{\mathsf{Pro}}\xspace}
\newcommand{\Vir}{\ensuremath{\mathsf{Vir}}\xspace}
\newcommand{\NBas}{\ensuremath{\mathsf{NBas}}\xspace}
\newcommand{\Can}{\ensuremath{\mathsf{Can}}\xspace}
\newcommand{\Eff}{\ensuremath{\mathsf{Eff}}\xspace}
\newcommand{\Dis}{\ensuremath{\mathsf{Dis}}\xspace}
\newcommand{\BAS}{\ensuremath{\mathsf{BAS}}\xspace}
\newcommand{\NBAS}{\ensuremath{\mathsf{NBAS}}\xspace}
\newcommand{\Val}{\ensuremath{\mathsf{val}}\xspace}
\begin{document}

\begin{titlepage}

\maketitle

\begin{abstract}

We achieve a (randomized) polynomial-time approximation	scheme (PTAS) for the Steiner Forest Problem in doubling metrics. Before our work, a PTAS is given only for the Euclidean plane in [FOCS 2008: Borradaile, Klein and Mathieu].  Our PTAS also shares similarities with the dynamic programming for sparse instances used in [STOC 2012: Bartal, Gottlieb and Krauthgamer] and [SODA 2016: Chan and Jiang]. However, extending previous approaches requires overcoming several non-trivial hurdles, and we make the following technical contributions.

(1) We prove a technical lemma showing that Steiner points have to be ``near'' the terminals in an optimal Steiner tree. This enables us to define a heuristic to estimate the local behavior of the optimal solution, even though the Steiner points are unknown in advance.	This lemma also generalizes previous results in the Euclidean plane, and may be of independent interest for related problems involving Steiner points.

(2) We develop a novel algorithmic technique known as ``adaptive cells'' to overcome the difficulty of keeping track of multiple components in a solution. Our idea is based on but significantly different from the previously proposed ``uniform cells'' in the FOCS 2008 paper, whose techniques cannot be readily applied to doubling metrics.

\end{abstract}

\thispagestyle{empty}
\end{titlepage}

\section{Introduction}
\label{sec:intro}
We consider the Steiner Forest Problem ($\SFP$) in a metric space~$(X, d)$.
An instance of the problem is given by 
a collection~$\INS$ of~$n$ terminal pairs~$\{(a_i, b_i) : i\in [n]\}$ in~$X$,
and 
the objective is to find a minimum weight graph~$F = (V,E)$
(where~$V$ is a subset of~$X$ and the edge weights are induced 
by the metric space)
such that every pair in~$\INS$ is connected in~$F$.

\subsection{Problem Background}

\ignore{
The general version of~$\SFP$ is defined on a undirected graph.
However, since there exists a polynomial time ratio preserving reduction,
it is sufficient to consider the metric~$\SFP$ problems.}

The problem is well-known in the computer science community.
In general metrics, 
Chleb{\'{\i}}k and Chleb{\'{\i}}kov{\'{a}}~\cite{DBLP:journals/tcs/ChlebikC08}
showed that~$\SFP$ is \textbf{NP}-hard to approximate with ratio better than~$\frac{96}{95}$.
The best known approximation ratio achievable in polynomial time is~$2$~\cite{DBLP:journals/siamcomp/GoemansW95,DBLP:journals/siamcomp/AgrawalKR95}.
Recently, Gupta and Kumar~\cite{DBLP:conf/stoc/Gupta015} gave a purely combinatorial
greedy-based algorithm that also achieves constant ratio.
However, it is still an open problem to break the~$2$-approximation barrier in general metrics for $\SFP$.

\noindent \textbf{$\SFP$ in Euclidean Plane and Planar Graphs.}
In light of the aforementioned hardness result~\cite{DBLP:journals/tcs/ChlebikC08},
restrictions are placed on the metric space to achieve
$(1+\epsilon)$ approximation in polynomial time.
In the Euclidean plane,
a randomized polynomial-time approximation scheme (PTAS) was obtained in \cite{DBLP:conf/focs/BorradaileKM08},
using the dynamic programming framework proposed by Arora~\cite{DBLP:journals/jacm/Arora98}.
Later on, a simpler analysis was presented in \cite{DBLP:journals/algorithmica/BateniH12},
in which a new structural property is proved and additional information is incorporated in the dynamic programming algorithm. It was only suggested
that similar techniques might be applicable to higher-dimensional Euclidean space.

Going beyond the Euclidean plane,
a PTAS for planar graphs is obtained in \cite{DBLP:journals/jacm/BateniHM11} and more generally, on bounded genus graphs.
As a building block, they also obtained
a PTAS for graphs with bounded treewidth.

\noindent\textbf{Steiner Tree Problems.}
A notable special case of~$\SFP$ is the Steiner Tree Problem ($\STP$), in which all terminals
are required to be connected. In general metrics, the MST on the terminal points
simply gives a~$2$-approximation. There is a long line of research to improve the~$2$-approximation, 
and the state-of-the-art approximation ratio
$1.39$ was presented in \cite{DBLP:conf/stoc/ByrkaGRS10} via an LP rounding approach.
On the other hand, it is \textbf{NP}-hard to approximate \STP better than the ratio~$\frac{96}{95}$~\cite{DBLP:journals/tcs/ChlebikC08}.

For the group $\STP$ in general metrics, it is \textbf{NP}-hard
to approximate within~$\log^{2-\epsilon}{n}$ \cite{DBLP:conf/stoc/HalperinK03} unless~$\textsf{NP} \subseteq \textsf{ZTIME}(n^{\text{polylog}(n)})$.
On the other hand, it is possible to approximate within~$O(\log^3{n})$ as shown in \cite{DBLP:journals/jal/GargKR00}. Restricting to planar graphs,
the group $\STP$
can be approximated within $O(\log {n} \, \text{poly}\log{\log{n}})$~\cite{DBLP:journals/talg/DemaineHK14}, and very recently,
this result is improved to a PTAS~\cite{DBLP:conf/stoc/BateniDHM16}.

For more related works, we refer the reader to a survey
by Hauptmann and Karpi{\'n}ski~\cite{hauptmann2013compendium}, who gave a comprehensive literature
review of $\STP$ and its variations.

\noindent \textbf{PTAS's for Other Problems in Doubling Metrics.}
Doubling dimension captures the local growth rate of a
metric space. A~$k$-dimensional Euclidean dimension has doubling dimension~$O(k)$.
A challenge in extending algorithms for low-dimensional Euclidean space
to doubling metrics is the lack of geometric properties in doubling metrics.
Although QPTAS's for various approximation problems in doubling metrics, such as
the Traveling Salesman Problem ($\TSP$)
and $\STP$, were presented
in \cite{DBLP:conf/stoc/Talwar04}, a PTAS was only recently achieved for $\TSP$~\cite{DBLP:conf/stoc/BartalGK12}. Subsequently,
a PTAS is also achieved for group $\TSP$
in doubling metrics~\cite{DBLP:conf/soda/ChanJ16}. 
Before this work, the existence of a PTAS for $\SFP$ (or even the special case $\STP$) in doubling metrics remains an open problem.

\subsection{Our Contribution and Techniques}

Although PTAS's for $\TSP$ (and its group variant) are known, as we shall explain later, the nature of $\SFP$ and 
$\TSP$-related problems are quite different.
Hence, it is interesting to investigate what new techniques are
required for $\SFP$.
Fundamentally, it is an important question that whether the notion of doubling dimension
captures sufficient properties of a metric space to
design a PTAS for $\SFP$, even without the geometric properties
that are crucially used in obtaining approximation schemes for $\SFP$ in
the Euclidean plane~\cite{DBLP:conf/focs/BorradaileKM08}.

In this paper, we settle this open problem by
giving a (randomized) PTAS for
$\SFP$ in doubling metrics.
We remark that previously even a PTAS for $\SFP$ in higher-dimensional Euclidean space is 
not totally certain.

\begin{theorem}[PTAS for~$\SFP$ in Doubling Metrics]
	\label{theorem:main}
	For any~$0<\epsilon<1$, there is a (randomized) algorithm that
	takes an instance of~$\SFP$ with~$n$ terminal pairs
	in a metric space with doubling dimension at most~$k$,
	and 
	returns
	a~$(1+\epsilon)$-approximate solution with
	constant probability, running in time
	$O(n^{O(1)^k}) \cdot \exp(\sqrt{\log{n}}\cdot O(\frac{k}{\epsilon})^{O(k)})$.
\end{theorem}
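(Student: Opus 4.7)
The plan is to follow the dynamic-programming-over-hierarchical-decomposition paradigm that has been successful for \TSP in doubling metrics \cite{DBLP:conf/stoc/BartalGK12,DBLP:conf/soda/ChanJ16}, but to augment it with two ingredients tailored to \SFP. First, I would apply a randomized Talwar-style hierarchical partition of $(X,d)$ into cells at geometrically decreasing diameter scales $\Delta_i = 2^{-i}\cdot\diam(\INS)$, and place a net of portals on (a neighborhood of) every cell boundary. A standard snapping argument shows that restricting the solution to cross cell boundaries only at portals costs at most a $(1+\epsilon)$ factor in expectation, provided the number of boundary crossings at each level is bounded. The dynamic program will then operate bottom-up on this laminar family.

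The structural heart of the proof is the Steiner-point locality lemma, which I would prove first. The claim to aim for is: in any optimal Steiner tree on a terminal set $S$ of diameter $\Delta$, every Steiner point lies within distance $O(\Delta/\epsilon)^{O(k)}\cdot\Delta$ of some terminal in $S$ (or a similar quantitative statement). This is crucial because, unlike \TSP, Steiner components can wander far from their terminals, and the DP needs a way to estimate the cost and topology of the optimal solution restricted to a cell without knowing where the Steiner vertices live. Combined with a patching argument applied to the near-optimal forest, this yields a sparsity theorem: there exists a $(1+\epsilon)$-approximate forest whose edges cross each level-$i$ cell boundary at most $r=(k/\epsilon)^{O(k)}$ times. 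This lemma also lets one localize the DP heuristic: when computing the partial cost of OPT inside a cell $C$, we may restrict attention to terminals of $\INS$ in (or near) $C$.

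Next comes the genuinely new algorithmic piece, the adaptive cells. In \TSP one only tracks, for each cell, a matching on the portal crossings; in \SFP one must additionally remember which terminal pair each portal crossing is fulfilling, because a component escaping a cell might be scheduled to merge with its partner only at an ancestor. A uniform level-$i$ partition would force the DP state to label every crossing with a pair identifier, blowing up the state space by an $n^{r}$ factor. The fix is to refine each uniform cell into adaptive sub-cells whose geometry depends on the actual terminal pairs intersecting it: sub-cells are defined so that only $(k/\epsilon)^{O(k)}$ distinct ``pair classes'' can legitimately cross any sub-cell boundary in a near-optimal sparse solution. The DP state at a (sub-)cell then consists of (i) which portals are used, (ii) the partition of these portals into connected components of the solution inside the cell, and (iii) for each component, the set of open pairs it carries. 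Combining children is a consistency check; at the root, we require every pair to be closed.

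The main obstacle I expect is the simultaneous analysis of Steps 2 and 3: I have to show that the adaptive refinement is coarse enough that the DP runs in the claimed time $O(n^{O(1)^k})\cdot\exp(\sqrt{\log n}\cdot O(k/\epsilon)^{O(k)})$, yet fine enough that the sparsity/patching bound and the Steiner-locality lemma still give a $(1+\epsilon)$ guarantee after snapping to portals and adaptive boundaries. In the Euclidean plane of \cite{DBLP:conf/focs/BorradaileKM08} this balancing uses axis-aligned geometry (portals on lines, uniform cuts), which is unavailable in doubling metrics; the adaptive construction must therefore be defined purely from metric nets and the terminal structure, and the patching lemma must be reproved for this intrinsic construction. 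Once that technical core is in place, the expectation bound $\mathbb{E}[\ALG]\le(1+\epsilon)\OPT$ follows from the randomized decomposition, and a constant-probability high-confidence version is obtained by Markov plus independent repetitions, completing the proof of Theorem~\ref{theorem:main}.
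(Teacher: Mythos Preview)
Your proposal has the right high-level shape but misidentifies the two central obstacles, and as written would not yield the theorem.

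First, you are missing the sparse-instance decomposition. You assert that ``a patching argument applied to the near-optimal forest \ldots\ yields a sparsity theorem: there exists a $(1+\epsilon)$-approximate forest whose edges cross each level-$i$ cell boundary at most $r=(k/\epsilon)^{O(k)}$ times.'' In doubling metrics this is false for general instances: the crossing bound $r$ depends on the local weight of the optimal solution around each net-point, and an arbitrary instance need not be sparse in this sense. The paper (following \cite{DBLP:conf/stoc/BartalGK12}) handles this by a recursive outer loop: a heuristic $\heur^{(i)}_u$ estimates the local weight of $\OPT$ in each ball $B(u,O(s^i))$; if some ball is ``critical'' one carves off a sparse sub-instance $\INS_1$ around it and recurses on the remainder $\INS_2$, proving $(1-\eps)\,\E[\OPT(\INS_1)]+\E[\OPT^{nr}(\INS_2)]\le \OPT^{nr}(\INS)$. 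The Steiner-locality lemma is used precisely here---to show that $\heur$ really does upper- and lower-bound the local weight of $\OPT$ even though the Steiner points are unknown---not merely to ``localize the DP heuristic'' as you describe. (Also, your quantitative form of the locality lemma, distance $O(\Delta/\eps)^{O(k)}\cdot\Delta$, is vacuous since it exceeds the diameter; the actual statement is that if every edge has length $\le\gamma D$ then every Steiner point is within $O(k\gamma\log\tfrac{1}{\gamma})\cdot D$ of a terminal.)

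Second, your adaptive cells solve the wrong problem. You worry that the DP must ``label every crossing with a pair identifier,'' incurring an $n^r$ blow-up; but the paper's DP never tracks pair identities at all---it records only a partition $Y$ of the active portals and, for each cell, which parts of $Y$ it touches. The genuine obstacle is orthogonal: with $\Theta(\log n)$ levels in a doubling decomposition, enforcing a cell property with \emph{uniform} cells forces cell diameter to be $\Theta(\log n)$ times smaller than the cluster, so a cluster contains $\mathrm{polylog}(n)$ cells and the DP is only a QPTAS. The paper's adaptive cells fix \emph{this}: the basic cells covering a crossing component $A$ have diameter scaled to $w(A)$ via the function $h(i,l)$, and promoted/virtual cells are added so that children refine parents while the total cell count per cluster stays $(\tfrac{sk}{\eps})^{O(k)}$. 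The delicate part---which your plan does not anticipate---is showing that enforcing the cell property only on basic cells costs $O(\eps)\cdot w(F)$ in expectation (via a per-component charging argument) and that this automatically implies the property on all effective cells.
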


We next give an overview of our techniques.
On a high level, we use the divide and conquer
framework that was originally used
by Arora~\cite{DBLP:journals/jacm/Arora98}
to achieve a PTAS for $\TSP$ in Euclidean space,
and was extended recently to doubling metrics~\cite{DBLP:conf/stoc/BartalGK12}. 

However, we shall explain that it is non-trivial to adapt
this framework to $\SFP$, and how we overcome the difficulties encountered. 
Moreover, we shall provide
some insights regarding the relationship between Euclidean and doubling metrics,
and discuss the implications of our technical lemmas.

\noindent\textbf{Summary of Framework.}
As in~\cite{DBLP:conf/stoc/BartalGK12},
a PTAS is designed for a class of special instances known as
\emph{sparse} instances. Then, it can be shown that the general instances can be
decomposed into sparse instances. Roughly speaking, an instance is sparse,
if there is an optimal solution such that for any ball~$B$ with radius~$r$,
the portion of the solution in~$B$ has weight that is small with respect to~$r$.

The PTAS for the sparse instances is usually based on a dynamic program,
which is based on a randomized hierarchical decomposition
as in~\cite{DBLP:conf/stoc/Talwar04,DBLP:conf/stoc/BartalGK12}.
This framework has also been successfully applied
to achieve a PTAS for group $\TSP$ in doubling metrics~\cite{DBLP:conf/soda/ChanJ16}.
Intuitively, sparsity is
used to establish the property that with high enough probability,
a cluster in the randomized decomposition cuts a (near) optimal tour only a small number of times~\cite[Lemma 3.1]{DBLP:conf/stoc/BartalGK12}.
However, $\SFP$ brings new significant challenges when such a framework is applied.  We next
describe the difficulties and give an overview of our technical contributions.

\vspace{10pt}
 
\noindent \textbf{Challenge 1: It is difficult to
detect a sparse instance because which Steiner points are used
by the optimal solution are unknown.}
Let us first consider $\STP$, which is a special case of $\SFP$ in which
all (pairs of) terminals are required to be connected. In other words,
the optimal Steiner tree is the minimum weight graph that connects all terminals.
Unlike $\TSP$ in which the points visited by a tour are clearly known in advance,
it is not known which points will be included in the optimal Steiner tree.



In~\cite{DBLP:conf/stoc/BartalGK12},
a crucial step is to estimate the sparsity of a ball~$B$,
which measures the weight of the portion of the optimal solution
restricted to~$B$.
For $\TSP$ tour, this can be estimated from the points inside~$B$ that have
to be visited.
However, for solution involving Steiner points,
it is difficult to analyze the solution inside some ball~$B$,
because it is possible that there are few (or even no) terminals inside~$B$,
but the optimal solution could potentially have lots of Steiner points
and a large weight inside~$B$.


\noindent \textbf{Our Solution: Analyzing the Distribution of Steiner Points in an Optimal Steiner Tree in Doubling Metrics.}
We resolve this issue by showing a technical characterization
of Steiner points in an optimal Steiner tree for doubling
metrics. This technical lemma is used crucially in our proofs,
and we remark that it could be of interest for other problems involving Steiner points in doubling metrics.


\begin{lemma}[Formal version in Lemma~\ref{lemma:near_terminal}]
\label{lemma:informal_near_terminal}
For a terminal set~$S$ with diameter~$D$, if an optimal Steiner tree spanning~$S$ has no edge longer than~$\gamma D$,
then every Steiner point in the solution is within~$O(\sqrt{\gamma}) \cdot  D$ distance to some terminal in~$S$,
where the big~$O$ hides the
dependence on the doubling dimension.
\end{lemma}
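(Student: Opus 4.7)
The plan is a proof by contradiction. Suppose some Steiner point $s$ in the optimal Steiner tree $T$ satisfies $d(s, S) = \rho$ with $\rho$ substantially larger than $\sqrt{\gamma} D$; the goal is to exhibit a spanning Steiner tree strictly cheaper than $T$, contradicting optimality.

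The first task is structural. By the standard optimality arguments, every Steiner point of $T$ may be assumed to have degree at least $3$ (degree-$1$ Steiner vertices can be deleted and degree-$2$ ones short-cut, both reducing weight). Combined with the edge-length cap $\gamma D$, this forces a rich branching structure near $s$: since $B(s, \rho)$ is free of terminals and each tree edge has length at most $\gamma D$, every vertex reachable from $s$ within tree-distance $\rho/2$ is a Steiner point of degree $\geq 3$ in $T$. Iterating the branching over $\Omega(\rho/\gamma D)$ hops produces many root-to-leaf paths inside $B(s, \rho/2)$, each of metric length on the order of $\rho/2$ and each terminating at an ``exit'' edge whose far endpoint lies outside~$B(s, \rho/2)$ and must ultimately be connected through~$T$ to some terminal outside~$B(s, \rho)$.

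The second ingredient is a doubling packing estimate. In a metric of doubling dimension~$k$, the ball $B(s, \rho/2)$ admits a cover by $(\rho/\gamma D)^{O(k)}$ sub-balls of radius $\gamma D$, limiting how densely the branches above can be arranged and how many essentially distinct exit directions they can have. Pairing this upper bound with the combinatorial lower bound on branches from the previous step, I would derive that the weight of $T$ inside $B(s, \rho/2)$ is at least of order $\rho^2/(\gamma D)$, up to factors depending on~$k$. This ``thickness'' estimate is precisely where the geometric mean $\sqrt{\gamma D \cdot D} = \sqrt{\gamma}\,D$ enters: one scale is the metric depth $\rho$ of the cluster, the other is the edge-length bound $\gamma D$, and their interplay produces the square-root dependence rather than a naïve linear-in-$\gamma$ bound.

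The final step is a swap: remove the tree portion inside the cluster around~$s$ and reconnect the exit branches through the nearest terminal $t^\ast$ (or, when needed, through several well-chosen terminals obtained by a net on the cluster boundary), using the triangle inequality to bound the reconnection cost in terms of $\rho$ and the number of exit branches. By the packing bound, the reconnection costs at most $(\rho/\gamma D)^{O(k)}\cdot \rho$, while the thickness lower bound guarantees savings of order $\rho^2/(\gamma D)$; the net change is negative exactly when $\rho \gg C(k)\sqrt{\gamma}\,D$ for a suitable dimension-dependent constant $C(k)$, which gives the required contradiction. The hardest step is the thickness lower bound: a naïve accounting using only ``each branch has length~$\geq \rho/2$'' yields merely a linear $\Omega(\rho)$ bound (insufficient), and one really must combine the degree-$\geq 3$ branching with the doubling dimension packing in a two-scale fashion to extract the quadratic $\rho^2/(\gamma D)$ estimate. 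In the Euclidean plane this quadratic estimate follows from the $120^\circ$ Fermat-point condition; the doubling-metric analogue must substitute for that geometric input a purely combinatorial packing count, which is the novel content of the lemma.
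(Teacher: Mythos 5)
Your high-level intuition — that the doubling-packing property must stand in for the Euclidean $120^\circ$ angle condition, and that a two-scale trade-off produces the $\sqrt{\gamma}$ — is the right sort of thing to look for. But your proposed execution takes a genuinely different route from the paper and, as written, has two significant gaps.

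First, the central ``thickness'' estimate is not established, and I doubt it holds in the form you state. You want $w\bigl(T \cap B(s,\rho/2)\bigr) = \Omega(\rho^2/(\gamma D))$ by ``combining degree-$\geq 3$ branching with the packing count.'' But the packing bound of Fact~\ref{fact:net} controls the number of \emph{separated} points, and Steiner points in a tree need not be separated at any fixed scale — the branching can concentrate arbitrarily densely without violating the doubling dimension. What \emph{does} pack is the set of \emph{terminals}, and the mechanism that forces terminal separation is an optimality swap argument: if one removes all tree edges of weight $\geq 2^{i-1}$ and two resulting terminal-containing components were closer than $2^{i-1}$, one could swap a long edge for a short one and beat optimality. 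This is the key device the paper uses (inside the proof of Lemma~\ref{lemma:near_terminal}), and it is exactly the ingredient your sketch is missing: you apply packing directly to the branch structure around $s$, which is not justified.

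Second, even granting the thickness bound, the arithmetic in your swap step does not close. You claim reconnection cost $(\rho/\gamma D)^{O(k)}\cdot\rho$ against savings $\rho^2/(\gamma D) = (\rho/\gamma D)\cdot\rho$. For $\rho \gg \gamma D$ and any $k\geq 2$ the reconnection cost \emph{dominates} the savings — the exponent $O(k)$ on the branch count beats the single power of $\rho/(\gamma D)$ you gain on the other side. Trying an MST on the exit points rather than naïve star-reconnection still gives a cost of roughly $(\rho/\gamma D)^{\Theta(k)} \cdot \rho$, which is again larger. So the contradiction you need — ``net change negative when $\rho \gg \sqrt{\gamma}D$'' — does not follow from the inequalities you set up.

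The paper's actual proof of Lemma~\ref{lemma:near_terminal} avoids both issues by being a \emph{direct} existence argument rather than a contradiction-plus-swap: root the tree at $r$, classify edges by dyadic weight ranges, and descend greedily into the child whose subtree has fewer terminal-containing components (at the relevant scale). An inductive bookkeeping shows the descent reaches a terminal within total distance $\sum_i 2^i \log_2 M^{(i)}$, where $M^{(i)}$ counts terminal-components after deleting heavy edges. The packing bound is applied only to the $M^{(i)}$, because the terminal-components are provably $2^{i-1}$-separated by the optimality swap. This yields the explicit bound $4k\gamma\log_2(4/\gamma)\cdot D$, which is in fact sharper than $O(\sqrt{\gamma})\cdot D$. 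If you want to salvage a contradiction-style proof, the missing ingredient is precisely this terminal-component separation; without it the packing does not bite.
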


We observe that variants of Lemma~\ref{lemma:informal_near_terminal}
have been considered on the Euclidean plane.
In \cite{du1985steiner, du1987steiner},
it is shown that if the terminal set consists of~$n$ evenly distributed points on a unit circle,
then for large enough~$n$, there is no Steiner points in an optimal Steiner tree.
To see how this relates to our lemma,
when~$n$ is sufficiently large,
it follows that adjacent points in the circle are
very close to each other.  Hence,
any long edge in a Steiner tree could be replaced by some short
edge between adjacent terminals in the circle.
Our lemma then implies that all Steiner points must be near
the terminals, which is a weaker statement
than the conclusion in~\cite{du1985steiner}, but
is enough for our purposes.
We emphasize that the results in \cite{du1985steiner,du1987steiner} rely on the geometric properties
of the Euclidean plane.
However, in our lemma, we only use that the doubling dimension is bounded.


\noindent \textbf{\emph{Implication of Lemma~\ref{lemma:informal_near_terminal}
on Sparsity Heuristic.}}
We next demonstrate an example of how we
use this technical lemma.
In Lemma~\ref{lemma:heuristic_small},
we argue that
our sparsity heuristic provides an upper bound
on the weight of the portion of an optimal solution~$F$
within some ball~$B$.

The idea is that we remove the edges in~$F$ within~$B$
and add back some edges of small total weight to maintain connectivity.
We first add a minimum spanning tree~$H$ on
some net-points~$N$ within~$B$ of an appropriate scale~$\gamma \cdot D$.
Using the property of doubling dimension, we argue that the number
of points in~$H$ is bounded and so is its weight.
In one of our case analysis, there are two sets~$S$ and~$T$ of terminals
that are far apart~$d(S,T) \geq D$, and we wish to argue that
in the optimal Steiner tree~$F$ connecting~$S$ and~$T$,
there is an edge~$\{u,v\}$ of length at least~$\Omega(\gamma) \cdot D$.
If this is the case, we could remove this edge and connect~$u$ and~$v$
to their corresponding net-points directly.
For contradiction's sake, we assume there is no such edge, but 
Lemma~\ref{lemma:informal_near_terminal} implies that every Steiner point
must be close to either~$S$ and~$T$.  Since~$S$ and~$T$ are far apart,
this means that there is a long edge after all.

Conversely, in Lemma~\ref{lemma:heuristic_upper_bound},
we also use this technical lemma to show that
if the sparsity heuristic for some ball~$B$ is large,
then the portion of the optimal solution~$F$ inside~$B$ is also large.

%

\vspace{15pt}

\noindent \textbf{Challenge 2: In doubling metrics,
the number of cells for keeping track of connectivity
in each cluster could be too large.}
Unlike the case for $\TSP$ variants~\cite{DBLP:conf/stoc/BartalGK12, DBLP:conf/soda/ChanJ16},
the solution for~$\SFP$ need not be connected.
Hence, in the dynamic programming algorithm for~$\SFP$,
in addition to keeping track of what \emph{portals} are used
to connect a cluster to points outside,
we need to keep information on which portals
the terminals inside a cluster are connected to.  
In previous works~\cite{DBLP:conf/focs/BorradaileKM08},
the notion of \emph{cells} is used for this purpose.


\noindent \emph{Previous Technique: Cell Property.}
The idea of \emph{cell property} was first introduced in~\cite{DBLP:conf/focs/BorradaileKM08},
which gave a PTAS for $\SFP$ in the Euclidean plane using dynamic programming.
Since there would have been an exponential number of dynamic program entries
if we keep information on which portal is used by every terminal to connect 
to its partner outside the cluster,
the high level idea is to partition a cluster
into smaller clusters (already provided by the hierarchical decomposition) known as \emph{cells}.
Loosely speaking, the \emph{cell property} 
ensures that every terminal inside the same cell
must be connected to points outside the cluster
in the same way.  More precisely,
a solution~$F$ satisfies the cell property
if for every cluster~$C$ and every cell~$e$ inside~$C$,
there is only one component in the portion of~$F$ restricted to~$C$
that connects~$e$ to points outside~$C$.

A great amount of work was actually
needed in~\cite{DBLP:conf/focs/BorradaileKM08}
and subsequent work~\cite{DBLP:journals/algorithmica/BateniH12}
to show that it is enough to consider cells whose diameters
are constant times smaller than that of its cluster.
This allows the number of dynamic program entries to be bounded,
which is necessary for a PTAS.

\noindent \textbf{Difficulty Encountered for Doubling Metrics.}
When the notion of cell is applied to the dynamic program
for $\SFP$ in doubling metrics, an important issue is that
the diameters of cells need to be about~$\Theta(\log n)$ times
smaller than that of its cluster,  because there are around~$\Theta(\log n)$
levels in the hierarchical decomposition.
Hence, the number of cells in a cluster is~$\Omega(\text{poly}\log{n})$,
which would eventually 
lead to a QPTAS only.  A similar situation is observed 
when dynamic programming was first used for $\TSP$ on doubling metrics~\cite{DBLP:conf/stoc/Talwar04}.  However,
the idea of using sparsity as in~\cite{DBLP:conf/stoc/BartalGK12}
does not seem to immediately provide a solution.


\noindent\textbf{Our Solution: Adaptive Cells.}
Since there are around~$\Theta(\log n)$ levels
in the hierarchical decomposition,
it seems very difficult to increase the diameter of cells in a cluster.
Our key observation is that the cells are needed only for covering
the portion of a solution inside a cluster that touches the cluster boundary.
Hence, we use the idea of \emph{adaptive cells}.  Specifically,
for each connected component~$A$ in the solution crossing a cluster~$C$,
we define the corresponding \emph{basic cells} such that if the component~$A$
has larger weight, then its corresponding basic cells (with respect to cluster
$C$) will have larger diameters.
Combining with the notion of sparsity and bounded doubling dimension,
we can show that we only need to pay attention to a small number of cells.


%

\noindent \textbf{Further Cells for Refinement.}
Since the dynamic program entries are defined
in terms of the hierarchical decomposition
and the entries for a cluster are filled recursively with respect
to those of its child clusters,
we would like the cells to have a \emph{refinement} property, i.e.,
if a cluster~$C$ has some cell~$e$ (which itself is some descendant cluster of~$C$),
then the child~$C'$ containing~$e$ has either~$e$ 
or all children of~$e$ as its cells.

At first glance, a quick fix may be to push down each basic cell in~$C$
to its child clusters. Although we could still bound the number of relevant cells,
it would be difficult to bound the cost to achieve the cell property. The reason is that 
the basic cells from higher levels are too large for the descendant clusters.
When more than one relevant component intersects such a large cell,
we need to add edges to connect the components.
However, if the diameter of the cell is too large compared to the cluster,
these extra edges would be too costly.


We resolve this issue by introducing \emph{non-basic} cells for a cluster:
\emph{promoted} cells and \emph{virtual} cells.  These cells are introduced to 
ensure that every sibling of a basic cell is present.  Moreover,
only non-basic cells of a cluster will be passed to its children.
We show in Lemma~\ref{lemma:bound_eff} that
the total number of \emph{effective} cells for a cluster
is not too large. Moreover, Lemma~\ref{lemma:refinement}
shows that the refinement property still holds even if we only pass
the non-basic cells down to the child clusters.
More importantly, we show that as long as we enforce the
cell property for the basic cells, 
the cell property for all cells are automatically ensured.
This means that it is sufficient to bound the cost to achieve the
cell property with respect 
to only the basic cells.


\noindent\textbf{Further Techniques: Global Cell Property.}
We note that the cell property in~\cite{DBLP:conf/focs/BorradaileKM08} is localized.
In particular, for each cluster~$C$, we restrict the solution inside~$C$, which
could have components disconnected within~$C$ but are actually connected globally.
In order to enforce the localized cell property as in~\cite{DBLP:conf/focs/BorradaileKM08},
extra edges would need to be added for these locally disconnected components.
Instead, we enforce a \emph{global} cell property, in which
for every cell~$e$ in a cluster~$C$, there is only one (global) connected
component in the solution that intersects~$e$ and crosses the boundary of cluster~$C$.
A consequence of this is that if there are~$m$ components in the solution,
then at most~$m-1$ extra edges are needed to maintain the global cell property.
This implication is crucially used in our charging argument
to bound the cost for enforcing the cell property for the basic cells.
However, this would imply that in the dynamic program entries, we need
to keep additional information on how the portals of a cluster are connected outside the cluster.

\noindent \textbf{Combining the Ideas: A More Sophisticated Dynamic Program.}
Even though our approaches to tackle the encountered issues are intuitive,
it is a non-trivial task to balance between different tradeoffs
and keep just enough information in the dynamic program entries, but still
ensure that the entries can be filled in polynomial time.

\section{Preliminaries}
\label{sec:prelim}

We consider a metric space $M=(X,d)$ 
(see~\cite{DL97,mat-book} for more details on metric spaces).
For $x \in X$ and $\rho \geq 0$, a \emph{ball} $B(x,\rho)$ is the set $\{y \in X \mid d(x,y) \leq \rho\}$.
The \emph{diameter} $\Diam(Z)$ of a set $Z \subset X$ is the maximum distance between points in $Z$.
For $S, T \subset X$,
we denote $d(S,T) := \min\{d(x,y): x \in S, y \in T\}$,
and for $u \in X$, $d(u,T) := d(\{u\}, T)$.
Given a positive integer $m$, we denote $[m] := \{1,2,\ldots, m\}$.

A set $S \subset X$ is a $\rho$-packing, if any two distinct
points in $S$ are at a distance more than $\rho$ away from each other.
A set $S$ is a $\rho$-cover for $Z \subseteq V$,
if for any $z \in Z$, there exists $x \in S$ such that $d(x,z) \leq \rho$.
A set $S$ is a $\rho$-net for $Z$, if $S$ is a $\rho$-packing
and a $\rho$-cover for $Z$.  We assume that a $\rho$-net for any ball
in $X$ can be constructed efficiently.

We consider metric spaces with \emph{doubling dimension}~\cite{Assouad83,DBLP:conf/focs/GuptaKL03} at most $k$;
this means that
for all $x \in X$, for all $\rho> 0$, every ball $B(x,2\rho)$ can be covered
by the union of at most $2^k$ balls of the form $B(z, \rho)$, where $z
\in X$.  The following fact captures a standard property
of doubling metrics.


\begin{fact}[Packing in Doubling Metrics~\cite{DBLP:conf/focs/GuptaKL03}] \label{fact:net}
Suppose in a metric space with doubling dimension at most $k$, a $\rho$-packing $S$ has diameter at most $R$.
Then, $|S| \leq (\frac{2R}{\rho})^k$.
\end{fact}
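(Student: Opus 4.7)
The plan is to combine the packing condition (which limits how many points of $S$ can lie in any small ball) with repeated application of the doubling property (which limits how many small balls are needed to cover a large one).

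First I would fix an arbitrary $x_0 \in S$ and observe that since $\diam(S) \leq R$, we have $S \subseteq B(x_0, R)$. Next I would exploit the packing condition to argue that any ball of radius $\rho/2$ contains at most one point of $S$: if two distinct points $y,z \in S$ both lay in such a ball, then by the triangle inequality $d(y,z) \leq \rho$, contradicting the strict $\rho$-packing property. Hence it suffices to bound the number of radius-$(\rho/2)$ balls needed to cover $B(x_0, R)$.

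I would then iterate the doubling property. By assumption, any ball can be covered by $2^k$ balls of half its radius. Applying this $j$ times starting from $B(x_0, R)$ yields a cover by at most $2^{jk}$ balls of radius $R/2^j$. Choosing the smallest integer $j$ with $R/2^j \leq \rho/2$, i.e., $j = \lceil \log_2(2R/\rho) \rceil$, produces a cover of $B(x_0, R)$ by at most $(2R/\rho)^k$ balls of radius $\rho/2$ (up to rounding in the exponent, which is harmless in the intended usage since $2R/\rho \geq 1$ in all relevant cases; otherwise $|S| \leq 1$ trivially). Since each such ball contains at most one point of $S$, the conclusion $|S| \leq (2R/\rho)^k$ follows.

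The proof has no real obstacle; the only subtle point is handling the case when $2R/\rho$ is not a power of two, which I would address by noting that the bound is stated asymptotically and that when $\rho \geq 2R$ the packing is trivially a singleton. No further structural tools beyond the definitions and the doubling assumption are needed.
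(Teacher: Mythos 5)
The paper states Fact~\ref{fact:net} without proof, citing it to Gupta, Krauthgamer, and Lee; there is no in-paper argument to compare against. Your proof is the standard one for packing bounds in doubling spaces and it is correct: fix $x_0\in S$ so that $S\subseteq B(x_0,R)$, iterate the halving property of doubling dimension to cover $B(x_0,R)$ by $2^{jk}$ balls of radius $R/2^j$, pick $j$ so that $R/2^j\le\rho/2$, and observe that each such ball can meet $S$ in at most one point since two points in a radius-$(\rho/2)$ ball are within distance $\rho$, violating the ($>\rho$) packing condition. One small remark: with $j=\lceil\log_2(2R/\rho)\rceil$ the literal bound you obtain is $2^{jk}\le (4R/\rho)^k$, a factor of $2^k$ weaker than the stated $(2R/\rho)^k$ unless $2R/\rho$ happens to be a power of two. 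You flag this rounding slack yourself, and it is indeed immaterial here since every use of the fact in the paper is inside $O(\cdot)^{O(k)}$ bounds; just be aware that the constant in the exponent as written in Fact~\ref{fact:net} is not recovered exactly by the iterated-halving argument alone. Your handling of the degenerate case $\rho\ge 2R$ (singleton) is also fine.
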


Given an undirected graph $G = (V,E)$, where $VV \subset X$, $E \subseteq {V \choose 2}$,
and an edge $e = \{x,y\} \in E$ receives weight $d(x,y)$ from the metric space $M$.
The weight $w(G)$ or cost of a graph is the sum of its edge weights.
Let $V(G)$ denote the vertex set of a graph $G$.

\noindent We consider the \textbf{Steiner Forest Problem (SFP).}
Given a collection $\INS = \{(a_i, b_i) \mid i \in [n] \}$ of terminal pairs
in $X$, the goal is to find an undirected graph $F$ (having vertex set in $X$)
with minimum cost such that each pair of terminals are connected in $F$.
The non-terminal vertices in $V(F)$ are called Steiner points.

\noindent \textbf{Rescaling Instance.} Fix constant $\eps > 0$.
Since we consider asymptotic running time to obtain $(1 + \eps)$-approximation,
we consider sufficiently large $n > \frac{1}{\eps}$.
Suppose $R > 0$ is the maximum distance between a pair of terminals.
Then $R$ is a lower bound on the cost of an optimal solution.
Moreover, the optimal solution $F$ has cost
at most $nR$, and hence, we do not need to consider distances larger than $nR$.
Since $F$
contains at most $4n$ vertices, if we consider an $\frac{\eps R}{32n^2}$-net $S$ for $X$ and replace every point in $F$ with its closest net-point in $S$,
the cost increases by at most $\eps \cdot \OPT$.
Hence, after rescaling, we can assume that inter-point distance is at least 1 and
we consider distances up to $O(\frac{n^3}{\eps}) = \poly(n)$.
By the property of doubling dimension (Fact~\ref{fact:net}),
we can hence assume $|X| \leq O(\frac{n}{\eps})^{O(k)} \leq O(n)^{O(k)}$.

\noindent \textbf{Hierarchical Nets.} 
As in~\cite{DBLP:conf/stoc/BartalGK12},
we consider some parameter $s = (\log n)^{\frac{c}{k}} \geq 4$,
where $0<c<1$ is a universal constant that is sufficiently small (as required in Lemma~\ref{lemma:running_time}).
Set $L := O(\log_s n) = O(\frac{k \log n}{\log \log n})$. 
A greedy algorithm can construct $N_L \subseteq N_{L-1} \subseteq \cdots \subseteq N_1 \subseteq N_0 = N_{-1}= \cdots = X$
such that for each $i$, $N_i$ is an $s^i$-net for $X$,
where we say \emph{distance scale} $s^i$ is of \emph{height} $i$.

\noindent \textbf{Net-Respecting Solution.}
As defined in~\cite{DBLP:conf/stoc/BartalGK12},
a graph $F$ is net-respecting
with respect to $\{N_i\}_{i\in[L]}$ and $\enr > 0$ if for
every edge $\{x,y\}$ in $F$, both
$x$ and $y$ belong to $N_i$, where $s^i \leq \enr \cdot d(x,y) < s ^{i+1}$.

Given an instance $\INS$ of a problem,
let $\OPT(\INS)$ be an optimal solution;
when the context is clear, we also use $\OPT(\INS)$ to denote
the cost $w(\OPT(\INS))$
as well; similarly, $\OPT^{nr}(\INS)$ refers to an optimal net-respecting solution.

\subsection{Overview}
\label{sec:sparse_overview}

As in~\cite{DBLP:conf/stoc/BartalGK12,DBLP:conf/soda/ChanJ16}, we achieve a PTAS for \SFP
by the framework of sparse instance decomposition.

\noindent \textbf{Sparse Solution and Dynamic Program.} Given a graph $F$ and 
a subset $S \subseteq X$, $F|_X$ is the subgraph induced
by the vertices in $V(F) \cap X$.
A graph $F$ is called $q$-sparse, if
for all $i \in [L]$ and all $u \in N_i$,
$w(F|_{B(u, 3 s^i)}) \leq q \cdot s^i$.

We show that for \SFP (in Section~\ref{sec:ptas_sparse})
there is a dynamic program \DP that runs in
polynomial time such that
if an instance~$\INS$
has an optimal net-respecting solution that is $q$-sparse
for some small enough $q$,
$\DP(\INS)$ returns a $(1+\eps)$-approximation
with high probability (at least $1 - \frac{1}{\poly(n)}$).

\noindent \textbf{Sparsity Heuristic.} 
Since one does not know the optimal solution in advance,
we estimate the local sparsity with a heuristic.
For $i \in [L]$ and $u \in N_i$, given an instance $\INS$,
the heuristic $\heur^{(i)}_u(\INS)$ is supposed to
estimate the sparsity of an optimal net-respecting solution
in the ball $B':=B(u, O(s^i))$.
We shall see in Section~\ref{section:sparse_sfp} that
the heuristic actually gives a constant approximation
to some appropriately defined sub-instance $\INS'$
in the ball $B'$.  


\noindent \textbf{Generic Algorithm.}
We describe a generic framework that applies to \SFP.
Similar framework is also used in \cite{DBLP:conf/soda/ChanJ16, DBLP:conf/stoc/BartalGK12}
to obtain PTAS's for TSP related problems.
Given an instance $\INS$, we describe the recursive
algorithm  $\ALG(\INS)$ as follows.

\begin{compactitem}
\item[1.] \textbf{Base Case.} If $|\INS|=n$ is smaller than some constant threshold,
solve the problem by brute force, recalling that $|X| \leq O(\frac{n}{\eps})^{O(k)}$.

\item[2.] \textbf{Sparse Instance.} If for all $i \in [L]$,
	for all $u \in N_i$, $\heur^{(i)}_u(\INS)$ is at most $q_0 \cdot s^i$, for some appropriate threshold $q_0$,
call the subroutine
$\DP(\INS)$ to return a solution, and terminate.

\item[3.] \textbf{Identify Critical Instance.} Otherwise, let $i$ be the smallest height such that
there exists $u \in N_i$ with \emph{critical} $\heur^{(i)}_u(\INS) > q_0 \cdot s^i$; in this case,
choose $u \in N_i$ such that $\heur^{(i)}_u(\INS)$ is maximized.

\item[4.] \textbf{Decomposition into Sparse Instances.} Decompose the instance $\INS$ into appropriate
sub-instances $\INS_1$ and $\INS_2$ (possibly using randomness).
Loosely speaking, $\INS_1$ is a sparse enough sub-instance induced in the region around
$u$ at distance scale $s^i$, and $\INS_2$ captures the rest.
We note that $\heur^{(i)}_u(\INS_2) \leq q_0 \cdot s^i$ such that the recursion will terminate.
The union of the solutions
to the sub-instances will be a solution to $\INS$.  Moreover,
the following property holds.

\vspace{-10pt}

\begin{equation}
\expct{\OPT(\INS_1)}
\leq \frac{1}{1 - \eps} \cdot (\OPT^{nr}(\INS) - \expct{\OPT^{nr}(\INS_2)}),
\label{eq:decompose}
\end{equation}

where the expectation is over the randomness of the decomposition.

\item[5.] \textbf{Recursion.} Call the subroutine $F_1 := \DP(\INS_1)$,
and solve $F_2 := \ALG(\INS_2)$ recursively;
return the union $F_1 \cup F_2$.
\end{compactitem}

\noindent \textbf{Analysis of Approximation Ratio.}
We follow the inductive proof as in~\cite{DBLP:conf/stoc/BartalGK12}
to show that with constant probability (where
the randomness comes from $\DP$), $\ALG(\INS)$ returns a tour
with expected length at most $\frac{1+\eps}{1-\eps} \cdot \OPT^{nr}(\INS)$,
where expectation is over the randomness of decomposition into sparse instances
in Step 4.

As we shall see,
in $\ALG(\INS)$, the subroutine $\DP$ is called
at most $\poly(n)$ times (either explicitly in the recursion or 
the heuristic $\heur^{(i)}$).  Hence, with constant probability,
all solutions returned by all instances of $\DP$ have appropriate approximation
guarantees.

Suppose $F_1$ and $F_2$ are solutions returned by $\DP(\INS_1)$ and
$\ALG(\INS_2)$, respectively.
Since we assume that $\INS_1$ is sparse enough
and $\DP$ behaves correctly, $w(F_1) \leq 
(1+\eps) \cdot \OPT(\INS_1)$.
The induction hypothesis states that
$\expct{w(F_2) | \INS_2} \leq \frac{1+\eps}{1-\eps} \cdot \OPT^{nr}(\INS_2)$.

In Step 4, equation~(\ref{eq:decompose}) guarantees that
$\expct{\OPT(\INS_1)} 
\leq \frac{1}{1 - \eps} \cdot (\OPT^{nr}(\INS) - \expct{\OPT^{nr}(\INS_2)})$.
Hence, it follows that $\expct{w(F_1) + w(F_2)} \leq \frac{1+\eps}{1-\eps} \cdot \OPT^{nr}(\INS) = (1 + O(\eps))\cdot \OPT(\INS)$, achieving the desired ratio.

\noindent \textbf{Analysis of Running Time.}
As mentioned above, 
if $\heur^{(i)}_u(\INS)$ is found to be critical,
then in the decomposed sub-instances $\INS_1$ and $\INS_2$,
$\heur^{(i)}_u(\INS_2)$ should be small.
Hence, it follows that there will be at most $|X| \cdot L = \poly(n)$
recursive calls to $\ALG$.
Therefore, as far as obtaining polynomial running times,
it suffices to analyze the running time of the dynamic program $\DP$.
The details are in Section~\ref{section:dp_alg}.

\subsection{Paper Organization}
\label{sec:org}

In order to apply the above framework
to obtain a PTAS for \SFP,
we shall
describe in details the following components.

\begin{compactitem}

\item[1.]  (Section~\ref{section:sparse_sfp}.) Design a heuristic $\heur$
such that for each $i \in [L]$ and $u \in N_i$,
the heuristic $\heur^{(i)}_u(\INS)$
gives an upper bound for $\OPT^{nr}(\INS)|_{B(u,3s^i)}$.

\item[2.] (Section~\ref{section:div_n_con}.) When a critical $\heur^{(i)}_u(\INS)$ is found,
decompose $\INS$ into instances $\INS_1$ and $\INS_2$
such that equation~(\ref{eq:decompose}) holds.

\item[3.] (Section~\ref{sec:ptas_sparse}.) Design a dynamic program $\DP$
that gives $(1+\eps)$-approximation to sparse instances
in polynomial time.
\end{compactitem}

\section{Sparsity Heuristic for \SFP}
\label{section:sparse_sfp}

Suppose a collection $\INS$ of terminal pairs is an instance
of \SFP.
For $i \in [L]$ and $u \in N_i$, recall that we wish to estimate $\OPT^{nr}(\INS)|_{B(u,3s^i)}$
with some heuristic $\heur^{(i)}_u(\INS)$.  We consider
a more general heuristic $\T^{(i,t)}_u$ associated
with the ball $B(u, t s^i)$, for $t \geq 1$.  The following
auxiliary sub-instance deals with terminal pairs that are separated by
the ball.

\noindent \textbf{Auxiliary Sub-Instance.}
Fix $\delta := \Theta(\frac{\epsilon}{k})$,
where the constant depends
on the proof of Lemma~\ref{lemma:combine_cost}.
For $i \in [L]$, $u\in N_i$ and $t \geq 1$, the sub-instance
$\INS^{(i,t)}_u$ is induced by each pair $\{a,b\} \in \INS$ as follows.

\begin{compactitem}
	\item[(a)] If both $a, b \in B(u, t s^i)$, or if exactly one of them is in $B(u, t s^i)$ and the other in $B(u, (t+\delta)s^i)$,
		then $\{a, b\}$ is also included in $\INS^{(i,t)}_u$.
	\item[(b)] Suppose $j$ is the index such that 
	$s^j < \delta s^i \leq s^{j+1}$.
	If  $a \in B(u, t s^i)$ and $b \notin B(u, (t+\delta)s^i)$,
		then $\{a, a'\}$ is included in $\INS^{(i,t)}_u$, where $a'$ is the nearest point to $a$ in $N_j$.
	\item[(c)] If both $a$ and $b$ are not in $B(u, t s^i)$,
	then the pair is excluded.
\end{compactitem}

\noindent\textbf{Defining Heuristic.}
We define $\heur^{(i)}_u(\INS) := \T^{(i,4)}_u(\INS)$
in terms of a more general heuristic,
where $\T^{(i,t)}_u(\INS)$ is the cost of
a constant approximate net-respecting solution of \SFP on the instance
$\INS^{(i,t)}_u$.  For example, 
we can first apply the primal-dual algorithm in~\cite{DBLP:journals/siamcomp/GoemansW95} that gives a $2$-approximation of $\SFP$,
and then make it net-respecting and we have $\T^{(i,t)}_u(\INS) \leq 2 (1 + \Theta(\eps)) \cdot \OPT(\INS^{(i,t)}_u)$.

One potential issue is that $\OPT^{nr}(\INS)$ might
use Steiner points in $B(u, ts^i)$, even if
$\INS^{(i,t)}_u$ is empty.
We shall prove a structural property of Steiner tree in Lemma~\ref{lemma:near_terminal},
and Lemma~\ref{lemma:near_terminal} implies Lemma~\ref{lemma:long_chain}
which helps us to resolve this issue.
Recall that the Steiner tree problem is a special case
of \SFP where the goal is to return a minimum cost tree
that connects all terminals.

\begin{lemma}[Distribution of Steiner Points in The Optimal Steiner Tree]
\label{lemma:near_terminal}
Suppose $S$ is a terminal set with $\Diam(S) \leq D$,
and suppose $F$ is an optimal Steiner tree with terminal set $S$.
If the longest edge in $F$ has weight at most $\gamma D$ ($0<\gamma \leq 1$),
then for any Steiner point $r$ in $F$, $d(r, S) \leq  4 k \gamma \log_2 \frac{4}{\gamma} \cdot D$.
\end{lemma}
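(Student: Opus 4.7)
The plan is to proceed by contradiction: I would assume there is a Steiner point $r \in V(F) \setminus S$ with $d(r,S) > 4 k \gamma \log_2(4/\gamma) D$, and then exhibit a Steiner tree $F'$ on $S$ with $w(F') < w(F)$, contradicting the optimality of $F$. The modification happens inside a large ball around $r$ which, by the contradiction hypothesis, contains no terminal of $S$.

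First I would record two standard structural properties of any optimal Steiner tree, which are the only places optimality is used. (a) Every Steiner vertex has $F$-degree at least $3$: a degree-$1$ Steiner vertex may be deleted outright, and a degree-$2$ Steiner vertex may be bypassed using a single triangle-inequality edge, in either case strictly decreasing $w(F)$. (b) A bottleneck/cycle property: for any two vertices $u, v \in V(F)$ and any edge $e^\star$ on the $u$--$v$ path in $F$, one has $d(e^\star) \leq d(u, v)$, since otherwise swapping $e^\star$ for the direct edge $\{u,v\}$ keeps $F$ a tree but strictly reduces its weight.

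Next I would analyze $F$ across geometric scales $\rho_j := 2^j \gamma D$ for $j = 0, 1, \ldots, J$ with $J := \lfloor \log_2(4/\gamma)\rfloor$. The contradiction hypothesis ensures each ball $B_j := B(r, \rho_j)$ avoids $S$. Let $T_j$ be the component of $F \cap B_j$ containing $r$ and $\mathcal{E}_j$ its set of exit edges (edges of $F$ with exactly one endpoint in $B_j$). Property~(a) forces the tree-counting inequality $|\mathcal{E}_j| \geq |V(T_j)| + 2$, while Fact~\ref{fact:net} together with property~(b) forces the outside endpoints of $\mathcal{E}_j$ to be a $(\gamma D)$-packing in $B(r, \rho_j + \gamma D)$ and hence to number at most $(2^{j+2})^k$. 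At each scale $j$ where this upper bound saturates, a new internal-subtree structure is pinned down whose weight contributes an additive amount to $w(F \cap B_J)$. Summing the contributions across the $J = \log_2(4/\gamma)$ scales would yield $w(F \cap B_J)$ at least the claimed quantity $4k\gamma \log_2(4/\gamma) D$, which I would then undercut by replacing all of $F \cap B_J$ with a direct construction — a $(\gamma D)$-net of a ball around a fixed nearest-terminal $t_0$ (sized by Fact~\ref{fact:net}) together with a local MST — of strictly smaller weight.

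I expect the main obstacle to be reconciling the exponential-in-$k$ packing upper bound $(2^{j+2})^k$ from Fact~\ref{fact:net} with the linear-in-$k$ target bound: the stated form $4k\gamma \log_2(4/\gamma) D$ insists that the contribution per scale be additive $\Theta(k\gamma D)$ rather than multiplicative $2^k$. I expect this is resolved by applying the doubling property iteratively within each scale, halving the radius $k$ times and charging an additive $\gamma D$ per halving (reflecting the $2^k$-cover as an $O(k)$-bit quantity), so that savings telescope into the stated form. Matching the constant $4$ exactly, and verifying that the alternative $F'$ actually connects all terminals of $S$ (not only those reachable through the replaced region), are the remaining fiddly but expected-to-be-routine calculations.
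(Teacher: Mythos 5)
Your proposal has genuine gaps at the two places that matter most, and they are not routine to fill.

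\textbf{First}, the contradiction hypothesis $d(r,S) > 4k\gamma\log_2(4/\gamma)\,D$ does not give you that $B_j = B(r, 2^j\gamma D)$ avoids $S$ for $j$ up to $J = \lfloor \log_2(4/\gamma)\rfloor$: at $j = J$ the radius $\rho_J$ is on the order of $2D$--$4D$, whereas the hypothesis only guarantees a terminal-free ball of radius $\approx 4k\gamma\log_2(4/\gamma)\,D$, which for small $\gamma$ is far smaller than $\rho_J$. So most of your scale range is already invalid. \textbf{Second}, the claim that the outside endpoints of the exit edges $\mathcal{E}_j$ form a $(\gamma D)$-packing does not follow from your property (b): the bottleneck argument would require a single edge on the $y_1$--$y_2$ tree path of weight exceeding $d(y_1,y_2)$, but that path passes through $T_j$ and may consist entirely of short edges, so two exit endpoints can be arbitrarily close. \textbf{Third}, and most centrally, the step ``at each scale where the bound saturates a new internal-subtree contributes an additive $\Theta(k\gamma D)$'' is exactly the place where the exponential-in-$k$ packing bound $(2^{j+2})^k$ must be converted into a linear-in-$k$ weight contribution, and you offer no mechanism for this beyond the suggestive phrase ``reflecting the $2^k$-cover as an $O(k)$-bit quantity.'' You are right that this is the crux; but without a precise argument it is not a routine calculation, and I do not see how your ball/exit-edge accounting supplies it.

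For contrast, the paper's proof is \emph{not} by contradiction and does not reason about balls around $r$. It roots the tree at $r$ and runs a structural induction down to a leaf using the potential $\Phi(u) := \sum_{i} 2^i \log_2 M^{(i)}_u$, where $M^{(i)}_u$ counts the terminal-containing components of $F_u$ after deleting all edges of weight $> 2^{i-1}$. At each internal node one descends into the child with the smaller $M^{(i)}$-value; the edge cost $\leq 2^i$ is paid for because $M^{(i)}$ at least halves, dropping $\log_2 M^{(i)}$ by $1$. The logarithm is precisely what turns the packing bound $M^{(i)} \leq (4D/2^i)^k$ into the linear factor $k$. This is the single key idea your sketch is missing, and it cannot be recovered by iterating the doubling property within a scale. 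I would encourage you to try reproving the lemma by building a cheap root-to-leaf path directly, with a scale-indexed potential tracking branching, rather than by contradiction.
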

\begin{proof}
Since $F$ is an optimal solution, all Steiner points in $F$ have degree at least $3$.
Fix any Steiner point $r$ in $F$.

Denote $K := \ceil{\log_{2}(\gamma D)}$.  Suppose we consider $r$ as the root of the tree $F$.
We shall show
that there is a path of small weight from $r$ to some terminal.
Without loss of generality, we can assume that all terminals are leaves,
because once we reach a terminal, there is no need to visit its descendants.
For simplicity, we can assume that each internal node (Steiner point)
has exactly two children, because we can ignore extra branches if an internal
has more than two children.

For $i \leq K$, let $E_i$ be the set of edges in $F$ 
that have weights in the range $(2^{i-1}, 2^i]$,
and we say that such an edge is of \emph{type $i$}.
For each node $u$ in $F$,
denote $F_u$ as the subtree rooted at $u$.
Suppose we consider $F_u$ and remove all edges
in $\cup_{j \geq i} E_j$
from $F_u$; in the resulting forest,
let $M^{(i)}_u$ be the number of connected components
that contain at least one terminal.
We shall prove the following statement by structural induction
on the tree $\widehat{F}$.

\emph{For each node $u \in F$,
there exists a leaf $x \in F_u$
such that $d(x,u) \leq \sum_{i \leq K} 2^i \log_2 M^{(i)}_u$.}

\noindent \textbf{Base Case.} If $u$ is a leaf,
then the statement is true.

\noindent \textbf{Inductive Step.}  Suppose $u$ has children $u_1$ and $u_2$
such that $\{u, u_1\} \in E_i$ and $\{u, u_2\} \in E_{i'}$,
where $i \geq i'$.
Suppose $x_1$ and $x_2$ are the leaves in $F_{u_1}$ and $F_{u_2}$,
respectively, from the induction hypothesis.  
Observe that $M^{(i)}_u = M^{(i)}_{u_1} + M^{(i)}_{u_2}$.
We consider two cases.

\noindent (1) Suppose $M^{(i)}_{u_1} \leq M^{(i)}_{u_2}$.
Then, we can pick $x_1$ to be the desired leaf,
because the extra distance $d(u_1, u) \leq 2^i$
can be accounted for, as
$2 M^{(i)}_{u_1} \leq M^{(i)}_u$,
and $M^{(j)}_{u_1} \leq M^{(j)}_{u}$ for $j \neq i$.
More precisely,
$d(x_1, u) \leq d(x_1, u_1) + d(u_1, u)
\leq 2^i \cdot (1 + \log_2 M^{(i)}_{u_1})
+ \sum_{j \leq K: j \neq i} 2^j \log_2 M^{(j)}_{u_1}
\leq \sum_{j \leq K} 2^j \log_2 M^{(j)}_{u}$,
where the second inequality follows from the induction
hypothesis for $u_1$.

\noindent (2) Suppose $M^{(i)}_{u_2} < M^{(i)}_{u_1}$.
Then, similarly we pick $x_2$ to be the desired leaf,
because the extra distance is $d(u_2, u) \leq 2^{i'} \leq 2^i$.
This completes the inductive step.

Next, it suffices to give an upper bound for
each $M^{(i)} := M^{(i)}_r$ for root $r$.
Suppose after removing all tree edges in $\cup_{j \geq i} E_j$,
$P$ and $Q$ are two clusters each containing at least one terminal.
Then, observe that the path in $F$ connecting $P$ and $Q$
must contain an edge $e$ with weight at least $2^{i-1}$.
It follows that $d(P,Q) \geq 2^{i-1}$;
otherwise, we can replace $e$ in $F$
with another edge of length less than $2^{i-1}$ to obtain a Steiner tree with strictly less weight.
It follows that each cluster has a terminal representative
that form a $2^{i-1}$-packing.
Hence, we have $M^{(i)} \leq (\frac{4D}{2^i})^k$,
by the packing property of doubling metrics (Fact~\ref{fact:net}).

Therefore, every Steiner point $r$ in $\widehat{F}$
has a terminal within distance
$\sum_{i \leq K} k \cdot 2^i \log_2 \frac{4D}{2^i} \leq 4 k \gamma D \log_2 \frac{4}{\gamma}$.
\end{proof}

Given a graph $F$, a \emph{chain} in $F$
is specified by a sequence of points
$(p_1, p_2,\ldots, p_l)$ such that
there is an edge $\{p_i, p_{i+1}\}$ in $F$ between adjacent points,
and the degree of an internal point $p_i$ (where $2 \leq i \leq l-1$)
in $F$ is exactly 2.


\begin{lemma}[Steiner Tree of Well-Separated Terminals Contains A Long Chain]
	\label{lemma:long_chain}
Suppose $S$ and~$T$ are terminal sets in a metric space
with doubling dimension at most $k$ such that $\Diam(S\cup T) \leq D$, and
$d(S, T) \geq \tau D$, where $0<\tau<1$.
Suppose $F$ is an optimal net-respecting Steiner tree 
connecting the points in $S \cup T$.
Then, there is a chain in $F$ with weight at least $\frac{\tau^2}{4096 k^2} \cdot D$ such that
any internal point in the chain is a Steiner point.
\end{lemma}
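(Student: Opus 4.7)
My plan is to argue by contradiction, using Lemma~\ref{lemma:near_terminal} to localize Steiner vertices of $F$ to small neighborhoods of terminals, and then exploiting the separation $d(S,T) \geq \tau D$ to force a single long edge (which is itself a chain of length one with no internal vertices).

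Let $\alpha := \tau^2/(4096 k^2)$ and suppose toward a contradiction that every chain in $F$ has weight strictly less than $\alpha D$. Since any single edge of $F$ constitutes a chain of length two for which the degree-two condition on internal vertices is vacuous, this supposition forces every edge of $F$ to have weight below $\alpha D$. In particular the longest edge of $F$ is at most $\alpha D$, so applying Lemma~\ref{lemma:near_terminal} to $F$ with terminal set $S \cup T$, diameter bound $D$, and $\gamma := \alpha$, every Steiner vertex of $F$ lies within $\beta D$ of $S \cup T$, where $\beta := 4k\alpha\log_2(4/\alpha)$.

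Next, I would fix any $a \in S$ and $b \in T$ (so $d(a,b) \geq \tau D$) and consider the unique path $P$ in $F$ from $a$ to $b$. I partition $V(P)$ into $A := \{v \in V(P) : d(v,S) \leq \beta D\}$ and $B := \{v \in V(P) : d(v,T) \leq \beta D\}$: terminals on $P$ fall into $A$ or $B$ according to whether they lie in $S$ or $T$, and every Steiner vertex on $P$ lies in $A \cup B$ by the previous step. A routine arithmetic check (using $\tau \in (0,1]$ and integer $k \geq 1$) confirms that $\alpha(1 + 8k\log_2(4/\alpha)) \leq \tau$, i.e.\ $\alpha + 2\beta \leq \tau$; in particular $2\beta < \tau$, so a common vertex of $A$ and $B$ would yield $s \in S$ and $t \in T$ with $d(s,t) \leq 2\beta D < \tau D$, contradicting the separation hypothesis. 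Hence $A$ and $B$ are disjoint, and walking along $P$ from $a \in A$ to $b \in B$ must traverse an edge $\{v,v'\}$ of $F$ with $v \in A$ and $v' \in B$. Picking $s \in S$ and $t \in T$ witnessing the closeness to $v$ and $v'$, the triangle inequality gives $\tau D \leq d(s,t) \leq d(s,v) + d(v,v') + d(v',t) \leq d(v,v') + 2\beta D$, whence $d(v,v') \geq (\tau - 2\beta)D \geq \alpha D$. Viewed as a chain of length one, this single edge has weight at least $\alpha D$, contradicting the standing assumption and completing the proof.

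The main technical obstacle I anticipate is that Lemma~\ref{lemma:near_terminal} is stated for plain (non-net-respecting) optimal Steiner trees, whereas here $F$ is optimal only among net-respecting Steiner trees. I plan to handle this either by verifying that the short-edge swap underlying the proof of Lemma~\ref{lemma:near_terminal} can be implemented between nearby net-points (preserving the net-respecting property at the cost of a $(1+O(\enr))$ factor that is easily absorbed into the $4096$ in the denominator of $\alpha$), or by first applying Lemma~\ref{lemma:near_terminal} to the tree obtained from $F$ by contracting each maximal chain of degree-two Steiner vertices (so that every ``edge'' of the contracted tree has weight below $\alpha D$, and each degree-two Steiner vertex on a chain of $F$ is within a further $\alpha D$ of a contracted-tree vertex, which is similarly absorbed into the constants).
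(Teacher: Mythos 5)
Your proposal follows essentially the same route as the paper: assume for contradiction that every chain is short, invoke (the argument of) Lemma~\ref{lemma:near_terminal} to localize Steiner points near $S \cup T$, and use the separation $d(S,T) \geq \tau D$ to force a long edge, which is itself a chain. Your explicit walk along an $a$--$b$ path with the $A/B$ partition is a cleaner way to write out the paper's terse ``there must be an edge of length $> \gamma D$''; both are sound, and the arithmetic you flag does check out.

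The real gap is in your ``technical obstacle'' paragraph: you present (a) and (b) as alternatives, when in fact both ingredients are needed, and you do not name the actual obstruction to applying Lemma~\ref{lemma:near_terminal} to $F$ directly. Option (a) alone cannot work, because the very first step of the proof of Lemma~\ref{lemma:near_terminal} uses optimality to conclude that every Steiner point has degree at least~$3$. A net-respecting-optimal tree can have degree-$2$ Steiner points (routing a long edge through intermediate net points creates them), and the inductive bound $d(x,u)\le\sum_i 2^i\log_2 M^{(i)}_u$ does not absorb such nodes: passing through a degree-$2$ Steiner point $u$ with a single child $u_1$ costs an extra $2^i$ while leaving $M^{(i)}_u = M^{(i)}_{u_1}$ unchanged, so the induction breaks. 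Option (b) --- contracting maximal degree-$2$ Steiner chains into the minor $\widehat F$ --- is precisely the paper's move and restores the degree-$\ge 3$ structure (with the additive $\le\alpha D$ slack you note). But $\widehat F$ is still not an optimal Steiner tree, so inside the Lemma~\ref{lemma:near_terminal} argument run on $\widehat F$ you also need the swap adaptation from (a), to replace $d(P,Q)\ge 2^{i-1}$ by $d(P,Q)\ge w(e)/(1+\Theta(\epsilon)) \ge w(e)/2$ in the packing step. In short, the right plan is ``do (b), and do (a) inside it''; this is exactly what the paper does, which is why it speaks of a factor-$2$ loss and ends up with $\log_2(8/\gamma)$ in place of $\log_2(4/\gamma)$.
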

\begin{proof}
Denote $\gamma := \frac{\tau^2}{4096 k^2}$.
Suppose for contradiction's sake that all chains in $F$ have weight less than $\gamma D$. 
We consider a minor $\widehat{F}$ that is obtained from $F$
by merging Steiner points of degree 2 with adjacent points.
Hence, the vertex set of $\widehat{F}$ are the terminals together
with Steiner points in $F$ with degree at least 3.
Moreover, an edge in $\widehat{F}$ corresponds to a chain
in $F$, and its weight is defined to be the weight
of the corresponding chain.

Then by using the argument in Lemma~\ref{lemma:near_terminal},
We can prove that every point $u$ in $\widehat{F}$ is within distance at most
$4k\gamma\log_{2}{\frac{8}{\gamma}} \cdot D$ to a terminal.
Precisely, we shall replace the $F$ in the argument of Lemma~\ref{lemma:near_terminal} with $\widehat{F}$.
We observe that the only difference caused by this replacement is when we use the optimality of the solutions.
Specifically, in Lemma~\ref{lemma:near_terminal}
we use the fact that when an edge $e$ connects point
sets $P$ and $Q$ that both contain at least one terminal (i.e. removing $e$ results in the dis-connectivity of $P$ and $Q$),
it has to be $d(P, Q) \geq w(e)$, while
the corresponding fact for $\widehat{F}$ is $d(P, Q) \geq \frac{w(e)}{1+\Theta(\epsilon)} \geq \frac{w(e)}{2}$
because of the net-respecting property.

\noindent \textbf{Obtaining Contradiction.}
Recall that the terminal sets $S$ and $T$ are well-separated $d(S,T) \geq \tau D$.
Since all Steiner points in $\widehat{F}$ are at distance at most $4k\gamma\log_{2}{\frac{8}{\gamma}} \cdot D$
from the terminals,
it follows that there must be an edge in $\widehat{F}$ with length
at least $\tau D - 8k\gamma\log_{2}{\frac{8}{\gamma}}\cdot D > \tau D - 32k\sqrt{\gamma} D >
\gamma D$.
\end{proof}

\begin{lemma}
\label{lemma:heuristic_small}
Suppose $F$ is an optimal net-respecting solution for an \SFP instance $\INS$.
Then, for any $i$ and $u\in N_i$ and $t \geq 1$, $w(F|_{B(u, t s^i)}) \leq  \T^{(i,t+1)}_u(\INS) + O(\frac{s k t}{\epsilon})^{O(k)} s^i$.
\end{lemma}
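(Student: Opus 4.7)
The plan is to bound $w(F|_B)$ with $B := B(u, t s^i)$ by exhibiting a feasible Steiner forest $F'$ for $\INS$ whose weight we can control ingredient by ingredient, then invoking optimality of $F$: $w(F) \leq w(F')$ rearranges to $w(F|_B) \leq w(F') - w(F \setminus F|_B)$, which is exactly the quantity we need.

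First, I would fix a parameter $\gamma = \Theta(\epsilon/k)$ with $\gamma \leq \delta/2$ and let $N$ be a $\gamma s^i$-net of the enlarged ball $B^+ := B(u,(t+1+\delta)s^i)$. Fact~\ref{fact:net} gives $|N| \leq O(t/\gamma)^k = (kt/\epsilon)^{O(k)}$, and the minimum spanning tree $H$ on $N$ has weight at most $|N| \cdot \diam(B^+) = O(skt/\epsilon)^{O(k)} \cdot s^i$. In parallel, let $\widetilde F$ be the net-respecting constant-factor approximate Steiner forest for $\INS^{(i,t+1)}_u$ that realizes $w(\widetilde F) \leq \T^{(i,t+1)}_u(\INS)$.

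I would then define $F' := (F \setminus F|_B) \cup \widetilde F \cup H \cup E_c$, where the connector set $E_c$ contains, for every \emph{interface vertex} $v \in V(F) \cap B$ that is an endpoint in $B$ of some edge $(v,v')$ of $F$ with $v' \notin B$, a single edge from $v$ to its nearest point in $N$ (cost at most $\gamma s^i$). Feasibility is verified pair by pair: pairs with both terminals in $B(u,(t+1)s^i)$ are routed by $\widetilde F$ via case (a) of the definition of $\INS^{(i,t+1)}_u$; pairs with one terminal $a$ in $B$ and partner $b$ outside $B(u,(t+1+\delta)s^i)$ are linked through $\widetilde F$ to the auxiliary net point $a'$ of case (b), whose proximity to $a$ (together with $\gamma \leq \delta$) ensures it can reach $H$ by a single short hop, after which $H$ carries the connection to the relevant exit vertex $v$ and then along $F \setminus F|_B$ to $b$; pairs with both terminals outside $B^+$ whose $F$-path happens to transit $B$ are rerouted identically through $H$.

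The main obstacle is controlling $|E_c|$, since a priori $F$ could have very many interface vertices. To bound this I would bucket interface vertices $v$ by the length $\ell$ of their exit edge $(v,v')$: the net-respecting property pins $v$ to a net $N_j$ with $s^j = \Theta(\epsilon \ell)$, while the geometric constraint forces $v$ into a shell of thickness $\Theta(\ell)$ around $\partial B$; Fact~\ref{fact:net} bounds the count at each scale by $O(t/\epsilon)^{O(k)}$, and summing the resulting geometric series in $\ell$ yields $|E_c| \leq O(kt/\epsilon)^{O(k)}$, hence $w(E_c) \leq O(skt/\epsilon)^{O(k)} s^i$. In the delicate sub-case where $\INS^{(i,t+1)}_u$ is small yet $F|_B$ contains many Steiner points, I would fall back on Lemma~\ref{lemma:long_chain}: any tree of $F$ threading $B$ has its terminals well-separated outside $B^+$ and must therefore contain a chain of degree-two Steiner points long enough to pay for the shortcut through $H$, so the contribution is already absorbed. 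Summing $w(\widetilde F)$, $w(H)$ and $w(E_c)$ then gives $w(F|_B) \leq \T^{(i,t+1)}_u(\INS) + O(skt/\epsilon)^{O(k)} s^i$, as required.
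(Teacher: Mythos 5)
Your high-level plan — remove $F|_B$, add the heuristic solution, add an MST $H$ on a net, and then reconnect interface vertices to $H$ — is the same skeleton the paper uses. The problem is your mechanism for paying for the reconnection.

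The step that fails is the claimed bound on $|E_c|$. You bucket interface vertices by exit-edge length $\ell$, note that the net-respecting property forces such a vertex into $N_j$ with $s^j = \Theta(\epsilon\ell)$, and then appeal to a ``shell of thickness $\Theta(\ell)$ around $\partial B$'' to get a per-scale count of $O(t/\epsilon)^{O(k)}$. But doubling metrics have no annulus packing bound better than the packing bound for the full ball: the only estimate Fact~\ref{fact:net} gives for $N_j \cap B(u,ts^i)$ is $O(\tfrac{ts^i}{s^j})^k = O(\tfrac{ts^i}{\epsilon\ell})^k$, which for short exit edges (say $\ell = O(1)$) is $\operatorname{poly}(s^i)$, not $O(t/\epsilon)^{O(k)}$. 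Even in Euclidean space the annulus volume $\sim (ts^i)^{k-1}\ell$ gives an $\epsilon\ell$-packing count of $\sim (ts^i)^{k-1}/(\epsilon^k\ell^{k-1})$, which still blows up as $\ell\to 0$. There is no ``geometric series in $\ell$'' that converges here, and therefore no bound $|E_c| \leq O(kt/\epsilon)^{O(k)}$. Since each connector costs $\gamma s^i$, an unbounded $|E_c|$ breaks the whole estimate.

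The paper avoids counting interface vertices entirely. Its Step 4 is a per-component case analysis at the boundary of the enlarged ball $\widehat B$: for each component $U$ touching both $\widehat{V}_1$ and $\widehat{V}_2$, either (i) some endpoint is incident to an edge of length $\geq s^i/2$ that leaves $\widehat B$, in which case the net-respecting property already places that endpoint in $N_j$ so the connection is free; or (ii) the sets $S_1 = U \cap \widehat{V}_1$ and $S_2 = U \cap \widehat{V}_2$ are $s^i/2$-separated, so Lemma~\ref{lemma:long_chain} guarantees a chain of Steiner points inside the (rebuilt, net-respecting) component of weight $\Omega(\tfrac{1}{k^2(t+1)})s^i$, and \emph{that chain's weight} is removed and recycled to pay for the two connections to $N_j$. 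In other words, Lemma~\ref{lemma:long_chain} is not a fallback for ``delicate sub-cases'' — it is the charging mechanism that makes the reconnection cost self-funded. Your write-up gestures at this idea but does not integrate it as the actual accounting scheme, and the counting argument you lean on instead does not go through.
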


\begin{proof}
Given an optimal net-respecting solution $F$,
we shall construct another net-respecting solution
in the following steps.

\begin{compactitem}
\item[1.] Remove edges in $F|_{B(u, ts^i)}$.

\item[2.] Add edges corresponding to the heuristic
$\T^{(i,t+1)}_u(\INS)$.

\item[3.] Add edges in a minimum spanning tree $H$
of $N_j \cap B(u, (t+2)s^i)$,
where $s^j \leq \Theta(\frac{\eps}{(t+1)k^2}) \cdot
s^i < s^{j+1}$, where
the constant in Theta depends on Lemma~\ref{lemma:long_chain}; convert each added
edge into a net-respecting path if necessary.
Observe that the weight of edges
added in this step is $O(\frac{s t k}{\eps})^{O(k)} \cdot s^i$.

\item[4.] To ensure feasibility, replace some edges without increasing the weight.
\end{compactitem}

If we can show that the resulting solution is feasible 
for $\INS$, then the optimality of
$F$ implies the result.
We denote $B := B(u, ts^i)$ and $\widehat{B} := B(u, (t+1)s^i)$.

\noindent \textbf{Feasibility.}  

\noindent Define $\widehat{V}_1 := \{ x : x\in B \mid \exists \{x, y\} \in F \text{ s.t. } y\notin B 
\text{ and } y \text{ is connected in } F|_{X\setminus B}  \text{ to some point outside } \widehat{B}\}$,
and 

\noindent $\widehat{V}_2 := \{ x : x\in \widehat{B} \setminus B \mid x \text{ is connected in } F|_{\widehat{B}} \text{ to some point in } \widehat{V}_1,  \exists \{x, y\} \in F \text{ s.t. } y\notin \widehat{B}\}$.
In Step 4, we will ensure that all points in $\widehat{V}_1 \cup \widehat{V}_2$ are connected to the MST $H$.

If a pair $\{a,b\} \in \INS$ has
both terminals in $\widehat{B}$, then they will be connected
by the edges corresponding to $\T^{(i,t+1)}_u(\INS)$.  
If $a \in \widehat{B}$ and $b \notin \widehat{B}$,
then edges for the heuristic $\T^{(i,t+1)}_u(\INS)$ ensures
that $a$ is connected to $H$; moreover,
in the original tree $F$, 
if the path from $a$ to $b$ does not meet any node in 
$\widehat{V}_2$, then this path is preserved,
otherwise there is a portion of the path
from a point in $\widehat{V}_2$
to $b$ that is still preserved.  If both $a$ and $b$ are outside $\widehat{B}$,
then they might be connected in $F$ via points in $\widehat{V}_2$; however,
since all points in $\widehat{V}_2$ are connected to $H$,
feasibility is ensured.

We next elaborate how Step 4 is performed.
Consider a connected component $U$ in $F|_{\widehat{V}_1 \cup (\widehat{B}\setminus B)}$
that contains a point in $\widehat{V}_1$.  Let $S_1 := U \cap \widehat{V}_1$
and $S_2 := U \cap \widehat{V}_2$.  If $S_2 = \emptyset$,
then there is an edge connecting $S_1$ directly to a point outside $\widehat{B}$.
This means that both its end-points are in $N_j$ by the net-respecting property, and hence
$S_1$ is already connected to $H$.

Next, if there is a point $z \notin \widehat{B}$ connected directly
to some point $y \in S_2$ such that $d(y,z) \geq \frac{s^i}{2}$,
then by the net-respecting property, $y \in N_j$ and so again $U$ is connected to $H$.
Otherwise, we have $d(S_1, S_2) \geq \frac{s^i}{2}$.  We next replace $U$
with an optimal net-respecting Steiner tree $\widehat{U}$ connecting $S_1 \cup S_2$.  Since $U$ itself is net-respecting, this does not increase the cost.

Observing that $\Diam(S_1 \cup S_2) \leq 2(t+1)s^i$,
we can use Lemma~\ref{lemma:long_chain} to conclude that
there exists a chain in $\widehat{U}$ from some point $u$ to $v$
such that its length is at least $\Theta(\frac{1}{k^2 (t+1)}) \cdot s^i$.
Hence, we can remove this chain, and use its weight to add a net-respecting path
from each of $u$ and $v$ to its nearest point in $N_j$.  This does not increase the cost, and ensures that both $S_1$ and $S_2$ are connected to $H$.

Therefore, we have shown that Step 4 ensures that all points in $\widehat{V_1}$ and 
$\widehat{V_2}$ are connected to $H$.
\end{proof}

It is because of Lemma~\ref{lemma:heuristic_small} that we choose  
$\heur^{(i)}_u(\INS) := \T^{(i,4)}_u(\INS)$ to be the heuristic.

\begin{corollary}[Threshold for Critical Instance]
\label{cor:threshold}
	Suppose $F$ is an optimal net-respecting solution for an \SFP instance $\INS$,
	and $q \geq \Theta(\frac{s k}{\epsilon})^{\Theta(k)}$.
	If for all $i\in [L]$ and $u\in N_i$, $\heur^{(i)}_u(\INS) \leq q s^i$, then $F$ is $2q$-sparse.
\end{corollary}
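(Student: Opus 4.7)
The plan is to derive the corollary as a direct consequence of Lemma~\ref{lemma:heuristic_small} by matching the parameter $t$ to the definition $\heur^{(i)}_u(\INS) := \T^{(i,4)}_u(\INS)$. Sparsity of $F$ requires bounding $w(F|_{B(u,3s^i)})$ for every $i \in [L]$ and $u \in N_i$, so I fix such an $i$ and $u$, and apply Lemma~\ref{lemma:heuristic_small} with the choice $t = 3$. This gives
\[
w(F|_{B(u,3s^i)}) \leq \T^{(i,4)}_u(\INS) + O\!\left(\tfrac{sk}{\epsilon}\right)^{O(k)} s^i = \heur^{(i)}_u(\INS) + O\!\left(\tfrac{sk}{\epsilon}\right)^{O(k)} s^i,
\]
where we have absorbed the constant $t = 3$ into the hidden constants in the exponent.

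Next, I would invoke the hypothesis $\heur^{(i)}_u(\INS) \leq q s^i$ to bound the first term, and the hypothesis $q \geq \Theta(\tfrac{sk}{\epsilon})^{\Theta(k)}$ (with the constants in $\Theta$ chosen to dominate those produced by Lemma~\ref{lemma:heuristic_small} at $t=3$) to bound the additive term by $q s^i$. Adding the two contributions yields $w(F|_{B(u,3s^i)}) \leq 2 q s^i$, which is exactly the definition of $2q$-sparsity. Since $i \in [L]$ and $u \in N_i$ were arbitrary, the conclusion holds globally for $F$.

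There is no real obstacle here; the corollary is essentially a parameter-matching exercise. The only point requiring any care is to verify that the constants in the lower bound $q \geq \Theta(\tfrac{sk}{\epsilon})^{\Theta(k)}$ are chosen large enough to swallow the additive overhead from Lemma~\ref{lemma:heuristic_small} at $t = 3$; this is a purely bookkeeping step since both expressions have the same functional form $O(\tfrac{sk}{\epsilon})^{O(k)}$.
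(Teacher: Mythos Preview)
Your proposal is correct and matches the paper's intended derivation: the corollary is stated immediately after Lemma~\ref{lemma:heuristic_small} with the remark that the choice $\heur^{(i)}_u(\INS) := \T^{(i,4)}_u(\INS)$ is made precisely because of that lemma, and your application of Lemma~\ref{lemma:heuristic_small} with $t=3$ together with the threshold condition on $q$ is exactly the implicit argument.
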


\begin{lemma}
\label{lemma:heuristic_upper_bound}
	Suppose $\INS$ is an \SFP instance. Consider $i\in [L]$, $u\in N_i$, and $t\geq t'\geq 1$. Suppose $F$ is a
	net-respecting solution for $\INS^{(i,t)}_u$.
	Then, $\T^{(i,t')}_u(\INS) \leq 4 (1 + \eps) \cdot w(F) + O(\frac{s k t'}{\epsilon})^{O(k)} s^i$.
\end{lemma}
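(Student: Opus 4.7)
I construct a feasible net-respecting solution $F'$ for the auxiliary sub-instance $\INS^{(i,t')}_u$ of weight at most $2\,w(F) + O(\frac{skt'}{\epsilon})^{O(k)}\, s^i$, and then invoke the $2(1+\Theta(\epsilon))$-approximation guarantee defining $\T^{(i,t')}_u$. This gives $\T^{(i,t')}_u(\INS) \leq 2(1+\Theta(\epsilon))\,\OPT(\INS^{(i,t')}_u) \leq 2(1+\Theta(\epsilon))\,w(F')$, from which the claimed bound follows (the factor $4$ arises as $2 \times 2$, and $1+\Theta(\epsilon)$ is absorbed into $1+\epsilon$).

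Let $j$ be the index with $s^j < \delta s^i \leq s^{j+1}$, as in the definition of the sub-instance. The construction of $F'$ has three parts: (i) include all edges of $F$, contributing weight $w(F)$; (ii) add a net-respecting minimum spanning tree $H$ over $N_j \cap B(u,(t'+2\delta)s^i)$, which by Fact~\ref{fact:net} contains at most $O(\frac{skt'}{\epsilon})^{O(k)}$ points and hence satisfies $w(H) \leq O(\frac{skt'}{\epsilon})^{O(k)}\, s^i$; and (iii) for each pair of $\INS^{(i,t')}_u$ not already connected by $F \cup H$, append a short correction edge. Feasibility is then checked by classifying each contributing pair $\{a,b\}$ by its form in the two instances. (a) Pairs with both terminals in $B(u,(t'+\delta)s^i)$ are also case (a) in $\INS^{(i,t)}_u$ (since $t\geq t'$) and are already connected by $F$. (b2) Pairs with $b \notin B(u,(t+\delta)s^i)$ are case (b) in $\INS^{(i,t)}_u$ with proxy $\tilde a$; since both sub-instances use the same $j$, we have $\tilde a = a'$, so $F$ again connects them. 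The only troublesome case (b1) is $b \in B(u,(t+\delta)s^i)\setminus B(u,(t'+\delta)s^i)$, where $F$ connects $a$ to $b$ rather than to $a' \in H$; for such pairs, I append a net-respecting edge of length at most $s^j(1+\Theta(\epsilon))$ from $a$ to $a'$.

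The main obstacle is bounding the total weight of the case~(b1) corrections by $w(F) + O(\frac{skt'}{\epsilon})^{O(k)}\, s^i$, so that $w(F') \leq 2\,w(F) + O(\frac{skt'}{\epsilon})^{O(k)}\, s^i$. The key observation is that every case~(b1) pair satisfies $d(a,b) \geq \delta s^i > s^j$, so the $F$-path from $a$ to $b$ must cross $\partial B(u,(t'+\delta)s^i)$. I plan to charge each correction of weight $\Theta(s^j)$ to the corresponding boundary-crossing edge of this $F$-path: crossings of length at least $s^j$ are charged directly against $w(F)$, while shorter crossings accumulate in a thin annulus around $\partial B(u,(t'+\delta)s^i)$ whose total count is bounded by $O(\frac{skt'}{\epsilon})^{O(k)}$ via the net-respecting property and Fact~\ref{fact:net}. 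The delicate point is to prevent charges from overlapping across troublesome pairs, for which I intend to exploit the forest structure of $F$, so that distinct pairs define distinct tree-paths and each boundary-crossing edge of $F$ receives only a bounded number of charges.
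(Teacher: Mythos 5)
Your construction (keep $F$, add an MST $H$ on net-points, then add short correction edges for the troublesome case-(b1) pairs) and your case analysis are sound and essentially match the paper's skeleton, but the charging argument used to bound the total correction cost has a genuine gap. You charge each per-pair correction of weight $\Theta(s^j)$ to the boundary-crossing edge of the $F$-path from $a$ to $b$, and propose to control overlapping charges by appealing to ``distinct tree-paths.'' This does not hold: distinct terminal pairs within the same tree readily have tree-paths that share the crossing edge. Concretely, take a single component of $F$ with one crossing edge $e$, and $n$ terminals $a_1,\dots,a_n$ inside $B(u,t's^i)$ all paired with one $b$ in the annulus $B(u,(t+\delta)s^i)\setminus B(u,(t'+\delta)s^i)$; all $n$ tree-paths pass through $e$, so $e$ would absorb a charge of $n\cdot\Theta(s^j)$, which can vastly exceed both $w(e)$ and $w(F)$. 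The backup packing bound on short crossings also fails: net-respecting crossings of length $\ell < s^j$ occur at arbitrarily fine scales, and the number of such edges meeting the annulus is $\poly(n)$, not $O(\frac{skt'}{\epsilon})^{O(k)}$.

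The repair is not edge-charging but deduplication. Once a single correction $\{a,a'\}$ has been added, every terminal in the same $F$-component as $a$ is already attached to $H$ (and hence to its own proxy, which also lies in $H$), so you need at most one correction per $F$-component. Each component $U$ requiring a correction contains a troublesome pair with $d(a,b)\geq\delta s^i$ and hence $w(U)\geq\delta s^i$, so there are at most $w(F)/(\delta s^i)$ corrections, each of net-respecting cost at most $(1+O(\epsilon))s^j < (1+O(\epsilon))\delta s^i$, for a total of at most $(1+O(\epsilon))w(F)$. That closes the bound $w(F')\leq(2+O(\epsilon))w(F)+O(\frac{skt'}{\epsilon})^{O(k)}s^i$ and the lemma follows. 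Interestingly this is lighter than the paper's own argument, which replaces each relevant component by an optimal net-respecting Steiner tree on its separated terminal sets $S_1,S_2$, invokes Lemma~\ref{lemma:long_chain} (and hence Lemma~\ref{lemma:near_terminal}) to locate a chain of weight $\Omega(\frac{\delta^2}{t'k^2})s^i$, removes it, and spends the freed weight to attach the chain's two endpoints to a finer net; the per-component counting above sidesteps the long-chain machinery entirely, at the modest cost of a $(1+O(\epsilon))$ factor that the lemma's $4(1+\eps)$ slack already absorbs.
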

\begin{proof}
	We first show that there is a feasible solution for  $\INS^{(i,t')}_u$ with weight
	at most $2 \cdot w(F) + O(\frac{s k t'}{\epsilon})^{O(k)} s^i$.
	Then, the heuristic $\T^{(i,t')}_u(\INS)$ gives the weight of
	a net-respecting solution with cost at most
	$4 (1 + \eps) \cdot w(F) + O(\frac{s k t'}{\epsilon})^{O(k)} s^i$.
	
	We first include $F$ in the solution.  It suffices
	to handle the terminal pairs in  $\INS^{(i,t')}_u \setminus \INS^{(i,t)}_u$.
	Such a pair $\{a, a'\}$ must be induced from $\{a,b\} \in \INS^{(i,t)}_u$
	such that $a \in B(u, t' s^i)$ and $b \in B(u, (t + \delta)s^i) \setminus
	B(u, (t' + \delta)s^i)$.  We next add more edges such that $a$ is connected to $a'$, which lies in
  $N_j$, where $s^j \leq \Theta(\frac{\delta^2}{t' k^2}) \cdot s^i < s^{j+1}$.  
	
	We add a minimum spanning tree $H$ on the points in $N_j \cap B(u, (t' + \delta) s^i)$.
	This has cost at most $O(\frac{s k t'}{\epsilon})^{O(k)} \cdot s^i$.
	
	Consider a connected component $U$ of $F$.  Consider the
	terminal pairs $\{a,b\} \in \INS^{(i,t)}_u$ connected by $U$
	such that $a \in B(u, t' s^i)$ and $b \in B(u, (t + \delta)s^i)$;
	let $S_1$ be those terminals $a$'s, and $S_2$ be those terminal $b$'s.
	Suppose $\widehat{U}$ is an optimal net-respecting Steiner tree
	connecting $S_1 \cup S_2$.  Since $U$ is also net-respecting,
	it follows that the weight of $\widehat{U}$ is at most that of $U$.
	
	Since $d(S_1, S_2) \geq \delta s^i$ and $\Diam(S_1 \cup S_2) \leq \Theta(t') s^i$,
	it follows from Lemma~\ref{lemma:long_chain} that
	there exists a chain from $p$ to $q$ in $\widehat{U}$ with
	weight at least $\Theta(\frac{\delta^2}{t' k^2}) \cdot s^i$.
	Hence, we can remove this chain, and use this weight to connect $p$ and $q$
	to each of their closest points in $N_j$.
	This ensures that each point $a \in S_1$ is connected to its closest point in $N_j$ via the minimum spanning tree $H$.
		
	If we perform this operation on each connected component $U$ of $F$,
	the weight of edges added is at most $w(F)$.  Hence,
	we have shown that there is a feasible solution
	to $\INS^{(i,t')}_u$ with cost 
	at most $2 \cdot w(F) + O(\frac{s k t'}{\epsilon})^{O(k)} s^i$,
	as required.
\end{proof}

\section{Decomposition into Sparse Instances}
\label{section:div_n_con}

In Section~\ref{section:sparse_sfp},
we define a heuristic $\heur^{(i)}_u(\INS)$
to detect a critical instance around some point $u \in N_i$
at distance scale $s^i$.  We next describe
how the instance $\INS$ can be decomposed into
$\INS_1$ and $\INS_2$ such that equation~(\ref{eq:decompose})
in Section~\ref{sec:sparse_overview}
is satisfied.

Since the ball centered at $u$
with radius around $s^i$ could potentially
separate terminal pairs in $\INS$,
we use the idea in Section~\ref{section:sparse_sfp}
for defining the heuristic to decompose the
instance.

\noindent \textbf{Decomposing a Critical Instance.}
We define a threshold $q_0 := \Theta(\frac{s k}{\epsilon})^{\Theta(k)}$ according to 
Corollary~\ref{cor:threshold}.  As stated
in Section~\ref{sec:sparse_overview},
a critical instance is detected by the heuristic
when a smallest $i \in [L]$ is found
for which there exists some $u \in N_i$
such that $\heur^{(i)}_u(\INS) = \T^{(i,4)}_u(\INS) > q_0 s^i$.
Moreover, in this case, $u \in N_i$ is chosen
to maximize $\heur^{(i)}_u(\INS)$.
To achieve a running time with an
$\exp(O(1)^{k \log(k)})$ dependence
on the doubling dimension $k$,
we also apply the technique in~\cite{DBLP:conf/soda/ChanJ16} to
choose the cutting radius carefully.

\begin{claim}[Choosing Radius of Cutting Ball]
\label{claim:cut_ball}
Denote $\T(\lambda) := \T^{(i, 4 + 2\lambda)}_u(\INS)$.
Then, there exists $0 \leq \lambda < k$ such that
$\T(\lambda+1) \leq 30 k \cdot \T(\lambda)$.
\end{claim}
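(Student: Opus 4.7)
The plan is a pigeonhole argument by contradiction. Suppose that $\T(\lambda+1) > 30k \cdot \T(\lambda)$ for every $\lambda \in \{0, 1, \ldots, k-1\}$; telescoping then yields $\T(k) > (30k)^k \cdot \T(0)$. I will derive a contradiction by independently establishing an upper bound $\T(k) \leq (30k)^k \cdot \T(0)$ through an explicit construction of a feasible net-respecting solution for $\INS^{(i,4+2k)}_u$ whose cost, after invoking the constant-factor approximation guarantee behind $\T$, upper bounds $\T(k)$.

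The construction has two ingredients. First, for every net-point $v \in N_i \cap B(u, O(ks^i))$, include a constant-approximate net-respecting solution for $\INS^{(i,4)}_v$. Because $u$ was chosen in Step~3 of $\ALG$ to maximize $\heur^{(i)}_u(\INS) = \T^{(i,4)}_u(\INS)$ over all of $N_i$, each such sub-solution has cost at most $\T(0)$, and by Fact~\ref{fact:net} the number of such $v$ is at most $(O(k))^k$. Second, add a net-respecting minimum spanning tree $H$ on $N_j \cap B(u, O(ks^i))$, where $j$ is the index (as used in the definition of the heuristic) with $s^j < \delta s^i \leq s^{j+1}$. Fact~\ref{fact:net} bounds the number of these net-points by $O(k/\delta)^k = O(k^2/\eps)^k$, each edge of $H$ has length $O(ks^i)$, so after being made net-respecting $w(H) \leq O(sk^3/\eps)^{O(k)} \cdot s^i$.

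Feasibility is then verified case by case, mirroring the proof of Lemma~\ref{lemma:heuristic_small}. Every terminal of a type-(a) or type-(b) pair in $\INS^{(i,4+2k)}_u$ lies within $s^i$ of some covering net-point $v \in N_i$, hence inside $B(v, 4s^i)$, so within $\INS^{(i,4)}_v$ it appears either in a type-(a)/(b) pair (which the sub-solution spans directly) or in a type-(c) pair linking it to its nearest $N_j$-net-point (which sits in $H$). Type-(c) pairs of $\INS^{(i,4+2k)}_u$ reduce to the same situation because their second endpoint lies in $N_j \subseteq H$. Whenever stitching a sub-solution to $H$ requires trading a local subtree for a short connection, Lemma~\ref{lemma:long_chain} supplies the necessary long chain whose weight can be reinvested, exactly as in Step~4 of Lemma~\ref{lemma:heuristic_small}. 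Combining all ingredients gives
$$\T(k) \;\leq\; O(1) \cdot \Bigl[(O(k))^k \cdot \T(0) + O(sk^3/\eps)^{O(k)} \cdot s^i\Bigr].$$
Since the instance is critical at $u$ we have $\T(0) > q_0 s^i$ with $q_0 = \Theta(sk/\eps)^{\Theta(k)}$ chosen (as in Corollary~\ref{cor:threshold}) large enough to absorb the additive term, and a careful accounting of the constant hidden inside $(O(k))^k$ yields $\T(k) \leq (30k)^k \cdot \T(0)$, contradicting the pigeonhole conclusion.

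The main obstacle I anticipate is the constant accounting: the doubling-dimension covering count from Fact~\ref{fact:net}, multiplied by the approximation factor of $\T$, must be pushed below $30^k$. For small $k$ this may require covering $B(u, (4+2k)s^i)$ by slightly larger sub-balls $B(v, t s^i)$ for some $t \in (1, 2)$, trading a smaller covering constant against a mildly larger per-sub-solution cost that is still bounded by $\T(0)$ via the maximality of $u$. A secondary subtlety is verifying that the type-(c) replacement net-points, which may differ across sub-instances centered at different $v$, are all truly connected through the single glue tree $H$; this is the step that essentially fixes the scale $j$ of $H$ to match the scale used inside each $\INS^{(i,4)}_v$.
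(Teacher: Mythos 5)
Your argument is essentially the paper's: assume the contrary to get $\T(k) > (30k)^k \cdot \T(0)$, then construct a cheap feasible solution for $\INS^{(i,4+2k)}_u$ by taking the minimum spanning tree on $N_j' := N_j \cap B(u, O(k)s^i)$ together with the union of the edge sets realizing $\heur^{(i)}_v$ over $v \in N_i' := N_i \cap B(u, O(k)s^i)$, bound its weight by $q_0 s^i + (O(k))^k \cdot \T(0)$ using the maximality of $u$ and Fact~\ref{fact:net}, absorb the additive term via $\T(0) > q_0 s^i$, and multiply by the $2(1+\Theta(\eps))$ approximation factor of $\T$. This matches the paper's proof step for step.

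A few points of caution. First, the telescoped target is $(30k)^k$, not $30^k$; the extra $k^k$ factor is exactly what eats the $(O(k))^k$ packing count, so the constant accounting is much less delicate than your closing paragraph suggests and no tuning of covering radii is needed. Second, the paper's proof of this claim does \emph{not} invoke Lemma~\ref{lemma:long_chain} or any chain-trading step: feasibility of the union is argued directly by tracing, for each pair of $\INS^{(i,4+2k)}_u$, which $\heur^{(i)}_v$ sub-instance carries it, noting that the $N_j$-point endpoints of type-(b) pairs all lie on the spanning tree $H$. If you insist on a chain-trading step you would have to verify the separation hypothesis of Lemma~\ref{lemma:long_chain} at the scale $\delta s^i$, which is not obviously satisfied (the chain length it yields, $\Theta(\delta^2/k^2)\, s^i$, need not pay for re-routing at scale $\delta s^i$); it is cleaner to drop it. Finally, your labelling of sub-instance pair types is shifted: type-(c) pairs are \emph{excluded} from the sub-instance by definition, and the pairs $\{a, a'\}$ with $a' \in N_j$ that you describe are the type-(b) pairs.
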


\begin{proof}
Suppose the contrary is true.
Then, it follows that $\T(k) > (30k)^k \cdot \T(0)$.
We shall obtain a contradiction by showing that 
there is a solution for the instance $\INS^{(i,4+2k)}_u$ corresponding to $\T(k) = \T^{(i,4+2k)}_u(\INS)$ with small weight.

Define $N_{i}'$ to be the set of points in $N_i$ that cover $B(u, (2k+5)s^i)$, and similarly define $N_{j}'$, where $s^j \leq \delta \cdot s^i \leq s^{j+1}$.

Define edge set $F$ to be the union of a minimum spanning tree on $N_j'$
together with the union of the edge sets $\heur^{(i)}_v$ over $v \in N_i'$.
It follows that $F$ is a feasible solution for the instance
$\INS^{(i,4+2k)}_u$.
By the choice of $u$ and $q_0$,
we have $w(F) \leq  |N_j'| \cdot 2(2k+5) \cdot s^i + |N_i'| \cdot \T(0)
\leq q_0 s^i + (4k+10)^k \cdot \T(0) \leq (15k)^k \cdot \T(0)$.

Hence, we have an upper bound
for the heuristic $\T(k) \leq 2(1 + \Theta(\eps)) \cdot w(F) \leq (30)^k \cdot \T(0)$,
which gives us the desired contradiction.
\end{proof}

\noindent \textbf{Cutting Ball and Sub-Instances.}
Suppose $\lambda \geq 0$ is picked as in Claim~\ref{claim:cut_ball},
and sample $h \in [0, \frac{1}{2}]$ uniformly at random.
Recall that $\delta := \Theta(\frac{\eps}{k})$.
Define $B := B(u, (4 + 2 \lambda + h)s^i)$ and
$\widehat{B} := B(u, (4 + 2 \lambda + h + \delta)s^i)$.
The instances $\INS_1$ and $\INS_2$ are induced by each
pair $\{a,b\} \in \INS$ as follows.

\begin{compactitem}
\item[(a)]  If $a \in B$ and $b \in \widehat{B}$, then include $\{a, b\}$ in $\INS_1$.

\item[(b)] If $a \in B$ and $b \notin \widehat{B}$,
then include $\{a, a'\}$ in $\INS_1$ and $\{a',b\}$ in $\INS_2$,
where $a'$ is the closest point in $N_j$ to $a$ and $s^j \leq \delta \cdot s^i < s^{j+1}$.

\item[(c)] If both $a$ and $b$ are not in $B$, then
include $\{a, b\}$ in $\INS_2$.
\end{compactitem}

\begin{lemma}[Sub-Instances Are Sparse]
\label{lemma:subinstance}
The sub-instances $\INS_1$ and $\INS_2$ satisfy the following.
\begin{compactitem}
\item[(i)] If $F_1$ is feasible for $\INS_1$ and $F_2$ is feasible for $\INS_2$,
then the union $F_1 \cup F_2$ is feasible for $\INS$.
\item[(ii)] The sub-instance $\INS_2$ does not have a critical instance
with height less than $i$, and $\heur^{(i)}_u(\INS_2) = 0$.
\item[(iii)] $\heur^{(i)}_u(\INS_1) \leq O(s)^{O(k)} \cdot q_0 \cdot s^i$.
\end{compactitem}
\end{lemma}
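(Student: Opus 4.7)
My plan is to handle the three parts of the lemma in sequence, relying on a rule-by-rule case analysis for (i) and the first claim of (ii), and combining a ``sub-collection'' observation with Lemma~\ref{lemma:heuristic_upper_bound} for the remaining parts.

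For part (i), feasibility is checked rule-by-rule: in rule (a) the pair $\{a,b\} \in \INS_1$ is connected by $F_1$; in rule (c) it belongs to $\INS_2$ and is connected by $F_2$; in rule (b) the two fragments $\{a, a'\} \in \INS_1$ and $\{a', b\} \in \INS_2$ meet at the proxy $a'$, yielding an $a$-to-$b$ path in $F_1 \cup F_2$. For the first claim of (ii), that $\heur^{(i)}_u(\INS_2) = 0$, I will show $\INS_2^{(i,4)}_u$ contains only trivial self-pairs: since the cutting radius is at least $4 s^i$ we have $B \supseteq B(u, 4s^i)$, so rule (c) pairs of $\INS_2$ have both endpoints outside $B(u, 4s^i)$ and are excluded; for a rule (b) pair $\{a', b\} \in \INS_2$, we have $a' \in N_j$ and $b \notin B(u, (4+\delta)s^i)$, so either $a' \notin B(u, 4s^i)$ (excluded) or the sub-instance proxy rule produces $\{a', a''\}$ with $a''$ the $N_j$-closest point to $a'$, and since $a' \in N_j$ already, $a'' = a'$.

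For the remaining claim of (ii), I fix $i' < i$ and $v \in N_{i'}$. By the minimality of $i$, $\heur^{(i')}_v(\INS) \leq q_0 s^{i'}$. Starting from a heuristic solution for $\INS^{(i',4)}_v$ of this cost, I plan to build a net-respecting feasible solution for $\INS_2^{(i',4)}_v$: rule (c) pairs of $\INS_2$ that appear in $\INS_2^{(i',4)}_v$ are already present (verbatim or via proxy) in $\INS^{(i',4)}_v$; rule (b) proxy pairs $\{a', b\} \in \INS_2$ (with $a'$ within $\delta s^i$ of the original $a$) appear in $\INS_2^{(i',4)}_v$ after possibly another proxy substitution, and a short net-respecting splice through $a'$ of length $O(\delta s^i)$ suffices to reroute. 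By Fact~\ref{fact:net}, the number of distinct proxies in the relevant region is $O(k/\eps)^{O(k)}$, yielding total additive cost $O(sk/\eps)^{O(k)} s^{i'}$. Invoking Lemma~\ref{lemma:heuristic_upper_bound} and absorbing this additive term into a suitably chosen $q_0$ (via Corollary~\ref{cor:threshold}) gives $\heur^{(i')}_v(\INS_2) \leq q_0 s^{i'}$.

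For (iii), I first show that $\INS_1^{(i,4)}_u$ is a sub-collection of $\INS^{(i,4)}_u$ in a pair-by-pair sense: rule (a) pairs of $\INS_1$ that survive in $\INS_1^{(i,4)}_u$ are already present in $\INS^{(i,4)}_u$; rule (b) pairs $\{a, a'\}$ of $\INS_1$ coincide with the proxy pair that $\INS^{(i,4)}_u$ itself produces for the same original $\{a, b\}$ (both use the same $N_j$-closest point to $a$); and rule (c) pairs do not enter $\INS_1$. Hence $\OPT(\INS_1^{(i,4)}_u) \leq \OPT(\INS^{(i,4)}_u)$, and the constant-factor approximation used by the heuristic gives $\heur^{(i)}_u(\INS_1) \leq O(1) \cdot \heur^{(i)}_u(\INS)$. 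To bound $\heur^{(i)}_u(\INS)$, I invoke the minimality of $i$ at the next finer height: every $w \in N_{i-1}$ has $\heur^{(i-1)}_w(\INS) \leq q_0 s^{i-1}$, so covering $B(u, (4+\delta)s^i)$ by $O(s)^k$ net points of $N_{i-1}$ and taking the union of their heuristic solutions, plus an MST on the $N_j$-net of a slightly enlarged ball to reconcile the coarser proxy scale, produces a net-respecting feasible solution for $\INS^{(i,4)}_u$ of weight at most $O(s)^k \cdot q_0 s^{i-1} + O(sk/\eps)^{O(k)} s^i = O(s)^{O(k)} q_0 s^i$, giving the desired bound. The main obstacle I anticipate is the bookkeeping in (ii) and (iii) when tracking pairs across two rounds of proxy substitution at different net scales: each original pair can be modified once by the $\INS \to \{\INS_1, \INS_2\}$ split and again by the sub-instance construction, and verifying that the two proxies either coincide or can be bridged by a short net-respecting path so that feasible solutions transfer with only the claimed additive error requires a careful case split on the relative sizes of $s^i$ and $s^{i'}$ and on which endpoint of each pair lies in the relevant ball.
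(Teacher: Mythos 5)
Your treatment of part (i) and the first claim of (ii) matches the paper, which declares both ``immediate from construction''; the case analysis you give (cutting radius at least $4s^i$, and rule-(b) proxies contributing only trivial self-pairs $\{a',a'\}$ because $a' \in N_j$ is already its own nearest $N_j$-point) is a sound elaboration. The second claim of (ii), however, is not handled correctly. You propose a ``short net-respecting splice through $a'$ of length $O(\delta s^i)$,'' but for $i' < i$ we have $\delta s^i = \delta s^{\,i-i'}\cdot s^{i'}$, which is not short at scale $s^{i'}$; the rescue is the same net-nesting observation you used for the first claim (since $i' < i$ forces $j' < j$, hence $N_j \subseteq N_{j'}$, so a proxy $a'$ near $v$ again yields a trivial self-pair in $(\INS_2)^{(i',4)}_v$, and only the case where the far partner $b$ lands near $v$ needs repair, by an MST on $N_{j'}$ at scale $\delta s^{i'}$, not a splice at scale $\delta s^i$). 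Even with that fix, your chain of inequalities only gives $\heur^{(i')}_v(\INS_2) \leq O(1)\,q_0 s^{i'}$, and ``absorbing into a suitably chosen $q_0$'' doesn't close the gap because the same $q_0$ governs both the threshold you start from (``no critical instance in $\INS$ below height $i$'') and the bound you must land under for $\INS_2$, so a real constant factor survives. The paper is silent on this point, so you have surfaced something the paper glosses over rather than resolved it.

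For part (iii), the sub-collection claim $\INS_1^{(i,4)}_u \subseteq \INS^{(i,4)}_u$ is not quite true: a rule-(b) pair $\{a,a'\}$ of $\INS_1$ with $a \in B \setminus B(u,4s^i)$ and $a' \in B(u,4s^i)$ contributes a pair to $\INS_1^{(i,4)}_u$, whereas the original $\{a,b\}$ (with $b \notin \widehat{B}$) is excluded from $\INS^{(i,4)}_u$ since neither endpoint lies in $B(u,4s^i)$. The detour through $\heur^{(i)}_u(\INS)$ is also not self-contained, since $\heur^{(i)}_u(\INS) > q_0 s^i$ is exactly what makes the instance critical, so all the real work happens in your second step --- and that step is the paper's construction: an MST at the finer proxy scale (the paper uses $N_j$ with $s^j \leq \delta s^{i-1}$, note the $s^{i-1}$) together with the union of height-$(i-1)$ heuristic solutions over the $N_{i-1}$ points covering the ball, each of cost at most $q_0 s^{i-1}$ by minimality of $i$. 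The paper aims this construction directly at $\INS_1$ and avoids the sub-collection bridge; you reach the same $O(s)^{O(k)}\,q_0\,s^i$ bound but through a step with a hole in it. (Also, the covering ball should have radius $O(k)s^i$, not $5s^i$, since $\lambda$ can be as large as $k-1$; the $O(s)^{O(k)}$ bound absorbs this, but it is worth being careful about.)
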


\begin{proof}
The first two statements follow immediately from the construction.  For the third statement,
we use the fact that there is no critical instance at height $i-1$
to show that there is a solution to $\INS_1$ with small cost.

Specifically, we consider a minimum spanning tree $H$
on $N_j \cap B(u, 5 s^i)$, where $s^j \leq \delta \cdot s^{i-1} < s^{j+1}$.
Then, we have $w(H) \leq q_0 \cdot s^i$.

Moreover, we consider the union of solutions corresponding
to $\heur^{(i-1)}_v(\INS)$, over $v \in N_{i-1} \cap B(u, 5s^i)$.
The cost is $O(s)^{O(k)} \cdot q_0 \cdot s^i$.

Hence, the union of $H$ together with the edges for the $\heur^{(i-1)}_v(\INS)$'s
is feasible for $\INS_1$, and this implies that
$\heur^{(i)}_u(\INS_1) \leq O(s)^{O(k)} \cdot q_0 \cdot s^i$.
\end{proof}





\begin{lemma}[Combining Costs of Sub-Instances]
\label{lemma:combine_cost}
	Suppose $F$ is an optimal net-respecting solution for $\INS$.
	Then, for any realization of the decomposed sub-instances $\INS_1$ and $\INS_2$
	as described above, there exist net-respecting solutions $F_1$ and $F_2$
	for $\INS_1$ and $\INS_2$, respectively, such that
	$(1-\epsilon) \cdot \expct{w(F_1)} + \expct{w(F_2)} \leq w(F)$, where the expectation
	is over the randomness to generate $\INS_1$ and $\INS_2$.
\end{lemma}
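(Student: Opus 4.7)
The plan is to cut the optimal net-respecting solution $F$ along the random boundary $\partial B$ and its $\delta s^i$-inflation $\partial \widehat{B}$, assign each edge of $F$ to $F_1$, $F_2$, or both based on which side its endpoints sit on, and then patch each side with a small deterministic MST so that feasibility of the respective sub-instance is restored. The randomness of $h \in [0, 1/2]$ limits the weight that has to be double-counted, while the long-chain lemma lets us pay for the patches, both following the template of Lemma~\ref{lemma:heuristic_small}.

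Concretely, I would first set $F_1^\circ := \{e \in F : e \text{ has at least one endpoint in } \widehat{B}\}$ and $F_2^\circ := \{e \in F : e \text{ has at least one endpoint outside } B\}$; these cover $F$, with overlap exactly the edges having one endpoint in $\widehat{B}$ and one outside $B$. For every case-(b) pair $\{a,b\} \in \INS$ I add the short edge $\{a,a'\}$ of length at most $s^j \leq \delta s^i$ into $F_1$, and into $F_2$ I add the same edge together with the portion of $F$ reaching $b$ via the path's exit from $B$. For case-(a) pairs of $\INS_1$ whose path $P_{a,b}$ in $F$ wanders out of $\widehat{B}$ and back, and symmetrically for $\INS_2$ pairs crossing into $B$, I add minimum spanning trees $H_1, H_2$ on an $s^j$-net in thin annuli around $\partial \widehat{B}$ and $\partial B$ respectively, with $s^j = \Theta(\delta) s^i$; by Fact~\ref{fact:net} each MST costs $O(sk/\eps)^{O(k)} \cdot s^i$. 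Whenever a stranded component in $F_1^\circ$ (or $F_2^\circ$) fails to touch the MST, I invoke Lemma~\ref{lemma:long_chain} on its net-respecting sub-tree to find a chain of length $\Omega(s^i/k^2)$, delete it, and spend its weight on two net-respecting paths attaching the chain's endpoints to their nearest net-points in $H_1$ (or $H_2$).

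For the cost analysis I fix an edge $e = \{x,y\} \in F$, write $d_x := d(x,u) \leq d(y,u) =: d_y$, and observe that $e$ lies in $F_1^\circ \cap F_2^\circ$ iff $r_B := (4+2\lambda+h)s^i \in [d_x - \delta s^i,\; d_y)$, an interval of length at most $\ell_e + \delta s^i$. Since $r_B$ is uniform on an interval of length $s^i/2$,
\[
\Pr[e \in F_1^\circ \cap F_2^\circ] \;\leq\; \min\!\left(1,\; \tfrac{2\ell_e}{s^i} + 2\delta\right),
\]
and summing $\ell_e$ times this probability over edges of $F$ with an endpoint in $B(u, O(s^i))$ gives $\E[w(F_1^\circ \cap F_2^\circ)] \leq O(\delta) \cdot W$, where $W := w(F|_{B(u, O(s^i))})$; the long-edge tail is absorbed into $W$ by the net-respecting property. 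Combining with the deterministic patch costs yields $\E[w(F_1)] + \E[w(F_2)] \leq w(F) + O(\delta) W + O(sk/\eps)^{O(k)} s^i$. Because $u$ is critical at height $i$ we have $\heur^{(i)}_u(\INS) > q_0 s^i$, and Lemma~\ref{lemma:heuristic_upper_bound} forces $W = \Omega(q_0 s^i)$; since $F_1 \supseteq F_1^\circ$ captures this weight in expectation, $\E[w(F_1)] \geq \Omega(W)$. Taking $\delta = \Theta(\eps/k)$ sufficiently small (with $q_0$ chosen large enough to dominate the MST constant) makes the total overhead at most $\eps \cdot \E[w(F_1)]$, which rearranges to $(1-\eps)\E[w(F_1)] + \E[w(F_2)] \leq w(F)$.

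The main obstacle is the feasibility step: a single path $P_{a,b} \subset F$ may cross $\partial \widehat{B}$ (or $\partial B$) many times, so naively stitching its pieces together threatens a patch cost proportional to the number of crossings. The long-chain lemma (Lemma~\ref{lemma:long_chain}) is the crucial tool, since it guarantees that every net-respecting sub-tree connecting well-separated terminal sets contains a chain long enough to fund an MST-based reattachment for free, exactly the mechanism already exploited in Lemma~\ref{lemma:heuristic_small}.
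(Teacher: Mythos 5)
Your construction and cost accounting follow the paper's template closely (the same cut along $B$ and $\widehat{B}$, the same MST-plus-Lemma~\ref{lemma:long_chain} patching, the same probabilistic bound $O(\delta)\cdot d(x,y)$ per edge). However, there is a genuine gap in the step where you charge the overhead to $\E[w(F_1)]$.

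You write that $\E[w(F_1)] \geq \Omega(W)$ because ``$F_1 \supseteq F_1^\circ$ captures this weight in expectation.'' This does not follow. $F_1^\circ$ consists of edges with an endpoint inside $\widehat{B}$, whose radius is at most $(4+2\lambda + 1/2 + \delta)s^i$, while $W := w(F|_{B(u,O(s^i))})$ must include all edges that can conceivably be double-counted, i.e.\ edges with an endpoint in $\overline{B} := B(u,(4+2\lambda+1)s^i)$. There is an annulus $\overline{B}\setminus\widehat{B}$ of width roughly $s^i/2$ that $F_1^\circ$ simply never touches, and an adversarial instance can put almost all of the local weight there. Then $W$ is large but $w(F_1^\circ)$ (and hence $\E[w(F_1)]$) is small, and the inequality $O(\delta)W \leq \eps \E[w(F_1)]$ fails.

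This is precisely what Claim~\ref{claim:cut_ball} exists for, and you never invoke its content. The paper first chooses $\lambda$ so that $\T^{(i,4+2(\lambda+1))}_u(\INS) \leq 30k\cdot \T^{(i,4+2\lambda)}_u(\INS)$, and then chains Lemma~\ref{lemma:heuristic_small} and Lemma~\ref{lemma:heuristic_upper_bound} to get $w(F|_{\overline{B}}) \leq O(k)\cdot w(F_1)$ per realization. Your Claim~4.1-free argument cannot reproduce this because the relation between weight at radius $\approx(4+2\lambda)s^i$ and weight at radius $\approx(4+2\lambda+2)s^i$ is not controlled in general; only the pigeonhole over $\lambda \in \{0,\dots,k-1\}$ guarantees a good gap. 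You should replace the assertion $\E[w(F_1)]\geq \Omega(W)$ with the paper's chain: bound $w(F|_{\overline{B}})$ by $\T^{(i,4+2(\lambda+1))}_u$ via Lemma~\ref{lemma:heuristic_small}, pull back to $\T^{(i,4+2\lambda+1)}_u$ via Claim~\ref{claim:cut_ball}, then bound that by $O(1)\cdot w(F_1)$ via Lemma~\ref{lemma:heuristic_upper_bound}, and absorb the resulting factor $O(k)$ into the choice $\delta = \Theta(\eps/k)$. The part of your argument using criticality and Lemma~\ref{lemma:heuristic_upper_bound} to get $w(F_1) = \Omega(q_0 s^i)$ (so the MST terms are dominated) is fine and matches the paper.
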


\begin{proof}
Let $B$ and $\widehat{B}$ be defined as above,
and denote $\overline{B} := B(u, (4 + 2 \lambda + 1) \cdot s^i)$.
Hence, $B \subset \widehat{B} \subset \overline{B}$.

We start by including $F|_{B}$ in $T_1$,
and including the remaining edges in $F$ in $F_2$. We will then show how to add extra edges with expected weight
at most $\epsilon \cdot \expct{w(F_1)}$  to make $F_1$ and $F_2$ feasible.
This will imply the lemma.

Define $N$ to be the subset of $N_j$ that cover the points in $\overline{B}$,
where $s^j < \delta s^i \leq s^{j+1}$.
We include a copy of a minimum spanning tree $H$ of $N$ in each of $F_1$ and $F_2$, and make it net-respecting.
This costs at most $|N|\cdot O(k) \cdot s^i \leq O(\frac{k s}{\epsilon})^{O(k)} \cdot s^i$.

We next include the edges of $F$ in the annulus
$\widehat{B} \setminus B$ (of width $\delta$) into $F_1$.
This has expected cost at most $\delta \cdot w(F|_{\overline{B}})$.

\noindent\textbf{Connecting Crossing Points.}
To ensure the feasibility of $F_1$, we
connect the following sets of points to $N$.
We denote:

$V_1 := \{x \in B \mid \exists y \in \widehat{B} \setminus B, 
\{x,y\} \in F\}$,
$V_2 := \{y \in \widehat{B} \setminus B \mid
\exists x \in B, \{x,y\} \in F\}$, and

$V_3 := \{x \in \widehat{B} \mid 
\exists y \notin \widehat{B}, 
\{x,y\} \in F\}$.

We shall connect each point in $V_1 \cup V_2 \cup V_3$
to its closest point in $N$.  Note that if
such a point $x$ is incident to some edge in $F$
with weight at least $\frac{s^i}{4}$,
then the net-respecting property of $F$ implies
that $x$ is already in $N$.  Otherwise,
this is because some edge $\{x,y\}$ in $F$
is cut by either $B$ or $\widehat{B}$, which
happens with probability at most $O(\frac{d(x,y)}{s^i})$.
Hence, each edge $\{x,y\} \in F|_{\overline{B}}$
has an expected contribution of $\delta s^i \cdot O(\frac{d(x,y)}{s^i}) = O(\delta) \cdot d(x,y)$.

Similarly, to ensure the feasibility of $F_2$,
we ensure each point in the following set is
connected to $N$.  Denote
$\widehat{V_1} := \{x \in {B} \mid 
\exists y \notin {B}, 
\{x,y\} \in F\}$.  By the same argument,
the expected cost to connect each point to $N$ is also at most $O(\delta) \cdot w(F|_{\overline{B}})$.

\noindent\textbf{Charging the Extra Costs to $F_1$.}
Apart from using edges in $F$,
the extra edges come from two copies of the
minimum spanning tree $H$,
and other edges with cost $O(\delta) \cdot w(F|_{\overline{B}})$.  We charge these extra costs
to $F_1$.

Since $T^{(i,4)}_u(\INS) > q_0 \cdot s^i$ and $F_1$ is a net-respecting solution for $\INS^{(i,4+2\lambda+h)}_u$,
by Lemma~\ref{lemma:heuristic_upper_bound},
$w(F_1) \geq \frac{1}{4(1+\eps)}(T^{(i)}(u, 4) - O(\frac{s k}{\epsilon})^{O(k)} \cdot s^i) > \frac{q_0}{8} \cdot s^i$,
by choosing large enough $q_0$.

Therefore, the cost for the two copies of the
minimum spanning tree $H$ is at most $O(\frac{k s}{\epsilon})^{O(k)} \cdot s^i
\leq \frac{\epsilon}{2} \cdot w(F_1)$.

We next give an upper bound on $w(F|_{\overline{B}})$,
which is at most 
$\T^{(i,4+2(\lambda+1))}_u(\INS) + O(\frac{s k}{\epsilon})^{O(k)}
\cdot s^i$, by Lemma~\ref{lemma:heuristic_small}.
By the choice of $\lambda$,
we have $\T^{(i,4+2(\lambda+1))}_u(\INS) \leq 30k \cdot 
\T^{(i,4+2\lambda + 1)}_u(\INS)$.
Moreover, 
by Lemma~\ref{lemma:heuristic_upper_bound},
$\T^{(i,4+2\lambda + 1)}_u(\INS) \leq 4(1+\eps) \cdot w(F_1) +
O(\frac{sk}{\eps})^{O(k)} \cdot s^i$.
Hence, we can conclude that
$w(F|_{\overline{B}}) \leq O(k) \cdot w(F_1)$.

Hence, by choosing small enough $\delta = \Theta(\frac{\eps}{k})$,
we can conclude that the extra
costs $O(\delta) \cdot w(F|_{\overline{B}})
\leq \frac{\eps}{2} \cdot w(F_1)$.

Therefore, we have shown that
$\expct{w(F_1)} + \expct{w(F_2)}
\leq w(F) + \eps \cdot w(F_1)$, where the right hand side is a random variable.
Taking expectation on both sides and rearranging gives
the required result.
\end{proof}

\section{A PTAS for Sparse $\SFP$ Instances}
\label{sec:ptas_sparse}

Our dynamic program follows the divide and conquer strategy
as in previous works on \TSP~\cite{DBLP:journals/jacm/Arora98,DBLP:conf/stoc/Talwar04,DBLP:conf/stoc/BartalGK12}
that are based on hierarchical decomposition.
However, to apply the framework to \SFP, we need a version of the \emph{cell property}
that is more sophisticated than previous works~\cite{DBLP:conf/focs/BorradaileKM08,DBLP:journals/algorithmica/BateniH12}.

We shall first give a review of the hierarchical decomposition techniques in Section~\ref{sec:hier_decomp}.
Then in Section~\ref{section:dp_struct},
we shall define our cell property precisely, and also prove that there exist good solutions that satisfy the cell property
(in Lemma~\ref{lemma:struct_property}).
Finally, we shall define \DP in Section~\ref{section:dp_alg}, and conclude a PTAS for sparse $\SFP$ instances
(in Corollary~\ref{corollary:ptas_sparse_sfp}).

\subsection{Review on Hierarchical Decomposition}
\label{sec:hier_decomp}

\begin{definition}[Single-Scale Decomposition~\cite{DBLP:conf/stoc/AbrahamBN06}]
\label{defn:single_decomp}
At height $i$, an arbitrary ordering $\pi_i$ is
imposed on the net $N_i$.  Each net-point $u \in N_i$
corresponds to a \emph{cluster center} and
samples random $h_u$ from a truncated exponential distribution
$\Exp_i$ having density function $t \mapsto
\frac{{\chi}}{{\chi}-1} \cdot \frac{\ln \chi}{s^i} \cdot e^{-\frac{t \ln \chi }{s^i}}$ for $t \in [0, s^i]$, where $\chi = O(1)^k$.
Then, the cluster at $u$ has random radius $r_u := s^i + h_u$.

The clusters
induced by $N_i$ and the random radii form a
decomposition $\Pi_i$,
where a point $p \in V$ belongs to the cluster
with center $u \in N_i$ such that $u$ is the first
point in $\pi_i$ to satisfy $p \in B(u, r_u)$.
We say that the partition $\Pi_i$ cuts a set $P$
if $P$ is not totally contained within a single cluster.

The results in~\cite{DBLP:conf/stoc/AbrahamBN06} imply that
the probability that a set $P$ is cut by $\Pi_i$
is at most $\frac{\beta \cdot \Diam(P)}{s^i}$,
where $\beta = O(k)$.
\end{definition}

\begin{definition}[Hierarchical Decomposition] \label{defn:phd}
Given a configuration of random radii
for $\{N_i\}_{i \in [L]}$, decompositions $\{\Pi_i\}_{i \in [L]}$
are induced as in Definition~\ref{defn:single_decomp}.
At the top height $L-1$, the whole space
is partitioned by $\Pi_{L-1}$ to form height-$(L-1)$ clusters.  Inductively,
each cluster at height $i+1$ is partitioned
by $\Pi_i$ to form height-$i$ clusters, until height $0$ is reached.  Observe that a cluster
has $K := O(s)^k$ child clusters.
Hence, a set $P$ is cut at height $i$ \emph{iff}
the set $P$ is cut by some partition $\Pi_j$ such that
$j \geq i$; this happens with probability
at most $\sum_{j \geq i} \frac{\beta \cdot \Diam(P)}{s^i} = \frac{O(k) \cdot \Diam(P)}{s^i}$.
\end{definition}

\noindent \textbf{Portals.}
As in~\cite{Arora-tspsurvey,DBLP:conf/stoc/Talwar04,DBLP:conf/stoc/BartalGK12},
each height-$i$ cluster $U$ is equipped with portals
such that a solution $F$ is \emph{portal-respecting}, if
for every edge $\{x,y\}$  in $F$ between a point $x$ in $U$
and some point $y$ outside $U$, at least one of $x$ and $y$ must be a portal of cluster $U$.
As mentioned in~\cite{DBLP:conf/stoc/BartalGK12},
the portals of a cluster need not be points
of the cluster itself, but are just used as connection points.
For a height-$i$ cluster $C$, its portals
is the subset of net-points
in $N_{i'}$ that cover $C$,
where $i'$ is the maximum index such that
$s^{i'} \leq \max\{1, \frac{\ep}{4 \beta L} \cdot s^i\}$.
As noted in~\cite{DBLP:conf/stoc/Talwar04,DBLP:conf/stoc/BartalGK12,DBLP:conf/soda/ChanJ16},
any solution can be made to be portal-respecting
with a multiplicative factor of $1 + O(\eps)$ in cost.

Since a height-$i$ cluster
has diameter $O(s^i)$, by Fact~\ref{fact:net},
the cluster has at most $m := O(\frac{\beta L s}{\ep})^k$ 
portals.

\noindent \textbf{$(m,r)$-Light Solution.}
A solution $F$ is called $(m,r)$-\emph{light}, 
if it is portal-respecting for a hierarchical
decomposition in which each cluster has at most $m$ portals,
and for each cluster, at most $r$ of its portals
are used in $F$ to connect points in the cluster to the points outside.


\subsection{Structural Property}
\label{section:dp_struct}
In this section, we shall define the cell property (Definition~\ref{definition:cell_property}) with respect to the effective cells
(Definition~\ref{definition:eff}),
where the effective cells are carefully chosen to
implement our adaptive cells idea which is discussed in Section~\ref{sec:intro}. Specifically,
the effective cells are defined by the union of
the basic cells (Definition~\ref{definition:bas}) and the non-basic cells (Definition~\ref{definition:nbas}).
Moreover, the virtual cells and the promoted cells (Definition~\ref{definition:pro_vir}) are introduced
in order to define the non-basic cells.
Finally, we shall prove the structural property in Lemma~\ref{lemma:struct_property}.

\noindent\textbf{Notations and Parameters.}
Let $\Ht(C)$ denote the height of a cluster $C$,
$\Des(C)$ denote the collection of all descendant clusters of $C$ (including $C$),
and $\Par(C)$ denote the parent cluster of $C$.
For $x \in \mathbb{R}_{+}$, let $\lfloor x \rfloor_s$ denote the largest power of $s$ that is at most $x$,
and $\lceil x \rceil_s$ denote the smallest power of $s$ that is at least $x$.
Define $\hat{\gamma}_{0} := \Theta(\frac{\epsilon}{k s^2 L})$, and define $\hat{\gamma}_{1} := \Theta(\frac{\epsilon}{s^2})$.
Define $\gamma_0$ such that $\frac{1}{\gamma_0} := \lceil \frac{1}{\hat{\gamma}_{0}} \rceil_s$,
and define $\gamma_1$ such that $\frac{1}{\gamma_1} := \lfloor \frac{1}{\hat{\gamma}_1} \rfloor_s$.
We note that $\gamma_0 < \gamma_1$.

\begin{definition}[Cell]
	Suppose $C$ is a cluster of height $i$. A $p$-cell of $C$ is a height-$\log_{s}{p}$ sub-cluster of $C$.
\end{definition}

\begin{definition}[Crossing Component]
Suppose $C$ is some cluster, and $F$ is a solution for $\SFP$.
We say that a subset $A$ crosses $C$, if there exists points $x, y \in A$ such that $x \in C$ and $y \notin C$.
A component $A$ in $F$ is called a crossing component of $C$ if $A$ crosses $C$.
\end{definition}

In the following, we shall introduce the notions of the basic cells, owner of basic cells,
promoted cells, virtual cells, 
non-basic cells, and effective cells. All of these are defined with respect to some feasible solution to 
$\SFP$. We assume there is an underlying feasible solution $F$
when talking about these definitions.

\noindent \textbf{Adaptive Cells.}  For each cluster $C$, we shall define its \emph{basic cells}
whose heights depend on the weights $l$ of the crossing components of $C$ in the solution~$F$.
We consider three cases.

Define $I_{1}(l) := \{ i \mid \lfloor l \rfloor_s \geq s^i \}$,
$I_{2}(l) := \{i \mid \frac{\gamma_0}{\gamma_1} s^i \leq \lfloor l \rfloor_s < s^i\}$
and $I_{3}(l) := \{i \mid i\leq L, \lfloor l \rfloor_s < \frac{\gamma_0}{\gamma_1} s^i\}$.
Define a function $h : [L] \times \mathbb{R}_{+} \rightarrow \mathbb{R}_{+}$, such that
\begin{equation*}
	h(i, l)=
	\begin{cases}
		\gamma_1 s^i, & \text{ for } i\in I_{1}(l) \\
		\gamma_1 \lfloor l \rfloor_s, & \text{ for } i\in I_{2}(l) \\
		\gamma_0 s^i, & \text{ for } i \in I_{3}(l)
	\end{cases}
\end{equation*}

\begin{lemma}
	\label{lemma:h_increase_one}
	$\frac{h(i+1, l)}{s} \leq h(i, l) \leq h(i+1, l)$.
\end{lemma}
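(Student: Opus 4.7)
The plan is to do a case analysis on the behaviour of $\lfloor l \rfloor_s$ relative to the thresholds $s^i$, $s^{i+1}$, $\frac{\gamma_0}{\gamma_1}s^i$, $\frac{\gamma_0}{\gamma_1}s^{i+1}$ that determine which piece of the definition of $h$ is active. Fix $l$ and write $m := \lfloor l \rfloor_s$, so $m$ is a power of $s$. The index $i$ sits in exactly one of $I_1(l), I_2(l), I_3(l)$, and likewise for $i+1$; the goal is to verify the two inequalities in every consistent combination.

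The structural observation that drives the argument is that $\gamma_0 / \gamma_1 \leq 1/s$. This is because, by construction, both $1/\gamma_0$ and $1/\gamma_1$ are (integer) powers of $s$, and since $\hat{\gamma}_0 = \Theta(\frac{\epsilon}{k s^2 L}) < \hat{\gamma}_1 = \Theta(\frac{\epsilon}{s^2})$ we get $1/\gamma_0 > 1/\gamma_1$, hence $1/\gamma_0 \geq s \cdot (1/\gamma_1)$. Using this inequality, I first show that as $i$ grows with $l$ fixed, the classification $I_1 \to I_2 \to I_3$ can only move forward: if $i \in I_3(l)$ then $m < \frac{\gamma_0}{\gamma_1} s^i \leq s^{i-1}$, which precludes $i+1 \in I_1$ or $I_2$; similarly one cannot jump from $I_1$ directly to $I_3$, since that would force $s^i \leq m < \frac{\gamma_0}{\gamma_1} s^{i+1} \leq s^i$, a contradiction. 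This leaves the five feasible cases $(I_1,I_1)$, $(I_1,I_2)$, $(I_2,I_2)$, $(I_2,I_3)$, $(I_3,I_3)$.

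In the "pure" cases $(I_1,I_1)$ and $(I_3,I_3)$, both values of $h$ scale as a constant times $s^i$ or $s^{i+1}$, so $h(i+1,l)=s\cdot h(i,l)$, giving equality on the right side and the left side with room to spare. In the $(I_2,I_2)$ case, $h(i,l)=h(i+1,l)=\gamma_1 m$, so both inequalities are trivial. In the $(I_1,I_2)$ transition, the defining inequalities of $I_1(l)$ at height $i$ and $I_2(l)$ at height $i+1$ force $m = s^i$, and a direct substitution yields $h(i,l) = h(i+1,l) = \gamma_1 s^i$. The only genuinely interesting case is $(I_2,I_3)$: here $\frac{\gamma_0}{\gamma_1} s^i \leq m < \frac{\gamma_0}{\gamma_1} s^{i+1}$, so $h(i,l) = \gamma_1 m$ and $h(i+1,l) = \gamma_0 s^{i+1}$; the upper bound $m < \frac{\gamma_0}{\gamma_1} s^{i+1}$ gives $h(i,l) \leq h(i+1,l)$, and the lower bound $m \geq \frac{\gamma_0}{\gamma_1} s^i$ gives $h(i+1,l) \leq s \cdot h(i,l)$.

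The only place where any care is needed is in justifying $\gamma_0/\gamma_1 \leq 1/s$ to rule out the skip from $I_1$ to $I_3$ and to handle the transition case; everything else is arithmetic. The case analysis is mechanical once the partition $I_1 \to I_2 \to I_3$ is established, so the write-up can be compact, consisting of the key numerical inequality followed by one line per case.
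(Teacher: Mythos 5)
Your proof is correct and follows essentially the same case analysis as the paper's (identify the region $I_1, I_2, I_3$ containing $i$ and $i+1$, treat the pure cases and the two transition cases). The one thing you do that the paper leaves implicit is explicitly verifying that the only possible mixed cases are $I_1\to I_2$ and $I_2\to I_3$ — the paper simply asserts ``Otherwise, it is either $i\in I_2$ but $i+1\in I_3$, or $i\in I_1$ but $i+1\in I_2$'' without noting that ruling out the skip $I_1\to I_3$ requires $\gamma_0/\gamma_1\leq 1/s$, a fact you correctly derive from $1/\gamma_0$ and $1/\gamma_1$ being powers of $s$ with $1/\gamma_0>1/\gamma_1$. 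This is a worthwhile clarification but not a different route.
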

\begin{proof}
	If both $i$ and $i+1$ lie in the same $I_{j}(l)$ ($j \in \{ 1, 2, 3 \}$), then it holds immediately.

	Otherwise, it is either $i \in I_{2}(l)$ but $i+1 \in I_{3}(l)$, or $i \in I_{1}(l)$ but $i+1 \in I_{2}(l)$.
	\begin{compactitem}
		\item If $i \in I_{1}(l)$ and $i+1 \in I_{2}(l)$. This implies $s^i = \lfloor l \rfloor_s$.
			Hence, $h(i, l) = \gamma_1 s^i = \gamma_1 \lfloor l \rfloor_s = h(i+1, l)$.
		\item If $i \in I_{2}(l)$ and $i+1 \in I_{3}(l)$. This implies $s^i = \frac{\gamma_1}{\gamma_0} \lfloor l \rfloor_s$.
			Hence, $s\cdot h(i, l) = s\cdot \gamma_1 \lfloor l \rfloor_s = \gamma_0 s^{i+1} = h(i+1, l)$.
	\end{compactitem}
	This implies the inequality.
\end{proof}

\begin{definition}[Basic Cell]
\label{definition:bas}
	Suppose $C$ is a cluster of height $i$, and $A$ is a crossing component of $C$.
	Define $l := w(A)$.
Define the basic cells of $A$ in $C$, $\Bas_{A}(C)$, to be the collection of the $h(i, l)$-cells of $C$ that \emph{intersect} $A$.
Define the basic cells of $C$, $\Bas(C)$, to be the union of $\Bas_A(C)$ for all crossing components $A$ of $C$.
\end{definition}

\begin{definition}[Owner of a Basic Cell]
	For some cluster $C$, define the owner of $e \in \Bas(C)$ to be the minimum weight crossing component $A$ such that
	$e \in \Bas_{A}(C)$.
\end{definition}

\begin{definition}[Promoted Cell and Virtual Cell]
	\label{definition:pro_vir}
	Suppose $C$ is a cluster of height $i$. Let $S$ be the set of sub-clusters of $C$
	that is not in $\Bas(C)$ but has a sibling in $\Bas(C)$.

	Consider each $e\in S$.
	\begin{compactitem}
		\item If there exists a sub-cluster $C'$ of $C$ such that $e\in \Bas(C')$, then
	define $\Pro_{e}(C) := \Des(e) \cap \Bas(C')$, and define $\Vir_{e}(C) := \emptyset$,
	where $C' \subset C$ is any one that satisfies $e \in \Bas(C')$.
		\item Otherwise,
	define $\Pro_{e}(C) := \emptyset$, and define $\Vir_{e}(C) := e$.
	\end{compactitem}

	Finally, $\Pro(C) := \bigcup_{e \in S}{\Pro_e(C)}$, and $\Vir(C) := \bigcup_{e \in S}{\Vir_{e}(C)}$,
	and elements in $\Pro(C)$ and $\Vir(C)$ are called promoted cells and virtual cells respectively.
\end{definition}

\begin{lemma}
\label{lemma:virtual_cells_safe}
For any cluster $C$, if $e \in \Vir(C)$,
then for any cluster $C' \subset C$ ($C'$ may equal $C$), $e\backslash\{ e' \in \Bas(C') \mid e' \subsetneq e \}$
has no intersection with any crossing component of $C'$.
\end{lemma}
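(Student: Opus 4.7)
I plan to prove the lemma by contradiction, using a discrete continuity argument on how the basic-cell size $h(\cdot,w(A))$ evolves along the ancestor chain of $e$. Write $i := \Ht(C)$, $i' := \Ht(C')$, and let $\eta$ be the size of $e$ as a cell of $C$; since $e \in S$ has a sibling in $\Bas(C)$, we have $\eta = h(i,l_0)$ for some weight $l_0$ of a crossing component of $C$. Suppose for contradiction that some $x \in e$ lies in a crossing component $A$ of $C'$ but outside every $e' \in \Bas(C')$ with $e' \subsetneq e$. Let $e''$ be the $h(i',w(A))$-cell of $C'$ containing $x$, so that $e'' \in \Bas_A(C') \subseteq \Bas(C')$. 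Since $e$ and $e''$ are both hierarchical clusters containing the common point $x$, they are nested, giving three cases.

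If $e'' \subsetneq e$, then the choice of $x$ is immediately contradicted. If $e'' = e$, then $e \in \Bas(C')$; the sub-case $C' = C$ contradicts $e \notin \Bas(C)$ (which holds because $e \in S$), and the sub-case $C' \subsetneq C$ would make $e$ a promoted cell of $C$ by Definition~\ref{definition:pro_vir}, contradicting $e \in \Vir(C)$. The remaining case $e \subsetneq e''$, i.e., $h(i',w(A)) > \eta$, is the crux and I plan to rule it out by producing a proper sub-cluster of $C$ for which $e$ is basic.

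The key observation is that $1/\gamma_0$ and $1/\gamma_1$ are integer powers of $s$ both at least $s$, so every value taken by $h$ is an integer power of $s$; in particular $\eta = s^q$ for some integer $q$. Lemma~\ref{lemma:h_increase_one} then says that the exponent of $h(j,w(A))$ rises by $0$ or $1$ as $j$ rises by $1$. A short per-regime check of the piecewise definition of $h$ (using only $\gamma_0,\gamma_1 \leq 1/s$) yields $h(q+1,w(A)) \leq s^q = \eta$; combined with the assumption $h(i',w(A)) > \eta$, discrete continuity forces some $j \in [q+1,i']$ with $h(j,w(A)) = \eta$. Let $C^*_j$ be the height-$j$ cluster containing $x$. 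Then $C^*_j \subsetneq C$ (since $j \leq i' < i$); $A$ crosses $C^*_j$ (since $A$ has a point outside $C' \supseteq C^*_j$ and contains $x \in C^*_j$); and $e$ is the height-$q$ sub-cluster of $C^*_j$ through $x$, hence exactly an $h(j,w(A))$-cell intersecting $A$. Therefore $e \in \Bas_A(C^*_j) \subseteq \Bas(C^*_j)$, contradicting $e \in \Vir(C)$. The main obstacle is the per-regime verification of $h(q+1,w(A)) \leq \eta$, since it is precisely this bound, together with the unit-step exponent control of Lemma~\ref{lemma:h_increase_one}, that guarantees the bridging height $j$ whose existence drives the contradiction.
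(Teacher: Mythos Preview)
Your proof is correct and follows essentially the same strategy as the paper's: a contradiction argument that uses the discrete intermediate-value behavior of $h(\cdot,w(A))$ from Lemma~\ref{lemma:h_increase_one} to produce a sub-cluster of $C$ at which $e$ itself is a basic cell, contradicting the definition of $\Vir$. The only differences are cosmetic---the paper anchors the lower end of the intermediate-value step at height~$0$ (using $h(0,w(A))=\gamma_1<s^q$) rather than your sharper anchor $h(q+1,w(A))\le s^q$, and your justification ``$j\le i'<i$'' fails when $C'=C$, but the needed bound $j<i$ still holds since $h(i',w(A))>s^q=h(j,w(A))$ forces $j<i'\le i$.
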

\begin{proof}
Suppose not. Then, there exists a cluster $C' \subset C$, and a crossing component $A$ of $C'$,
such that $A$ intersects $u := e \backslash\{ e' \in \Bas(C') \mid e' \subsetneq e\}$.
This implies that there exists $u' \in \Bas_{A}(C')$, such that $e \subset u'$.
By Lemma~\ref{lemma:h_increase_one}, and the fact that $h(\Ht(C'), w(A)) \geq s^{\Ht(e)}$ and that
$h(0, w(A)) < s^{\Ht(e)}$, we know that there exists a cluster $C'' \subset C' \subset C$, such that $e \in \Bas_{A}(C'')$.
This contradicts with the definition of virtual cells.
\end{proof}

\begin{definition}[Non-basic Cell]
	\label{definition:nbas}
	We define the non-basic cells $\NBas(C)$ for a cluster $C$.
	If $C$ is the root cluster, then $\NBas(C) = \Pro(C) \cup \Vir(C) \backslash \Bas(C)$.
	For any other cluster $C$, 
	define $\NBas(C) := \{ e \cap C \mid e \in \Pro(C) \cup \Vir(C) \cup \NBas(\Par(C)) \backslash \Bas(C)\}$.
\end{definition}

\begin{definition}[Effective Cell]
	\label{definition:eff}
	For a cluster $C$, define the effective cells of $C$ as $\Eff(C) := \Bas(C) \cup \NBas(C)$.
\end{definition}

\begin{definition}[Refinement]
	Suppose $S_1$ and $S_2$ are collections of clusters. We say $S_1$ is a refinement of $S_2$,
	if for any $e \in S_2$, either $e \in S_1$, or all child clusters of $e$ are in $S_1$.
\end{definition}

\begin{lemma}
	\label{lemma:refinement}
	Suppose $C$ is a cluster that is not a leaf. Define $\{ C_i \}_i$ to be the collection of
	all the child clusters of $C$. Then $\bigcup_{i}{\Eff(C_i)}$ is a refinement of $\Eff(C)$.
\end{lemma}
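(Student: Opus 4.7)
The plan is to take an arbitrary $e \in \Eff(C) = \Bas(C) \cup \NBas(C)$ and verify the refinement condition directly, tracing which of the Bas/Pro/Vir/inherited slots in the definition of $\Eff(C_j)$ captures either $e$ itself or the children of $e$. The key structural observation is that any cell $e \subseteq C$ is either contained in a unique child $C_j$ of $C$ (the case that always holds when the height of $e$ is strictly less than $\Ht(C)$, which happens for every basic, promoted, or virtual cell since $\gamma_1 \leq 1/s^2$) or else equals $C$ itself (which can only arise through an inherited ancestor cell landing in $\NBas(C)$).

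For $e \in \NBas(C)$ with $e \subseteq C_j$ the verification is a bookkeeping check: using $\Par(C_j) = C$, the cell $e$ is a member of the defining set $\Pro(C_j) \cup \Vir(C_j) \cup \NBas(\Par(C_j))$ of $\NBas(C_j)$ via the $\NBas(C)$ slot, and $e \cap C_j = e$; it therefore lands either in $\Bas(C_j)$ or in $\NBas(C_j)$, and in both alternatives $e \in \Eff(C_j)$. In the degenerate sub-case $e = C \in \NBas(C)$, the same mechanism applied to each child gives $C \cap C_i = C_i$ with $C \notin \Bas(C_i)$, so $C_i \in \NBas(C_i) \subseteq \Eff(C_i)$ for every child $C_i$, and the whole collection of children of $C$ is covered.

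The substantive case is $e \in \Bas(C)$ with $e \subseteq C_j$. Let $A$ be the owner of $e$; then $A$ is a crossing component of $C$ with $e \in \Bas_A(C)$, and since $A$ escapes $C \supseteq C_j$ while $A \cap C_j \supseteq A \cap e \neq \emptyset$, $A$ is also a crossing component of $C_j$. Lemma~\ref{lemma:h_increase_one} supplies the dichotomy $h(\Ht(C_j), w(A)) \in \{\, h(\Ht(C), w(A)),\ h(\Ht(C), w(A))/s \,\}$. If the scale stays the same, $e$ is already an $h(\Ht(C_j), w(A))$-cell of $C_j$ intersecting $A$, so $e \in \Bas_A(C_j) \subseteq \Eff(C_j)$. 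If the scale shrinks by $s$, the correct objects at $C_j$ are the children of $e$: at least one meets $A$ (because $e \cap A \neq \emptyset$ forces some child of $e$ to contain a point of $A$), and those are in $\Bas_A(C_j)$; every remaining child $e'$ of $e$ is a sibling of such a basic cell, so if $e' \notin \Bas(C_j)$ then Definition~\ref{definition:pro_vir} assigns $e'$ either to $\Pro(C_j)$ (when $e'$ is itself a basic cell of some sub-cluster of $C_j$) or to $\Vir(C_j)$, and hence to $\NBas(C_j)$. In every alternative $e' \in \Eff(C_j)$.

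The place that deserves the most care is the shrunk-by-$s$ sub-case above: the Pro/Vir fallback of Definition~\ref{definition:pro_vir} only activates for a child $e'$ once we can exhibit at least one sibling of $e'$ that lies in $\Bas(C_j)$. That hinge is provided precisely by the owner $A$, which forces some child of $e$ into $\Bas_A(C_j)$; without this existence step the case would not close, since neither the inherited slot nor the promoted/virtual slot would a priori contain the non-$A$-touching children. Once this hinge is in place the case analysis exhausts $\Eff(C) = \Bas(C) \cup \NBas(C)$ and yields the refinement.
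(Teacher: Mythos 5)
Your proof is correct and follows the same case split as the paper (non-basic with $e\neq C$, non-basic with $e=C$, basic). The one place you go beyond the paper is welcome: in the basic case the paper merely asserts ``by Lemma~\ref{lemma:h_increase_one}, either $e\in S$ or some child $e'\subset e$ is in $S$'' without saying why $S$ actually catches $e$ or a child of it, and your observation that the owner $A$ of $e$ also crosses the unique child $C_j\supseteq e$, so the $h$-dichotomy puts either $e$ itself or one of its children into $\Bas_A(C_j)$, is exactly the missing justification.
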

\begin{proof}
Define $S := \bigcup_{i}{\Eff(C_i)}$.
It is sufficient to prove that for any $e \in \Eff(C)$, either $e \in S$,
or all child clusters of $e$ are in $S$.

If $e \in \NBas(C)$ and $e \neq C$, then $e \in S$ follows from Definition~\ref{definition:nbas}
and Definition~\ref{definition:eff}.
If $e \in \NBas(C)$ but $e = C$, then also by Definition~\ref{definition:nbas} and Definition~\ref{definition:eff},
$C\cap C_i = C_i \subset \Eff(C_i)$, and this
implies that all child clusters of $e$ are in $S$.

Otherwise, $e \in \Bas(C)$, then by Lemma~\ref{lemma:h_increase_one}, we know that either $e \in S$,
or there exists $e' \subset e$ such that $\Ht(e') = \Ht(e) - 1$ and $e' \in S$.
Then all siblings of $e'$ are in $S$, by the definition of promoted cells and virtual cells.
This implies that all child clusters of $e$ are in $S$.
\end{proof}

\begin{definition}[Candidate Center]
	Suppose $C$ is a cluster of height $i$. The set of candidate centers of $C$, denoted as $\Can(C)$,
	is the subset of $\bigcup_{j = \log_{s}{\gamma_{0}^2 s^i}}^{i}{N_j}$ that may become a center of $C$'s child cluster
	in the hierarchical decomposition.
\end{definition}

\begin{lemma}
	\label{lemma:bound_can}
	For any cluster $C$, the centers of clusters in $\Eff(C)$ are chosen from $\Can(C)$,
	and $|\Can(C)| \leq \kappa$, where $\kappa := O(\frac{1}{\gamma_0})^{O(k)}$.
\end{lemma}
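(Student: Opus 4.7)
The plan is to reduce both claims to a single structural statement: for any cluster $C$ of height $i$ and any $e \in \Eff(C)$, the height $\Ht(e)$ lies in $[\log_s(\gamma_0^2 s^i),\, i]$. Once this is established, the centre of each such $e$ automatically belongs to $N_{\Ht(e)} \subseteq \bigcup_{j=\log_s(\gamma_0^2 s^i)}^i N_j$, making $\Can(C)$-membership of those centres immediate; it then remains to bound $|\Can(C)|$ by a packing argument.

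For the height bound, I would first analyse the cells created at $C$ itself. A basic cell $e \in \Bas(C)$ is an $h(i,l)$-cell, and inspecting the three cases of Definition~\ref{definition:bas} shows $h(i,l) \geq \gamma_0 s^i$ throughout, so $\Ht(e) \geq \log_s(\gamma_0 s^i)$; virtual cells inherit the same height from their basic-cell siblings. The delicate case is a promoted cell $p \in \Pro_e(C)$, which by Definition~\ref{definition:pro_vir} satisfies $p \in \Des(e) \cap \Bas(C')$ for some sub-cluster $C' \subseteq C$ with $e \in \Bas(C')$. Since $e \in \Bas(C')$ forces $e$ to be a sub-cluster of $C'$, we get $\Ht(C') \geq \Ht(e) \geq \log_s(\gamma_0 s^i)$; then $p$ being a basic cell of $C'$ yields $\Ht(p) \geq \log_s(\gamma_0 s^{\Ht(C')}) \geq \log_s(\gamma_0^2 s^i)$. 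This two-stage descent, each step losing a factor of $\gamma_0$, is exactly why the threshold in the definition of $\Can(C)$ is $\gamma_0^2 s^i$ rather than $\gamma_0 s^i$.

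The remaining effective cells are those in $\NBas(C)$ inherited from $\NBas(\Par(C))$, which I would handle by top-down induction on the hierarchy. Under the inductive hypothesis that every $e \in \Eff(\Par(C))$ satisfies $\Ht(e) \geq \log_s(\gamma_0^2 s^{i+1})$, each intersection $e \cap C$ appearing in $\NBas(C)$ is either $C$ itself (height $i$) or a sub-cluster equal to $e$ with height at least $\log_s(\gamma_0^2 s^{i+1}) > \log_s(\gamma_0^2 s^i)$, so the bound is preserved. For the size bound, every candidate centre must lie in $C \subseteq B(u, 2s^i)$, where $u$ is the centre of $C$; Fact~\ref{fact:net} gives $|N_j \cap B(u, 2s^i)| \leq O(s^{i-j})^{O(k)}$, and summing this geometric series for $j$ ranging from $\log_s(\gamma_0^2 s^i)$ up to $i$ is dominated by the smallest $j$, yielding $|\Can(C)| \leq O(1/\gamma_0^2)^{O(k)} = O(1/\gamma_0)^{O(k)} = \kappa$. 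The main obstacle is the promoted-cell step, since it is the sole source of effective cells whose height descends all the way down to $\log_s(\gamma_0^2 s^i)$ and thereby pins down the exact threshold used in the definition of $\Can(C)$.
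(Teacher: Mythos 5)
Your proposal is correct and follows essentially the same route as the paper: show every effective cell of $C$ has height at least $\log_s(\gamma_0^2 s^i)$ (splitting into basic, virtual, promoted, and inherited non-basic cells), then bound $|\Can(C)|$ via the packing property over the bounded range of heights. Your treatment of the promoted-cell case spells out the two-stage $\gamma_0$ loss more explicitly than the paper, and you handle the inherited non-basic cells by explicit top-down induction rather than a direct appeal to the recursive definition, but these are just finer-grained renderings of the paper's argument, not a different approach.
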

\begin{proof}
	We first prove that centers of cluster in $\Eff(C)$ are chosen from $\Can(C)$.
	\begin{compactitem}
		\item For $e \in \Bas(C)$, by the definition of the basic cells,
			we have $\Ht(e) \geq \log_{s}{\gamma_0 s^i}$.
		\item For $e\in \Pro(C)$, we have that $e$ is a basic cell of some cluster $C'$,
			and hence $\Ht(e) \geq \log_{s}{\gamma_{0}^2 s^i}$. 
		\item For $e \in \Vir(C)$, since it is a sibling of a basic cell, so $\Ht(e) \geq \log_{s}{\gamma_0 s^i}$.
		\item For $e\in \NBas(C)$, there is a cluster $C''$ such that $C \subset C''$ and $e \in \Pro(C'') \cup \Vir(C'')$.
	\end{compactitem}
	Hence $\Ht(e)\geq \log_{s}{\gamma_{0}^2 s^i}$. Therefore, centers of clusters in $\Eff(C)$ are in $\Can(C)$.

We then bound $|\Can(C)|$. Suppose $i := \Ht(C)$. Observe that a center of height $j \leq i$ that may become
a center of a child cluster of $C$ are contained in a ball of diameter $O(s^i)$. Moreover, $N_j$ is an $s^j$ packing.
Hence, by packing property, $|\Can(C)| \leq O(\frac{1}{\gamma_{0}})^{O(k)}$.
\end{proof}

\begin{lemma}
	\label{lemma:bound_eff}
	Suppose $\Eff$ is defined in terms of a solution that is $(m, r)$-light. Then for each cluster $C$, $|\Eff(C)| \leq\rho$,
	where $\rho := O(\log_{s}{\frac{1}{\gamma_0}}) \cdot r^2 \cdot O(\frac{s}{\gamma_1})^{O(k)}$.
\end{lemma}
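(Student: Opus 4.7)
The plan is to bound $|\Bas(C)|$ and $|\NBas(C)|$ separately and then sum. Write $i := \Ht(C)$ throughout. A preliminary observation is that since the solution is $(m,r)$-light, every crossing component of $C$ must contain at least one of the at most $r$ portals of $C$ that are used in $F$ to connect inside to outside, so $C$ has at most $r$ crossing components.

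\textbf{Bounding $\Bas(C)$.} Fix a crossing component $A$ of $C$ and set $l := w(A)$. Since $A$ is a connected subgraph realized in the metric, its metric diameter is at most $w(A)$. A case analysis according to which of $I_1(l), I_2(l), I_3(l)$ the index $i$ falls into shows that each basic $h(i,l)$-cell of $A$ has diameter $\Theta(h(i,l))$ and that the portion of $A$ inside $C$ fits in a ball whose radius is comparable to $\min\{s^i, w(A)\}$; in each regime the ratio of this radius to $h(i,l)$ is at most $O(s/\gamma_1)$. Applying the packing bound (Fact~\ref{fact:net}) yields $|\Bas_A(C)| \leq O(s/\gamma_1)^{O(k)}$, hence $|\Bas(C)| \leq r \cdot O(s/\gamma_1)^{O(k)}$. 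The same bound holds for every cluster $C'$.

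\textbf{Bounding $\Pro(C) \cup \Vir(C)$.} The index set $S$ from Definition~\ref{definition:pro_vir} of non-basic siblings of basic cells has size at most $(K-1)\cdot|\Bas(C)|$, where $K = O(s)^k$ is the number of children of any cluster. For each $e \in S$ we have $|\Vir_e(C)| \leq 1$, and $\Pro_e(C) \subseteq \Bas(C')$ for some sub-cluster $C'$ of $C$, so $|\Pro_e(C)| \leq r \cdot O(s/\gamma_1)^{O(k)}$ by the previous step. Multiplying, $|\Pro(C^*) \cup \Vir(C^*)| \leq r^2 \cdot O(s/\gamma_1)^{O(k)}$ for every cluster $C^*$.

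\textbf{Bounding $\NBas(C)$ by unrolling.} Unrolling Definition~\ref{definition:nbas}, every element of $\NBas(C)$ has the form $e \cap C$ for some $e \in \Pro(C^*) \cup \Vir(C^*)$ where $C^*$ is an ancestor of $C$ (possibly $C^* = C$). By construction $\Ht(e) \in [\Ht(C^*) + \log_s \gamma_0,\ \Ht(C^*) + \log_s \gamma_1]$. The key observation is that whenever $\Ht(e) \geq i$, cluster nesting forces $e \cap C \in \{\emptyset, C\}$, so all such $e$ together contribute at most the single element $C$ to $\NBas(C)$. For a genuinely new contribution we need $\Ht(e) < i$, which by the lower bound on $\Ht(e)$ forces $\Ht(C^*) < i + \log_s(1/\gamma_0)$; combined with $\Ht(C^*) \geq i$ this leaves only $O(\log_s(1/\gamma_0))$ relevant ancestors. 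Each contributes at most the Step-2 bound, so $|\NBas(C)| \leq 1 + O(\log_s(1/\gamma_0)) \cdot r^2 \cdot O(s/\gamma_1)^{O(k)}$, and summing with $|\Bas(C)|$ gives $|\Eff(C)| \leq \rho$.

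The main obstacle will be the three-case verification in Step 1 that the cell diameter $h(i,l)$ is always commensurate with the relevant containing radius, so that a single packing bound absorbs all three regimes of $l$; without this, the $O(s/\gamma_1)^{O(k)}$ factor would have to be replaced by a larger expression and the final bound would not match $\rho$. A secondary delicate point is the collapse observation in Step 3 isolating the role of ancestors to the narrow window $[i, i + \log_s(1/\gamma_0))$; this is what turns what looks a priori like an $\Omega(L)$ sum over ancestors into only $O(\log_s(1/\gamma_0))$ nontrivial terms.
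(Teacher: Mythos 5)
Your proof follows the same decomposition as the paper's: bound $|\Bas(C)|$ per crossing component via packing (with the three-regime observation that the ratio of the relevant containing radius to the cell scale $h(i,l)$ is always $O(s/\gamma_1)$), bound $|\Pro(C)\cup\Vir(C)|$ by charging to siblings of basic cells, and bound $|\NBas(C)|$ by showing that only ancestors within $O(\log_s(1/\gamma_0))$ heights of $C$ can deposit cells into $\NBas(C)$. All three steps, the $(m,r)$-lightness argument giving at most $r$ crossing components, and the final arithmetic agree with the paper.

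One minor inaccuracy in Step 3: your claimed constraint $\Ht(e) \geq \Ht(C^*) + \log_s\gamma_0$ is too tight for promoted cells. An element of $\Pro(C^*)$ is a basic cell of some $C'\subsetneq C^*$ with $\Ht(C')$ itself possibly as low as $\Ht(C^*) - \log_s(1/\gamma_0)$, so the promoted cell's height can be as low as $\Ht(C^*) - 2\log_s(1/\gamma_0)$ (this is precisely why $\Can(C)$ in Lemma~\ref{lemma:bound_can} goes down to height $\log_s(\gamma_0^2 s^i)$). The window of relevant ancestors is thus $[i, i + 2\log_s(1/\gamma_0))$ rather than $[i, i + \log_s(1/\gamma_0))$; since this is still $O(\log_s(1/\gamma_0))$ heights, the final bound is unaffected. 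Your explicit collapse observation for ancestors with $\Ht(e)\geq i$ is a slightly cleaner handling of a point the paper leaves implicit.
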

\begin{proof}
Suppose $C$ is of height $i$. We give upper bounds for $|\Bas(C)|$, $|\Pro(C)|$, $|\Vir(C)|$ and $|\NBas(C)|$ respectively.

\noindent\textbf{Bounding $|\Bas(C)|$.}
Fix a crossing component $A$ of $C$, and suppose $l := w(A)$. We upper bound $|\Bas_{A}(C)|$.
\begin{compactitem}
	\item If $i \in I_{1}(l)$, then $\Bas_{A}(C)$ is a subset of $\gamma_1 s^i$-cells of $C$.
		By packing property, $|\Bas_{A}(C)| \leq O(\frac{1}{\gamma_1})^k$.
	\item If $i \in I_{2}(l)$, then $\Bas_{A}(C)$ is a subset of $\gamma_1 \lfloor l \rfloor_s$-cells of $C$.
		Since all the $\gamma_1 \lfloor l \rfloor_s$-cells that intersect $A$ are inside a ball of diameter $O(l)$,
		by packing property, $|\Bas_{A}(C)| \leq O(\frac{s}{\gamma_1})^{O(k)}$.
	\item If $i \in I_{3}(l)$, then $\Bas_{A}(C)$ is a subset of $\gamma_0 s^i$-cells of $C$.
		Since all the $\gamma_0 s^i$-cells that intersect $A$ are inside a ball of diameter $O(\frac{\gamma_0}{\gamma_1} s^i)$,
		by packing property, $|\Bas_{A}(C)| \leq O(\frac{1}{\gamma_1})^k$.
\end{compactitem}
Since the solution is $r$-light, there are at most $r$ crossing components. Therefore,
\begin{align*}
|\Bas(C)| \leq r\cdot O(\frac{s}{\gamma_1})^{O(k)}.
\end{align*}

\noindent\textbf{Bounding $|\Pro(C)|$ and $|\Vir(C)|$.}
Recall that for $e \in \Bas(C)$, and for $e' \notin \Bas(C)$ that is a sibling of $e$,
we either include $e$ to $\Vir(C)$, or
include $\Des(e) \cap \Bas(C')$ to $\Pro(C)$, for some sub-cluster $C'$ of $C$.
In either cases, the number of added elements is at most $r\cdot O(\frac{s}{\gamma_1})^{O(k)}$,
and we charge this to $e$.

We observe that for each $e \in \Bas(C)$, it has at most $O(s)^k$ siblings, by packing property.
Therefore, each $e$ is charged at most $O(s)^k$ times. We conclude that 
\begin{align*}
|\Pro(C) \cup \Vir(C)| \leq O(s)^k \cdot r^2 \cdot O(\frac{s}{\gamma_1})^{O(k)}.
\end{align*}

\noindent\textbf{Bounding $|\NBas(C)|$.}
Suppose $P$ is the set consisting of $C$ and all its ancestor clusters.
By definition, $\NBas(C)$ is a subset of the inside $C$ clusters of $\bigcup_{p \in P}{(\Pro(p) \cup \Vir(p))}$.

We shall first prove that if $\Ht(p) - \Ht(C) > 2\log_{s}{\frac{1}{\gamma_0}}$,
then there is no element in $\Pro(p) \cup \Vir(p)$ that can appear in $\NBas(C)$, for any $p \in P$.
Suppose not. Then there exists some $p$ such that $\Ht(p) - \Ht(C) > 2\log_{s}{\frac{1}{\gamma_0}}$.
Let $j := \Ht(p)$.
We observe that all elements in $\Pro(p) \cup \Vir(p)$ have
height at least $\log_{s}{\gamma_{0}^{2} s^j} = j - 2\log_{s}{\frac{1}{\gamma_{0}}}$,
by Definition~\ref{definition:pro_vir} and Definition~\ref{definition:bas}.
However, if some element in $\Pro(p) \cup \Vir(p)$ appears in $C$,
then it has height less than $j - 2\log_{s}{\frac{1}{\gamma_0}}$,
by $\Ht(C)<\Ht(p) - 2\log_{s}{\frac{1}{\gamma_0}}$.
This is a contradiction.
Therefore,
\begin{align*}
	|\NBas(C)|
	\leq O(\log_{s}{\frac{1}{\gamma_0}}) \cdot r^2 \cdot O(\frac{s}{\gamma_1})^{O(k)}.
\end{align*}
Hence $|\Eff(C)| \leq |\Bas(C)| + |\NBas(C)| \leq O(\log_{s}{\frac{1}{\gamma_0}}) \cdot r^2 \cdot O(\frac{s}{\gamma_1})^{O(k)}$.
\end{proof}

\begin{definition}[Disjointification]
	For any collection of clusters $S$,
	define $\Dis(S) := \{e \backslash
	\bigcup_{e' \in S : e' \subsetneq e}{e'}
	\}_{e \in S}$.
	We say $e$ is induced by $u$ in $S$, if $u\in S$ and $e = u \backslash
	\bigcup_{e' \in S : e' \subsetneq u}{e'}
	$, and the height of $e$ is defined as the height of $u$.
\end{definition}

\begin{definition}[Cell Property]
\label{definition:cell_property}
	Suppose $F$ is an $\SFP$ solution, and suppose $f$ maps a cluster $C$ to a collection of sub-clusters of $C$.
	We say that
	$f$ satisfies the cell property in terms of $F$ if for all clusters $C$,
	for all $e\in \Dis(f(C))$, there is at most one
	crossing component of $C$ in $F$ that intersects~$e$.
\end{definition}

\begin{lemma}[Structural Property]
	\label{lemma:struct_property}
	Suppose an instance has a $q$-sparse optimal net-respecting solution $F$.
	Moreover, for each $i \in [L]$, for each $u\in N_i$, point $u$
	samples $O(k \log{n})$ independent random radii as in 
	Definition~\ref{defn:single_decomp}.
	Then, with constant probability, there exists
	a configuration from the sampled radii that defines a hierarchical decomposition, under which
	there exists
	an $(m, r)$-light solution $F'$ that includes all the points in $F$,
	and $\Eff$ defined in terms of $F'$ satisfies the cell property,
where
	\begin{compactitem}
		\item $\expct{w(F')} \leq (1+O(\epsilon)) \cdot w(F)$,
		\item $m:=O(\frac{s k L}{\epsilon})^k$ and $r:=O(1)^k \cdot q \log_{s}{\log{n}} + O(\frac{k}{\epsilon})^k +
	O(\frac{s}{\epsilon})^k$.
	\end{compactitem}
\end{lemma}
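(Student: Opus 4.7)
The plan is to build $F'$ from $F$ via two main passes. First, I pick a good hierarchical decomposition from the sampled radii and convert $F$ into a portal-respecting, $r$-light intermediate solution; second, I augment it with short merging edges at each basic cell to enforce the cell property. Both passes add only $O(\epsilon)\cdot w(F)$ in expected cost.

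For the first pass, for a single random configuration the $q$-sparsity of $F$ and the cut-probability bound in Definition~\ref{defn:single_decomp} imply that the expected number of $F$-edges cut by a fixed cluster at height $i$ is at most $O(q)$, because $w(F|_{B(u,3s^i)}) \le q\cdot s^i$ and each edge of length $d$ is cut with probability $O(d/s^i)$. With $O(k\log n)$ independent radii per net-point, I can amplify this via a union bound (as in~\cite{DBLP:conf/stoc/BartalGK12}) so that, with constant probability, every one of the $\poly(n)$ clusters has crossing count at most $O(1)^k\cdot q\cdot \log_s\log n$. After fixing such a configuration, each $F$-edge cut at height $i$ is detoured through the nearest portal of the cut cluster, incurring total expected extra cost $O(\epsilon)\cdot w(F)$ by the standard Arora/Talwar analysis. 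The resulting intermediate solution is $(m, r_0)$-light with $m=O(\frac{skL}{\epsilon})^k$ by Fact~\ref{fact:net} and $r_0=O(1)^k\cdot q\cdot\log_s\log n$.

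For the second pass, the key observation is that it suffices to enforce the cell property only with respect to $\Bas(C)$ at every cluster $C$. Indeed, by Lemma~\ref{lemma:virtual_cells_safe} no virtual cell is intersected by a crossing component of any descendant, so virtual cells satisfy the property trivially; a promoted cell of $C$ is a basic cell of some sub-cluster of $C$ (Definition~\ref{definition:pro_vir}); and a non-basic cell of $C$ is inherited from $\Bas\cup\Pro\cup\Vir$ at $\Par(C)$ (Definition~\ref{definition:nbas}). Together with Lemma~\ref{lemma:refinement}, this shows that enforcing the property on basic cells suffices. Accordingly, for each cluster $C$ at height $i$ and each $e\in \Dis(\Bas(C))$ owned by a crossing component $A$ of weight $l$, if $p$ distinct global crossing components of $C$ currently intersect $e$, I add $p-1$ merging edges inside $e$, each of length $O(\Diam(e))=O(h(i,l))$, and make them portal-respecting. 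Enforcing the \emph{global} (rather than local) property caps merges per cell at $p-1$, and each merge contributes at most one additional used portal to relevant sub-clusters, accounting for the $O(\frac{k}{\epsilon})^k+O(\frac{s}{\epsilon})^k$ term in the bound on $r$.

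The main obstacle is bounding the expected total cost of these merging edges by $O(\epsilon)\cdot w(F)$, and this is precisely where the adaptive cell-height function $h(i,l)$ is engineered for the charging argument. In the regime $i\in I_1(l)$ (so $l\ge s^i$) the per-merge cost $O(\gamma_1 s^i)$ is charged to $w(A)\ge s^i$; in the middle regime $i\in I_2(l)$ the per-merge cost $O(\gamma_1 l)$ is again charged to $w(A)$. In the regime $i\in I_3(l)$, where $l$ is very small, the tiny cell-diameter $\gamma_0 s^i$ combined with the packing bound on the number of such cells per cluster (Lemma~\ref{lemma:bound_eff}) and the $q$-sparsity of $F|_{B(u,3s^i)}$ keeps the total cost at height $i$ bounded by roughly $O(\gamma_0/\gamma_1)\cdot q\, s^i$ per critical net-point. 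Summing across all $L$ heights and using $\gamma_0=\Theta(\gamma_1\cdot\epsilon/(skL))$ yields total extra cost $O(\epsilon)\cdot w(F)$, so $\expct{w(F')}\le(1+O(\epsilon))\cdot w(F)$ as claimed, completing the proof.
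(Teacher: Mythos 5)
Your high-level plan matches the paper's proof: a first pass invoking the argument of \cite[Lemma 3.1]{DBLP:conf/stoc/BartalGK12} to get an $(m,r)$-light portal-respecting solution, and a second pass adding short merging edges inside basic cells, followed by the observation (via Lemmas \ref{lemma:virtual_cells_safe}, \ref{lemma:refinement}, \ref{lemma:exist_c_prime}, \ref{lemma:e_subset_e_prime}) that enforcing the cell property on basic cells automatically gives it for all effective cells. However, your charging argument for the second pass has a genuine gap.

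The central difficulty is that a single crossing component $A$ may own, and intersect, many basic cells at many different cluster heights, and every one of those cells may contain a merge. If each merge of cost $O(\gamma_1)\cdot w(A)$ is charged directly to $w(A)$, as you do in the $I_1$ and $I_2$ regimes, the total charge to $A$ is not $O(\epsilon)\cdot w(A)$ but rather scales with the number of cells owned by $A$ across all heights, which can be $L\cdot O(1/\gamma_1)^{O(k)}$ — far too large. The paper avoids this by charging each merge edge to one of the two components it joins, and then observing that the merges form a forest on the components (each merge strictly decreases the component count), so after a rearrangement each original component is charged \emph{at most once} in the whole procedure. The expected charge to a fixed $A$ is then $\sum_i p_i\, s^{i+1}$, where $p_i$ is the probability that the unique merge charged to $A$ lives in a cell of height $i$; these events are mutually exclusive over $i$, which is essential for the geometric sum to converge.

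You also omit the other crucial half of the argument: the claim that if $A$ is charged at a cell of height $i$ with $s^i > 2\gamma_1\,w(A)$, then the cluster whose basic cell this is must have height $h$ with $s^h \geq s^i/(2\gamma_0)$ (i.e., $h \in I_3(l')$ where $l'$ is the owner's weight, using $l'\le w(A)$). This forces $A$ to be cut at a much higher scale, which happens with probability $O(k\gamma_0)\cdot w(A)/s^i$, and is exactly what makes the tail sum $\sum_{i: s^i>2\gamma_1 w(A)} p_i\, s^{i+1} \le O(\gamma_0 k s L)\cdot w(A) = O(\epsilon)\cdot w(A)$ work. Your $I_3$-regime argument, which instead tries to bound the cost per height per net-point via sparsity and packing, does not produce this bound, and your stated parameter relation $\gamma_0 = \Theta(\gamma_1\cdot\epsilon/(skL))$ is inconsistent with the paper's definitions $\hat\gamma_0 = \Theta(\epsilon/(ks^2L))$ and $\hat\gamma_1 = \Theta(\epsilon/s^2)$, which give $\gamma_0/\gamma_1 = \Theta(1/(kL))$ — the factor of $\epsilon$ is already baked into both parameters, not just the ratio.
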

\begin{proof}
	We observe that the argument in~\cite[Lemma 3.1]{DBLP:conf/stoc/BartalGK12}
	readily gives an $(m, r)$-light solution $\widehat{F}$ with the desired $m$ and $r$,
	and also satisfies
	$\expct{w(\widehat{F})} \leq (1+\epsilon) \cdot w(F)$.

	We shall first show additional steps with additional cost at most $\epsilon w(F)$ in expectation,
	so that $\Bas$ defined in terms of the resultant solution satisfies the cell property.
	And then, we shall show that this
	\emph{implies} $\Eff$ defined in terms of the resultant solution also
	satisfies the cell property (hence no more additional cost caused).

\vspace{10pt}

	\noindent\textbf{Maintaining Cell Property: Basic Cells.}
	For $i := L, L-1, L-2, \ldots, 0$, for each height-$i$ cluster $C$,
	we examine $e \in \Dis(\Bas(C))$ in the non-decreasing order of its height.
	If there are at least two crossing components that intersect $e$,
	we add edges in $e$ to connect all crossing components that intersect $e$.
	We note that each added edge connects two components in $F$, and edges added are of length at most $\Diam(e)$.
	At the end of the procedure, we define the solution as $F'$.
	We observe that $\Bas$ defined in terms of $F'$ satisfies the cell property.

	Recall that each added edge connects two components.
	We charge the cost of the edge to one of the components that it connects to.
	Moreover, after a rearrangement (at the end of the procedure), we can make sure each edge is charged to
	one of the components it connects to and each component is charged at most once. 

	\noindent\textbf{Bounding The Cost.}
	We shall show that for a fixed component $A$, the expected cost it takes charge of is at most $\epsilon \cdot w(A)$.
	Define $l := w(A)$.
	The expected cost that $A$ takes is at most the following (up to contant)
	\begin{equation*}
		\sum_{i=1}^{L}{\Pr[\text{$A$ takes an edge in a cell of height $i$}] \cdot s^{i+1}}.
	\end{equation*}
	Define $p_i := \Pr[\text{$A$ takes an edge in a cell of height $i$}]$.
	Then,
	\begin{align*}
		\sum_{i=0}^{L}{p_i \cdot s^{i+1}}
		&\leq \sum_{i: s^i \leq 2\gamma_1 l}{s^{i+1}} + \sum_{i: s^i > 2\gamma_1 l}{p_i s^{i+1}} \\
		&\leq O(\gamma_1 s) l + \sum_{i : s^i > 2\gamma_1 l}{p_i s^{i+1}} \\
		&\leq O(\epsilon) l + \sum_{i : s^i > 2\gamma_1 l}{p_i s^{i+1}}.
	\end{align*}

	Fix an $i$ such that $s^i > 2\gamma_1 l$, and we shall upper bound $p_i$.
	Suppose in the event corresponding to $p_i$, $A$ takes charge of an edge inside a cell $e$ that is
	a basic cell of some height-$h$ cluster. Note that $h$ and $e$ are random and recall that the edge is inside a cell of height $i$.
	We shall give a lower bound of $h$.
	\begin{claim}
		$s^h \geq \frac{s^i}{2 \gamma_0}$.
	\end{claim}
	\begin{proof}
		Define the weight of the owner of $e$ to be $l'$.
		We first show that $h$ must be in $I_{3}(l')$.
		By the procedure of maintaining cell property,
	we know that $l' \leq l$. 

		If $h \in I_{1}(l')$, then $ \lfloor l' \rfloor_s \geq s^h $, and $s^i  \leq 2\gamma_1 s^h$ by $e$ is of height $i$ and the choice of radius in the single-scale decomposition.
		This implies that
		$s^{i} \leq 2\gamma_1 s^h \leq 2\gamma_1 l$, 
		which cannot happen since we assume $s^i > 2\gamma_1 l$.

		If $h \in I_{2}(l')$, then $s^i \leq 2\gamma_1 \lfloor l' \rfloor_s$.
		This implies that $s^i \leq 2\gamma_1 l$, which cannot happen as well.
		
		Therefore, $h \in I_{3}(l')$. This implies that $2\gamma_0 s^h \geq s^i$.
	\end{proof}
	Since the event that the edge is taken by $A$ automatically implies that $A$ is cut by a height-$h$ cluster, 
	and the probability that $A$ is cut at a height-$j$ cluster is at most
	$O(k)\cdot \frac{l}{s^j}$ for $j\in [L]$,
	we conclude that
	\begin{equation*}
	p_i \leq \sum_{j : s^{j} \geq \frac{s^i}{2\gamma_0 }}{\Pr[\text{$A$ is cut at height $j$}]}
	\leq O(k) \cdot \sum_{j : s^{j} \geq \frac{s^i}{2\gamma_0}}{\frac{l}{s^j}} \leq O(\gamma_0 k)\cdot \frac{l}{s^i}.
	\end{equation*}
	Hence $\sum_{i : s^i > 2\gamma_1 l}{p_i s^{i+1}} \leq O(\gamma_0 k s L) \cdot l \leq O(\epsilon) l$.

\vspace{10pt}

	\noindent\textbf{Maintaining Cell Property: Effective Cells.}
	Next we show that $\Bas$ defined in terms of $F'$ satisfies the cell property
	implies that $\Eff$ defined in terms of $F'$ also satisfies the cell property.

	Fix a cluster $C$ and fix $e \in \Dis(\Eff(C))$. We shall prove that 
	there is at most one crossing component of $C$ that intersects $e$ in $F'$.
	Suppose $e$ is induced by $u$ in $\Eff(C)$. 


	\begin{lemma}
	\label{lemma:exist_c_prime}
		If there is no cluster $\widehat{C}$ such that $C \subset \widehat{C}$ and $u \in \Vir(\widehat{C})$,
		then there exists cluster $C'$ such that $u \in \Bas(C')$,
		$\Ht(C') \leq \Ht(C)$ and $\Eff(C)$ is a refinement of $\Des(u) \cap \Bas(C')$.
	\end{lemma}
	\begin{proof}
		If $u \in \Bas(C)$, then we define $C' = C$, and the Lemma follows.

		If $u \in \NBas(C)$, then there exists $C''$ such that $C \subset C''$ and $u \in \Pro(C'')$.
		This is by the definition of non-basic cells,
		and by the assumption that
		there is not cluster $\widehat{C}$ such that $C \subset \widehat{C}$ and $u \in \Vir(\widehat{C})$.
		Then by the definition of the promoted cells, there exists cluster $C'$ such that $u \in \Bas(C')$, $\Ht(C') < \Ht(C'')$,
		and $\Des(u) \cap \Bas(C') \subset \Eff(C'')$. Since $u \in \NBas(C) \subset \Eff(C)$ and by Lemma~\ref{lemma:refinement},
		we know that $\Eff(C)$ is a refinement of $\Des(u) \cap \Bas(C')$. Hence, it remains to show $\Ht(C') \leq \Ht(C)$.

		Suppose for contradiction that $\Ht(C') > \Ht(C)$, so $\Ht(C) < \Ht(C') < \Ht(C'')$.
		By the definition of non-basic cells, we have that $\NBas(C') \cap \Bas(C') = \emptyset$.
		Since $u \in \Bas(C')$, we know that $u \notin \NBas(C')$. However, this implies that $u \notin \NBas(C)$,
		which contradicts with the assumption that $u \in \NBas(C)$.
\end{proof}

If there exists cluster $\widehat{C}$ such that $u \in \Vir(\widehat{C})$ and $C \subset \widehat{C}$,
then by Lemma~\ref{lemma:virtual_cells_safe}, there is no crossing component of $C$ in $F'$ that intersects $e$.

Otherwise, there is no cluster $\widehat{C}$ such that $u \in \Vir(\widehat{C})$ and $C \subset \widehat{C}$.
By Lemma~\ref{lemma:exist_c_prime}, there exists a cluster $C'$ such that $u \in \Bas(C')$, $\Ht(C') \leq \Ht(C)$
and $\Eff(C)$ is a refinement of $\Des(u) \cap \Bas(C')$.
We pick any one of such $C'$.
Define $e' \in \Dis(\Bas(C'))$ as the one induced by $u$ in $\Bas(C')$.
Since $\Bas$ defined in terms of $F'$ satisfies the cell property,
there is at most one crossing component of $C'$ that intersects $e'$.
	\begin{lemma}
	\label{lemma:e_subset_e_prime}
		$e \subset e'$.
	\end{lemma}
	\begin{proof}
		Recall that $e \in \Dis(\Eff(C))$ is induced by $u$ in $\Eff(C)$, and $e' \in \Dis(\Bas(C'))$ is induced by $u$ in $\Bas(C')$.
		Then we can write $e = u\backslash P$ and $e' = u \backslash P'$ such that
		$P \subset \Eff(C)$ and $P' \subset \Bas(C')$.
		Since $P' = \Des(u) \cap \Bas(C')$, and $\Eff(C)$ is a refinement of $\Des(u) \cap \Bas(C')$,
		we know that $P' \subset P$. This implies that $e \subset e'$.
	\end{proof}
	
	Since $\Ht(C) \geq \Ht(C')$, any crossing component of $C$ is also a crossing component of $C'$. Moreover, Lemma~\ref{lemma:e_subset_e_prime} implies that
	$e \subset e'$. Hence, if there are two crossing components $A_1, A_2$ of $C$ that intersect $e$, then $A_1$ and $A_2$ are also crossing
	components of $C'$ and both of them intersect $e'$.
	However, this cannot happen since $\Bas$ satisfies the cell property, and there is at most one crossing component in $C'$
	that intersects $e'$.
	Therefore, there is at most one crossing component of $C$ that intersects $e$.
\end{proof}

\subsection{Dynamic Program}
\label{section:dp_alg}
Recall that the input of \DP is an instance that has a $q$-sparse optimal
net-respecting solution, where $q \leq O(s)^{O(k)}\cdot q_0$, by Lemma~\ref{lemma:subinstance} and Corollary~\ref{cor:threshold}.
In the \DP algorithm, 
$O(k\log{n})$ random radii are independently sampled for each $u \in N_i$, $i \in [L]$, and then
a dynamic programming based algorithm is used to find a near optimal \SFP
solution over all hierarchical decompositions defined by the radii.
In this section, we shall describe in detail the dynamic program and an algorithm that solves the dynamic program efficiently. For completeness, we shall also analyze the correctness of the dynamic program.

We first describe the information needed
to identify each cluster at each height.

\noindent \textbf{Information to Identify a Cluster.}
Each cluster is identified by the following information.
\begin{compactitem}
\item[1.] Height $i$ and cluster center $u \in N_i$.
This has $L \cdot O(n^k)$ combinations, recalling that
$|N_i| \leq O(n^k)$.
\item[2.] For each $j \geq i$, and $v \in N_j$ such
that $d(u,v) \leq O(s^j)$, the random radius chosen by $(v,j)$.
Observe that the space around $B(u,O(s^i))$
can be cut by net-points in the same or higher heights that are nearby
with respect to their distance scales.
As argued in~\cite{DBLP:conf/stoc/BartalGK12},
the number of configurations
that are relevant to $(u,i)$ is at most $O(k \log n)^{L \cdot O(1)^k} = n^{O(1)^k}$, where $L = O(\log_s n)$
and $s = (\log n)^{\Theta(\frac{1}{k})}$.
\item[3.] For each $j > i$,
which cluster at height $j$ (specified
by the cluster center $v_j \in N_j$)
contains
the current cluster at height $i$.
This has $O(1)^{kL} = n^{O(\frac{k^2}{\log \log n})}$ combinations.
\end{compactitem}

To define the dynamic program, we start by defining the entries. 

\noindent\textbf{Entries of \DP.}
We define entries as $(C, (R, Y), (\BAS, \NBAS), (g, P))$.
Define $U := \Dis(\BAS\cup \NBAS)$. We define the following
\emph{internal constraints} for entries, where the parameters
$m, r$ are as defined in Lemma~\ref{lemma:struct_property},
and $\rho$ is as defined in Lemma~\ref{lemma:bound_eff}.
\begin{compactitem}
	\item $C$ is a cluster. 
	\item $R$ is a subset of the $m$ pre-defined portals, such that $|R| \leq r$. This intends to denote the active portals.
	\item $Y \subset 2^{R}$ is a partition of $R$. We intend to use it to record the subsets of portals that are connected inside $C$.
	\item $\BAS$ and $\NBAS$ are collections of sub-clusters of $C$ such that $\BAS \cap \NBAS = \emptyset$ and
		$|\BAS \cup \NBAS| \leq \rho$, and the centers of the clusters in $\BAS \cup \NBAS$ are chosen from $\Can(C)$.
		Moreover, $e \in \BAS$ implies that any sibling cluster of $e$ is in $\BAS \cup \NBAS$.
		We intend to use this to record the basic cells and non-basic cells.
	\item $g$ is a mapping from $U$ to $2^{Y}$. For some $e \in U$, we intend to use $g(e)$ to denote the portals that $e$ connects to inside $C$.
	\item $P \subset 2^{Y}$ is a partition of $Y$, such that $\forall e \in U$, $g(e) = Q$ implies that $Q$ is a subset of a part in $P$.
		The intended use of $P$ is to denote the portals that are to be connected outside $C$.
\end{compactitem}
We only consider the entries that satisfy the internal constraints.
We capture the intended use of an entry formally as follows.

\begin{definition}[Compatibility]
	Suppose $F$ is a graph on the metric space, and $E$ is an entry.
	Let $E := (C, (R, Y), (\BAS, \NBAS), (g, P))$.
	Define $F' := F|_{C \cup R}$. We say $F$ is \emph{compatible} to $E$, if $F'$ satisfies the following.
	\begin{compactitem}
	\item[1.] A part $y$ is in $Y$, if and only if $F'$ connects all the portals in $y$.
	\item[2.] $\BAS$ covers all components of $F'$ that intersect $R$.
	\item[3.] For $e \in U$, $g(e)$ is exactly the collection of subsets of $Y$ that $e$ is connected to by $F'$.
	\item[4.] Every terminal in $C$ is visited by $F'$.
	\item[5.] Every isolated terminal of $C$ is connected to at least one portal in $R$ by $F'$.
	\item[6.] Every terminal pair that both lie in $C$ is either in the same component of $F'$,
		or they are connected to $y_1$ and $y_2$ in $Y$ by $F'$ and $\{ y_1, y_2 \}$ is a subset of a part in $P$.
	\end{compactitem}
\end{definition}

We bound the number of entries in the following lemma.

\begin{lemma}[Number of Entries]
\label{lemma:num_entries}
	There are at most $O(n^{O(1)^k})\cdot O(\kappa mr)^{O(k)^k \cdot \rho r}$ number of entries.
	Moreover, for any fixed cluster $C$, the number of entries with
	$C$ as the cluster is at most $O(\kappa mr)^{O(k)^k \cdot \rho r}$. ($\kappa$ is defined as in Lemma~\ref{lemma:bound_can}.)
\end{lemma}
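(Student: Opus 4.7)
The plan is to bound the number of entries $(C,(R,Y),(\BAS,\NBAS),(g,P))$ by bounding each component separately, first fixing a cluster~$C$ and then multiplying by the number of possible clusters.

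First, I would count clusters using the three ingredients listed at the start of Section~\ref{section:dp_alg}: (i)~a center $(u,i)$, contributing $L\cdot O(n^k)$; (ii)~the relevant random radii around $u$ in the hierarchical decomposition, contributing $n^{O(1)^k}$; and (iii)~the chain of ancestor centers, contributing $n^{O(k^2/\log\log n)}$. Multiplying gives $O(n^{O(1)^k})$ clusters in total.

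Now fix a cluster $C$ and count each remaining component:
\begin{compactitem}
\item[(a)] The set $R$ of active portals is a subset of the $m$ pre-defined portals of size at most $r$, giving at most $\binom{m}{r}\le m^{r}$ choices.
\item[(b)] The partition $Y$ of $R$ is counted by the Bell number, which is at most $|R|^{|R|}\le r^{r}$.
\item[(c)] For $(\BAS,\NBAS)$, Lemma~\ref{lemma:bound_can} restricts every sub-cluster appearing in $\BAS\cup\NBAS$ to have its center in $\Can(C)$, and $|\Can(C)|\le\kappa$; together with the cardinality bound $|\BAS\cup\NBAS|\le\rho$ from Lemma~\ref{lemma:bound_eff}, we obtain at most $\binom{\kappa}{\rho}\cdot 2^{\rho}\le(2\kappa)^{\rho}$ ordered pairs of disjoint collections.
\item[(d)] The mapping $g:U\to 2^{Y}$ is specified by a subset of $Y$ for each of the at most $\rho$ elements of $U=\Dis(\BAS\cup\NBAS)$. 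Since $|Y|\le|R|\le r$, this gives at most $(2^{r})^{\rho}=2^{r\rho}$ choices.
\item[(e)] The partition $P$ of $Y$ is again bounded by a Bell number, at most $r^{r}$.
\end{compactitem}

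Multiplying (a)--(e) and absorbing the constants yields a per-cluster bound of the form $m^{r}\cdot r^{O(r)}\cdot\kappa^{O(\rho)}\cdot 2^{O(r\rho)}$. Since $\kappa, m, r\ge 2$, this is at most $(\kappa mr)^{O(r\rho)}$, and the extra $O(k)^{k}$ factor in the exponent absorbs the constants hidden in Lemmas~\ref{lemma:bound_can} and~\ref{lemma:bound_eff} (which themselves carry a $k$-dependent exponent). Multiplying by the cluster count $O(n^{O(1)^{k}})$ gives the claimed total bound, while the per-cluster count is exactly the second claim.

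The routine part is the counting itself; the only mildly delicate point is verifying that the definition of $g$ really uses $2^{Y}$ (so that each element of $U$ contributes a factor of $2^{r}$ rather than $2^{2^{r}}$) and that the constraint ``$g(e)=Q$ implies $Q$ lies within a single part of~$P$'' does not increase the count. The main obstacle is simply bookkeeping: matching the combined exponent to the target $O(k)^{k}\cdot\rho r$, which amounts to checking that each of the five counts is at most $(\kappa mr)^{O(k)^{k}\cdot\rho r}$ on its own and that no hidden exponent blows up.
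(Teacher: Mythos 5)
Your counting of components (a), (b), (d), (e) matches the paper's argument, and so does the cluster count. But step (c), the count of $(\BAS,\NBAS)$, has a genuine gap: you conclude $\binom{\kappa}{\rho}\cdot 2^\rho\le(2\kappa)^\rho$ by treating each element of $\BAS\cup\NBAS$ as determined by its center in $\Can(C)$. This is not the case. An element of $\BAS\cup\NBAS$ is a \emph{sub-cluster} of $C$, and in this dynamic program the hierarchical decomposition is not fixed in advance: each net-point samples $O(k\log n)$ radii, and the DP ranges over all resulting decompositions. The identity of the cluster $C$ (as described in the "Information to Identify a Cluster" paragraph) pins down the radii only at heights $\ge \Ht(C)$; the radii at lower heights, which determine the exact shape of a sub-cluster centered at a given $(u,j)\in\Can(C)$, are free. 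So a chosen center does \emph{not} determine the cell, and the number of configurations per center must be counted.

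The paper handles this explicitly: for a center $u\in\Can(C)$ of height $i_u$, the relevant radii lie at heights between $i_u$ and $\Ht(C)$, a window of width $O(\log_s(1/\gamma_0^2))$, and for each such height only $O(1)^k$ nearby centers can affect the cell's shape; with $O(k\log n)$ radii per center this gives $O(k\log n)^{O(1)^k\cdot\log_s(1/\gamma_0^2)}\le O(k\log n)^{O(k)^k}$ configurations per cell, hence an extra factor of $O(k\log n)^{O(k)^k\cdot\rho}$ that your count omits. This is precisely where the $O(k)^k$ in the exponent comes from; it is not there to "absorb the constants hidden in Lemmas~\ref{lemma:bound_can} and~\ref{lemma:bound_eff}" as you suggest (those constants already live inside $\kappa$ and $\rho$). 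Your final bound $(\kappa mr)^{O(k)^k\cdot\rho r}$ happens to be large enough to dominate the missing factor (since $\kappa\ge L\ge \Omega(\log n/\log\log n)$), but the argument as written does not justify it.
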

\begin{proof}
	Since $R$ is a set of at most $r$ portals chosen from $m$ pre-defined portals,
	there are at most $O(m^r)$ possibilities of $R$.
	Then after $R$ is fixed, there are $O(r^r)$ possibilities of $Y$, since $Y$ is a partition of $Y$ and $|R| \leq r$.

	To count the number of $\BAS$ and $\NBAS$, we count the union
	$S := \BAS\cup \NBAS$ of them, and then for any fixed $S$ we count the number of ways to assign elements
	in $S$ to $\BAS$ and $\NBAS$.
	Since it is required that the centers of clusters in $S$ are chosen from $\Can(C)$,
	to form $S$, we first choose at most $\rho$ centers from $\Can(C)$.
	There are at most $O(\kappa^{\rho})$ possibilities for this, by Lemma~\ref{lemma:bound_can}.
	For each chosen center $u$ that is of height $i_u$,
	we count the number of configurations of the cluster $C_u$ centered at $u$.
	Since $C$ is already fixed, we only need to consider relevant radii for clusters of height
	less than $\Ht(C)$ and at least $i_u$. Since $u \in \Can(C)$,
	and for $j \geq i_u$ there are $O(1)^k$ clusters of height-$j$ can affect $u$,
	we conclude that there are at most $O(k \log{n})^{O(1)^k \cdot \log_{s}{(\frac{1}{\gamma_{0}^2})}} \leq O(k \log{n})^{O(k)^k}$
	configurations for $C_u$.
	Since $|S|\leq \rho$, there are at most $O(k \log{n})^{O(k)^k \cdot \rho}$ configurations for all clusters
	in $S$, for any given the centers.
	Therefore, there are $O(\kappa)^{O(k)^k \cdot \rho}$ possibilities for $S$ in total.
	Then we assign elements in $S$ to one of $\BAS, \NBAS$, and there are at most $2^{|S|} \leq 2^{\rho}$ number of them.
	In conclusion, the number of possibilities for $\BAS$ and $\NBAS$ is at most
	$O(\kappa)^{O(k)^k\cdot \rho}$.

	With $S$ fixed, we count the number of possibilities of $g$. Since $g$ is a mapping from $U$ to $2^Y$,
	the number of such a mapping is at most $O((2^{|Y|})^{|U|}) \leq O(2^{\rho \cdot r})$.
	Finally, observe that $P$ is a partition of $Y$, and $|Y| \leq r$.
	This implies that $P$ has at most $O(r^r)$ possibilities.

	Therefore, after fixing $C$, there are at most $O(\kappa mr)^{O(k)^k \cdot \rho r}$ possibilities for $((R, Y), (\BAS, \NBAS), (g, P))$.

	We then count the number of possibilities of $C$. Observe that there are $O(n)^{O(k)}$ centers for $C$.
	For a fixed center, since the number of configurations is at most $n^{O(1)^k}$,
	we conclude that there are at most $O(n^{O(1)^k})\cdot O(\kappa mr)^{O(k)^k \cdot \rho r}$ entries in total.
\end{proof}

After we define the entries, we shall (recursively) define the value that is associated with each entry.
The intended value of an entry $E$ is the weight of the minimum graph that is \emph{recursively compatible}
to $E$ (see definition~\ref{definition:recursively_compatible}).

\begin{definition}[Child Entry Collection]
Suppose $E := (C, (R, Y), (\BAS, \NBAS), (g, P))$ is an entry.
We say a collection of entries $\{ (C_i, (R_i, Y_i), (\BAS_i, \NBAS_i), (g_i, P_i)) \}_i$
is a child entry collection of $E$, if $\{C_i\}_i$ is a partition of $C$ with $\Ht(C_i) = \Ht(C) - 1 $ for all $i$.
\end{definition}

\begin{definition}[Portal Graph]
We say a graph $G$ is a portal graph of a collection of entries $I := \{ (C_i, (R_i, Y_i), (\BAS_i, \NBAS_i), (g_i, P_i))\}_i$,
if the vertex set of $G$ is $\bigcup_{i}{R_i}$.
\end{definition}

\begin{definition}[Consistency Checking]
	\label{definition:consistency_checking}
Suppose $E:=(C, (R, Y), (\BAS, \NBAS), (g, P))$ is an entry, and $I := \{(C_i, (R_i, Y_i), (\BAS_i, \NBAS_i), (g_i, P_i))\}_i$
is a child entry collection of $E$ and $G$ is a portal graph of $I$.
We say $G$ and $I$ are \emph{consistent} with $E$,
if all checks in the following procedure are passed.
\begin{compactitem}
	\item[1.] Check if $\bigcup_{i}{(C_i \cup R_i)} = C \cup R$.
	\item[2.] We shall define $Y'$ to be a partition of $R' := \bigcup_{i}{R_i}$. 
		Initialize $Y' := \bigcup_{i}{Y_i}$, and whenever
		there are $y_1, y_2 \in Y'$ connected by $G$ or $y_1 \cap y_2 \neq \emptyset$,
		replace them by the union of them. Check if $Y'$ restricted to $R$ is exactly $Y$.
	\item[3.] For each $e \in \BAS$, check if there exists $i$ and $e' \in \BAS_i$,
		such that $e' = e$ or $e'$ is a child cluster of $e$.
	\item[4.] For each $e \in \NBAS$, check if either there exists $i$ and $e' \in \BAS_i \cup \NBAS_i$
		such that $e = e'$, or all child clusters of $e$ are in $\bigcup_{i}{(\BAS_i \cup \NBAS_i)}$.
	\item[5.] Define $g_{i}'$ to be a mapping from $U_i$ to $2^{Y}$, where
		$g_{i}'(e) := \{ y \cap R \mid y \in Y' \land \exists y' : (y' \in g_{i}(e) \land  y \cap y' \neq \emptyset) \}$, for $e \in U_i$.
		Here $g_{i}'(e)$ intends to mean the parts in $Y$ that $e$ connects to,
		which is defined by ``extending'' $g_{i}(e)$ with respect to $G$.
		For each $i$ and $u \in \BAS_i$, if there exists $e\in U_i$ such that $e \subset u$ and $g_{i}'(e) \neq \emptyset$,
		then check if there exists $u' \in \BAS$ such that $u = u'$ or $u$ is a child cluster of $u'$. 
	\item[6.] Define a mapping $g'$ from $U$ to $2^{Y}$,
		where $g'(e) := \bigcup_{i}{\bigcup_{e' \in U_i: e'\subset e}{g_{i}'(e')}}$,
		for $e \in U$. Check if $g'$ is exactly $g$. We observe that here we consider $e' \subset e$ only, and we shall see later
		why this is sufficient.
	\item[7.] For each $i$, for each $y_1 , y_2 \in Y_i$ ($y_1 \neq y_2$) such that $y_1, y_2$ are in the same part of $P_i$,
		check if either there exists $y \in Y'$ such that $y_1 \cup y_2 \subset y$,
		or there exists $y_{1}', y_{2}' \in Y'$ such that $y_{1}' \neq y_{2}'$, $y_1 \subset y_{1}'$, $y_2 \subset y_{2}'$, $y_{1}' \cap R \neq \emptyset$,
		$y_{2}' \cap R \neq \emptyset$, and $\{y_{1}' \cap R, y_{2}' \cap R\}$ is a subset of a part in $P$.
		This intends to check if the parts in $P_i$ are connected by $G$, or the information in $P_i$'s is passed to $P$.
	\item[8.] For each terminal pair $(a, b)$ such that $a \in C_i$ and $b \in C_j$ for $i\neq j$,
		suppose $a \in e_i$ and $b \in e_j$ for $e_i \in U_i$ and $e_j \in U_j$.
		Check if $g_i(e_i)$ is connected by $G$ to $g_j(e_j)$,
		or if $g_{i}'(e_i) \neq \emptyset$, $g_{j}'(e_j)\neq \emptyset$,
		$ g_{i}'(e_i) \cup g_{j}'(e_j)  $ is a subset of a part in $P$. This intends to check if $(a, b)$ are already connected by $G$,
		or otherwise they will be connected outside $C$.
	\item[9.] For each isolated terminal $a$ in $C$, check if there exists $i$ and $e\in U_i$,
		such that $a \in e $ and $g_{i}'(e)$ is non-empty.
\end{compactitem}
\end{definition}

\begin{definition}[Recursive Compatibility]
	\label{definition:recursively_compatible}
	Suppose $E := (C, (R, Y), (\BAS, \NBAS), (g, P))$ is an entry, and $F$ is some graph on the metric space.
	$F$ is \emph{recursively compatible} with $E$, if there exists a set $S$ of entries
	with $E \in S$ and with a unique entry in $S$ that corresponds to each descendant cluster of $C$,
	such that the following requirements hold.
	\begin{compactitem}
		\item For each $E' := (C', (R', Y'), (\BAS', \NBAS'), (g', P'))$ in $S$,
			we require $F' := F|_{C' \cup R'}$ be compatible to $E'$.
		\item For each $E' := (C', (R', Y'), (\BAS', \NBAS'), (g', P'))$ in $S$,
			suppose the child entry collection that consisting of elements in $S$ is $I'$,
			and define $I' := \{(C_t, (R_t, Y_t), (\BAS_t, \NBAS_t), (g_t, P_t))\}_t$.
			Define $G' := F|_{\bigcup_{t}{R_t}}$. (Note that $G'$ is a portal graph of $I'$.)
			We require $I'$ and $G'$ be consistent with $E'$.
	\end{compactitem}
\end{definition}

\noindent\textbf{Value of Entries.}
For any entry $E := (C, (R, Y), (\BAS, \NBAS), (g, P))$, we shall define its value $\Val(E)$.
The height-$0$ clusters are corresponding to the base cases. 
In particular, for any $C := \{x\}$ that is a height-$0$ cluster,
we define entries with such $C$ and with
$\BAS := \{ C \}$, $\NBAS := \emptyset$, 
$R := C$, $Y := \{ R \}$, $g(C) := Y$, $P := \{ Y \}$ to be the base entries.
All base entries have value $0$. All other (non-base) entries with height-$0$ clusters have
value $\infty$. 

We then define $\Val(E)$ when $\Ht(C) \neq 0$.
Define $\mathcal{I}_{E}$ to be the set of tuples $(I, G)$,
such that $I$ is a child entry collection of $E$ and $G$ is a portal graph of $I$, and $I, G$ are consistent.
The value of $E$ is defined as $\Val(E) := \min_{(I, G) \in \mathcal{I}_{E}}\{w(G) + \Val(I)\}$, where $\Val(I) = \sum_{E' \in I}{\Val(E')}$.
As we shall see in Lemma~\ref{lemma:min_compatible_value}, for any entry $E$, if $\Val(E)\neq \infty$,
then there actually exists a graph that is recursively compatible to
$E$ with weight $\Val(E)$.

\begin{lemma}[Counting $\mathcal{I}_E$]
	\label{lemma:bound_I_E}
	For any entry $E$, the number of possibilities of $\mathcal{I}_E$ is at most
	$O(k\log{n})^{O(s)^k} \cdot O(\kappa m r)^{O(sk)^{O(k)} \cdot \rho r^2}$,
	where $\kappa$ is defined as in Lemma~\ref{lemma:bound_can}.
\end{lemma}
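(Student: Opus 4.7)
The plan is to enumerate a tuple $(I,G) \in \mathcal{I}_E$ in three stages: first fix the identities of the child clusters of $C$ (which is tantamount to fixing the relevant radii at height $\Ht(C)-1$); then, for each child cluster, enumerate the remaining entry data; finally, enumerate the portal graph $G$ on $\bigcup_i R_i$. Multiplying the three counts should yield the stated bound.

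Stage one (configurations of child clusters). With $C$ fixed, a cluster $C_i$ at height $\Ht(C)-1$ that arises as a child of $C$ must have its center $u_{C_i} \in N_{\Ht(C)-1}$ lying within distance $O(s^{\Ht(C)})$ of the center of $C$; by the packing property (Fact~\ref{fact:net}), there are at most $O(s)^k$ such candidate centers. Given $C$'s configuration, the only additional random choices needed to determine the partition of $C$ into children and to determine each child's configuration are the radii at height $\Ht(C)-1$ sampled by these candidate centers. Each such center samples $O(k\log n)$ independent radii, so the total number of relevant joint radius choices is at most $O(k\log n)^{O(s)^k}$, giving the first factor.

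Stage two (remaining entry data per child). Once the child clusters $\{C_i\}_i$ are fixed, Lemma~\ref{lemma:num_entries} tells us that for each fixed cluster $C_i$ the number of entries with $C_i$ as the cluster coordinate is at most $O(\kappa mr)^{O(k)^k \cdot \rho r}$. Since there are at most $K = O(s)^k$ child clusters, the number of ways to choose the whole tuple $I$ once the child clusters are fixed is at most
\[
\Bigl(O(\kappa mr)^{O(k)^k \cdot \rho r}\Bigr)^{O(s)^k} \;\leq\; O(\kappa mr)^{O(sk)^{O(k)} \cdot \rho r}.
\]
Stage three (portal graph). The vertex set of $G$ has size $|\bigcup_i R_i| \leq K\cdot r \leq O(s)^k \cdot r$, so the number of simple graphs on it is at most $2^{(Kr)^2} \leq 2^{O(s^{2k}) \cdot r^2}$, which is absorbed into a factor $O(\kappa mr)^{O(sk)^{O(k)} \cdot r^2}$ (and hence into the $\rho r^2$ exponent since $\rho \geq 1$). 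Multiplying the three bounds yields the claimed $O(k\log n)^{O(s)^k}\cdot O(\kappa mr)^{O(sk)^{O(k)}\cdot \rho r^2}$.

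The only delicate point I anticipate is stage one: I need to make sure that simultaneously fixing the radii at height $\Ht(C)-1$ for the $O(s)^k$ candidate centers really does pin down all the child clusters together with each child's configuration in the sense of the identification in Section~\ref{section:dp_alg}, so that stage two can then invoke Lemma~\ref{lemma:num_entries} independently for each child without overcounting. This is the only place where the specifics of how ``configuration'' is encoded matter, and it relies on the fact that any radius at height $\geq \Ht(C)$ that is relevant to a child $C_i$ was already fixed when we fixed $C$ itself.
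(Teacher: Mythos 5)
Your proof is correct and follows essentially the same three-stage decomposition as the paper's: count the joint choices of radii that determine the child clusters ($O(k\log n)^{O(s)^k}$), then invoke Lemma~\ref{lemma:num_entries} per child to count the remaining entry data, then count portal graphs on $\bigcup_i R_i$ and absorb that factor into the exponent. The worry you flag at the end is resolved exactly as you suspect: the configuration of a child $C_i$ (item~2 of the cluster-identification data) involves only radii at heights $\geq \Ht(C)-1$ of nearby centers, and all those at heights $\geq \Ht(C)$ are already fixed by $C$'s own configuration, so the height-$(\Ht(C)-1)$ radii chosen in stage one indeed pin down each child's configuration without overcounting.
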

\begin{proof}
	Define $E := (C, (R, Y), (\BAS, \NBAS), (g, P))$.
	We first bound the number of possibilities of
	child entry collections $I := \{ (C_i, (R_i, Y_i), (\BAS_i, \NBAS_i), (g_i, P_i)) \}_i$ of $C$.
	To define $I$, we start by defining $\{ C_i \}_i$.
	By packing property, there are at most $O(s)^k$ centers for the child clusters of $C$.
	For each center $u$ of the child cluster, there are at most $O(k\log{n})$ possible radii.
	Hence, there are at most $O(k\log{n})^{O(s)^k}$ possibilities for $\{C_i\}_i$.

	By Lemma~\ref{lemma:num_entries}, there are at most $Z$ possibilities
	for $((R_i, Y_i), (\BAS_i, \NBAS_i), (g_i, P_i))$ for any fixed $C_i$, where $Z := O(\kappa mr)^{O(k)^k \cdot \rho r}$.
	Therefore, there are at most
	$O(k\log{n})^{O(s)^k} \cdot Z^{O(s)^k}$ possibilities of $I$.
	
	For a fixed $I$, the vertex set of the portal graph $G$ of $I$ is fixed, and there are at most $O(s)^k \cdot r$
	vertices in $G$. Then the number of possibilities of $G$ for a fixed $I$ is at most the number of edge sets, and it is at most
	$2^{O(s)^{O(k)} \cdot r^2}$ since there are at most $(O(s)^k \cdot r)^2$ edges.

	In conclusion, there are at most
	$O(k\log{n})^{O(s)^k} \cdot Z^{O(s)^k} \cdot 2^{O(s)^{O(k)} \cdot r^2}$ possibilities
	of $\mathcal{I}_E$, which is at most $O(k\log{n})^{O(s)^k} \cdot O(\kappa m r)^{O(sk)^{O(k)} \cdot \rho r^2}$.
\end{proof}

\noindent\textbf{Final Entry.}
The final entry is the entry with
$C$ being the root cluster, $R$, $\BAS$, $\NBAS$ to be $\emptyset$,
and $Y, g, P$ being uniquely defined from $R, \BAS, \NBAS = \emptyset$.
We use the value of the final entry as the output of \DP.

\noindent\textbf{Evaluating The Final Entry}
Although we only care about the value of the final entry, it may be necessary to evaluate the value of other entries.
We shall define a (recursive) algorithm in Definition~\ref{definition:alg_value}
that takes an entry and returns the value of the input.
To get the value of the final entry which is the output of \DP,
we invoke the algorithm with the final entry as the input.

We note that the counting argument in Lemma~\ref{lemma:num_entries} and Lemma~\ref{lemma:bound_I_E} 
can both be naturally implemented as algorithms, with additional $O(n^{O(k)})$ factors in the running time
compared with the corresponding counting bounds. We will make use of these implementations as subroutines in 
Definition~\ref{definition:alg_value}. Moreover, the natural implementation of the consistency checking procedure in
Definition~\ref{definition:consistency_checking} runs in time $O(n^{O(k)})$.
\begin{definition}[Algorithm for Evaluating Value of Entries]
\label{definition:alg_value}
We define a recursive procedure that evaluates the value of an input entry $E := (C, (R, Y), (\BAS, \NBAS), (g, P))$.
	\begin{compactitem}
		\item If $\Ht(C) = 0$, then the value of it is already defined, and we return its value.
		\item If $\Ht(C)>0$ and $\Val(E)$ is already calculated, then we return the calculated value.
		\item Otherwise, $\Ht(C)>0$ and $\Val(E)$ has not yet calculated. The following procedure is executed.
			\begin{compactitem}
				\item[1.] Set the default value for $\Val(E) := \infty$.
				\item[2.] Calculate $\mathcal{I}_E$.
				\item[3.] For each element $(I, G) \in \mathcal{I}_E$, use
				the consistency checking procedure defined in Definition~\ref{definition:consistency_checking}
				to check if $I$ and $G$ are consistent with $E$.
				If they are consistent, then recursively use this procedure to calculate $\Val(I) + w(G)$,
				and update $\Val(E)$ if $\Val(E) + w(G)$ is smaller than $\Val(E)$.
				\item[4.] Finally, return $\Val(E)$ as the output.
			\end{compactitem}
	\end{compactitem}
\end{definition}

\begin{lemma}[Running Time]
	\label{lemma:running_time}
	The running time for the algorithm defined in Definition~\ref{definition:alg_value} is at most
	$O(n^{O(1)^k}) \cdot \exp(\sqrt{\log{n}}\cdot O(\frac{k}{\epsilon})^{O(k)})$.
\end{lemma}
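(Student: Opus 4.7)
\begin{proofof}{Lemma~\ref{lemma:running_time} (Proposal)}
The plan is to bound the running time by the product of two quantities: the total number of distinct DP entries that can ever be evaluated, and the maximum time spent evaluating a single entry (which already accounts for lookup of its sub-entries, since by memoization each entry is computed from scratch at most once). Thus I intend to write
\[
\text{Total time} \;\leq\; \bigl(\text{\#entries}\bigr) \cdot \max_E \bigl(\text{time to evaluate $E$ given cached sub-values}\bigr).
\]

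First I would invoke Lemma~\ref{lemma:num_entries} to bound the number of entries by $O(n^{O(1)^k}) \cdot O(\kappa m r)^{O(k)^k \cdot \rho r}$. Next I would analyze the per-entry cost in the procedure of Definition~\ref{definition:alg_value}: after checking if $E$ is cached, the algorithm enumerates $\mathcal{I}_E$, and for each tuple $(I,G)$ it runs the consistency-checking procedure of Definition~\ref{definition:consistency_checking} and then looks up $\Val(I)+w(G)$. The enumeration itself is the implementation side of Lemma~\ref{lemma:bound_I_E}, costing $O(n^{O(k)}) \cdot O(k\log n)^{O(s)^k} \cdot O(\kappa m r)^{O(sk)^{O(k)} \cdot \rho r^2}$; each consistency check is $O(n^{O(k)})$; and the lookup is $O(1)$ amortized since all $\Val$ values for strictly smaller clusters are already cached. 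Multiplying these gives a per-entry bound dominated by $O(n^{O(k)}) \cdot O(\kappa m r)^{O(sk)^{O(k)} \cdot \rho r^2}$.

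The main obstacle, and the step where I would spend the most care, is substituting the parameters and verifying that the exponent collapses to $\sqrt{\log n}\cdot O(k/\epsilon)^{O(k)}$. Using $s = (\log n)^{c/k}$, we have $\log_s \log n = k/c = O(k)$, so $r = O(s)^{O(k)} q_0 \log_s \log n + O(s/\epsilon)^{O(k)} = O(sk/\epsilon)^{O(k)}$. Similarly, $1/\gamma_0 = O(ks^2 L/\epsilon)$ and $1/\gamma_1 = O(s^2/\epsilon)$, giving $\kappa = O(ks^2L/\epsilon)^{O(k)}$, $\rho = O(k) \cdot r^2 \cdot (s/\gamma_1)^{O(k)} = O(sk/\epsilon)^{O(k)}$, and $\log(\kappa m r) = O(k)\cdot O(\log\log n + \log(k/\epsilon))$. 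Therefore the exponent satisfies
\[
O(sk)^{O(k)} \cdot \rho r^2 \cdot \log(\kappa m r) \;\leq\; s^{O(k)} \cdot O(k/\epsilon)^{O(k)} \cdot \log\log n.
\]
Finally, because $s^{O(k)} = (\log n)^{O(c)}$, by choosing the universal constant $c>0$ small enough (this is precisely the condition on $c$ promised when $s$ was defined), we get $s^{O(k)} \cdot \log\log n \leq \sqrt{\log n}$ for all sufficiently large $n$, which gives the claimed factor $\exp(\sqrt{\log n}\cdot O(k/\epsilon)^{O(k)})$. The $O(n^{O(1)^k})$ factor in the final bound absorbs both the $O(n^{O(1)^k})$ factor from the number of entries and the $O(n^{O(k)})$ factors from enumeration and consistency checking.
\end{proofof}
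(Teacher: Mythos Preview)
Your proposal is correct and follows essentially the same approach as the paper: bound total time by (\#entries) $\times$ (per-entry work), invoke Lemmas~\ref{lemma:num_entries} and~\ref{lemma:bound_I_E}, then substitute all parameters and use the freedom in the constant $c$ defining $s=(\log n)^{c/k}$ to force $(\log n)^{O(c)}\cdot\log\log n\le\sqrt{\log n}$.

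One small point: when you write that the per-entry bound is ``dominated by $O(n^{O(k)})\cdot O(\kappa m r)^{O(sk)^{O(k)}\cdot\rho r^2}$'', you have silently dropped the factor $O(k\log n)^{O(s)^k}$ from Lemma~\ref{lemma:bound_I_E}. That factor is not literally absorbed by the other term at this stage; the paper carries it through and only merges it at the end, since $O(k\log n)^{O(s)^k}=\exp\bigl((\log n)^{O(c)}\cdot O(\log\log n)\bigr)$ is itself bounded by $\exp(\sqrt{\log n}\cdot O(1))$ once $c$ is chosen small. This does not affect your final bound, but the intermediate ``dominated by'' claim is not justified as written.
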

\begin{proof}
	Suppose the input is $E := (C, (R, Y), (\BAS, \NBAS), (g, P))$.
	We observe that once the value for some entry is calculated,
	it would not be calculated again, and recalling the value takes constant time.
	Then we shall bound the time when $\Val(E)$ is not yet calculated and $\Ht(C)\neq 0$.

	Observe that for any given $I$ with $\Val(E')$ for all $E' \in I$ known and a graph $G$
	such that $(I, G) \in \mathcal{I}_E$,
	evaluating $\Val(I) + w(G)$ takes $O(n)^{O(k)}$ time.
	Therefore, combining with Lemma~\ref{lemma:num_entries} and Lemma~\ref{lemma:bound_I_E},
	there are at most $O(n^{O(1)^k}) \cdot Z$ entries,
	and it takes $O(n)^{O(k)} \cdot O(k\log{n})^{O(s)^k} \cdot O(\kappa m r)^{O(sk)^{O(k)} \cdot \rho r^2}$
	to evaluate each.
	In conclusion, the time for evaluating all the entries is at most
	$O(n^{O(1)^k}) \cdot O(k\log{n})^{O(s)^k} \cdot O(\kappa m r)^{O(sk)^{O(k)} \cdot \rho r^2}$.

	\noindent\textbf{Substituting Parameters.}
	Recall that we consider $q \leq O(s)^{O(k)}\cdot q_0$.
	Observe that
	$\frac{1}{\gamma_0} := \lceil \frac{1}{\hat{\gamma}_0} \rceil_s
	\leq O(\frac{k s^3 L}{\epsilon})$, and
	$\frac{1}{\gamma_1} := \lfloor \frac{1}{\hat{\gamma}_1} \rfloor_s
	\leq O(\frac{s^2}{\epsilon})$. Substituting $\gamma_0$ and $\gamma_1$,
	we have $\kappa \leq O(\frac{k s L}{\epsilon})^{O(k)}$ and
	$\rho \leq O(\frac{sk}{\epsilon})^{O(k)}$.
	Moreover,
	\begin{align*}
		r := O(1)^k \cdot q\log_{s}{\log{n}} + O(\frac{k}{\epsilon})^{k} + O(\frac{s}{\epsilon})^k 
		\leq O(\frac{sk}{\epsilon})^{O(k)},  
		m \leq O(\frac{sk L}{\epsilon})^k.
	\end{align*}
	By definition, $s := (\log{n})^{\frac{c}{k}}$, $L := O(\log_{s}{n}) = O(\frac{k \log{n}}{c \log{\log{n}}})$.
	Therefore, the running time is at most
	\begin{align*}
		&\quad  O(n^{O(1)^k}) \cdot O(k\log{n})^{O(s)^k} \cdot O(\kappa m r)^{O(sk)^{O(k)} \cdot \rho r^2} \\
		& \leq O(n^{O(1)^k}) \cdot O(k \log{n})^{O(s)^k} \cdot O(\frac{ksL}{\epsilon})^{O(\frac{sk}{\epsilon})^{O(k)}} \\
		& \leq O(n^{O(1)^k}) \cdot \exp(O(s)^{O(k)}\cdot O(\frac{k}{\epsilon})^{O(k)} \cdot \log{\frac{k \log{n}}{\epsilon}}) \\
		& \leq O(n^{O(1)^k}) \cdot \exp(O(\frac{k}{\epsilon})^{O(k)} \cdot O(\log{n})^{O(c)} \cdot \log{\log{n}}).
	\end{align*}
	By choosing constant $c$ to be sufficiently small so that $O(\log{n})^{O(c)} \cdot \log{\log{n}} \leq O(\sqrt{\log{n}})$,
	we conclude that the running time is at most $O(n^{O(1)^k}) \cdot \exp(\sqrt{\log{n}} \cdot O(\frac{k}{\epsilon})^{O(k)})$.
\end{proof}

\begin{lemma}[Characterizing the Value of Entries]
	\label{lemma:min_compatible_value}
For each entry $E :=  (C, (R, Y), (\BAS, \NBAS), (g, P))$ with $\Val(E) \neq \infty$,
$\Val(E)$ is the weight of the minimum weight graph
that is recursively compatible to the entry and uses points in $C\cup R$ only.
\end{lemma}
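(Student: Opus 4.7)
The proof will proceed by structural induction on the height $\Ht(C)$ of the cluster in the entry. The base case $\Ht(C) = 0$ is immediate from the definitions: the base entries have value $0$ and are recursively compatible with the graph consisting of the single point (with no edges), while all other height-$0$ entries have value $\infty$ and are thus excluded by the hypothesis $\Val(E) \neq \infty$. For the inductive step at height $i > 0$, assuming the statement for all entries with clusters of height less than $i$, I will establish the two inequalities separately.

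For the upper bound $\Val(E) \geq \min w(F)$, I take a minimizer $(I^*, G^*) \in \mathcal{I}_E$ realizing $\Val(E) = w(G^*) + \Val(I^*)$; note each entry $E' \in I^*$ must also have finite value. By the induction hypothesis, for each $E' \in I^*$ there is a recursively compatible graph $F_{E'}$ of weight $\Val(E')$ using only points in the corresponding cluster $C'$ together with its portals $R'$. Define $F := G^* \cup \bigcup_{E' \in I^*} F_{E'}$. The plan is to verify that $F$ is recursively compatible with $E$ by exhibiting the witness set $S$: take $S$ to be $\{E\}$ together with the union of the recursion witnesses from each $E' \in I^*$. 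The consistency of $(I^*, G^*)$ with $E$, combined with the inductive recursive compatibility of each $F_{E'}$ with $E'$, is exactly what is needed to check that $F|_{C \cup R}$ is compatible with $E$ and that $F|_{\bigcup R'} = G^*$ (together with $I^*$) witnesses the second clause of Definition~\ref{definition:recursively_compatible}.

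For the lower bound $\Val(E) \leq \min w(F)$, I take a minimum-weight $F$ that is recursively compatible with $E$ (and uses only points in $C \cup R$). From the witnessing set $S$ of Definition~\ref{definition:recursively_compatible} I extract the unique child entries of $E$, call this collection $I$, and set $G := F|_{\bigcup_{E' \in I} R'}$. By the second clause of recursive compatibility, $I$ and $G$ are consistent with $E$, so $(I, G) \in \mathcal{I}_E$. For each $E' \in I$, the subgraph $F|_{C' \cup R'}$ is recursively compatible with $E'$ via the restriction of $S$, so by induction $\Val(E') \leq w(F|_{C' \cup R'})$. Summing and adding $w(G)$, while using that the edges of $F$ decompose (without double counting beyond $G$, since each edge lies either between portals at the child level or strictly inside some $C' \cup R'$ modulo the portals which are accounted for in $G$), yields $\Val(E) \leq w(G) + \sum_{E' \in I} \Val(E') \leq w(F)$.

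The main obstacle will be the careful bookkeeping in the lower bound: verifying that the edge set of $F$ is partitioned (up to duplication within the portal graph $G$) by the subgraphs $F|_{C' \cup R'}$ so that the weights add up correctly, and more delicately, that the natural candidate $(I, G)$ built from the witness set $S$ actually satisfies every clause of Definition~\ref{definition:consistency_checking}. Clauses 5--8 in particular — the checks about $g_i'$ being nonempty implying a basic cell, the global reconstruction $g' = g$, and the propagation of the partition $P_i$ through either $G$-connections or into $P$ — need to be traced back to the compatibility of $F|_{C \cup R}$ with $E$ and of each $F|_{C' \cup R'}$ with $E'$. Once these routine but tedious correspondences are set up, both inequalities combine to give $\Val(E) = \min w(F)$ as claimed.
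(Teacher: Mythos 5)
Your proposal follows essentially the same structure as the paper's proof: structural induction on cluster height, with one direction (your ``upper bound'') constructing a recursively compatible graph $F$ from the minimizer $(I^*, G^*) \in \mathcal{I}_E$ using the induction hypothesis to supply realizing graphs $F_{E'}$, and the other direction (your ``lower bound'') extracting a pair $(I, G)$ from the witness set $S$ of a minimum-weight recursively compatible $F$ and comparing. The paper phrases the lower-bound direction as a proof by contradiction (assume $l < \Val(E)$, deduce some $w(F_t) < \Val(E_t)$, contradicting the IH), but this is logically equivalent to your direct argument $\Val(E) \leq w(G) + \sum \Val(E') \leq w(F)$.

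One thing worth noting: the ``careful bookkeeping'' you flag in the last paragraph is a genuine subtlety, and the paper in fact elides it. The paper asserts $\sum_t w(F_t) + w(G') = l$ with $F_t := F'|_{C_t \cup R_t}$ and $G' := F'|_{\bigcup_t R_t}$; edges of $F'$ running between two portals inside a single $R_t$ are a priori counted in both $F_t$ and $G'$, so this identity needs the observation that such overlap either does not occur in an optimal $F'$ or can be charged only once when constructing the DP pair. Since you explicitly identify this as the step requiring care, your proposal is, if anything, slightly more cautious than the published proof; but the underlying strategy is the same, and the issue you raise would need the same resolution in both write-ups.
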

\begin{proof}
For the clusters of height $0$, the Lemma holds trivially.

Assuming the Lemma holds for all entries with the clusters of height $i-1$, 
we prove the Lemma for an entry $E$ with $C$ of height $i$ centered at $u\in N_i$,
where $i \geq 1$.
We shall first show that $\Val(E)$ is the weight of some graph that is recursively compatible to the entry
and uses points in $C\cup R$ only.
Then we shall show that the value is minimum.

\noindent\textbf{Feasibility.}
Suppose $(I, G) := \arg\min_{(I', G') \in \mathcal{I}_{E}}\{\Val(I') + w(G')\}$.
Define $I = \{ E_j \}_j$, where $E_j := (C_j, (R_j, Y_j), (\BAS_j, \NBAS_j), (g_j, P_j))$.
Since $\Val(E) \neq \infty$, we have $\Val(I') \neq \infty$. For $E_j \in I$, by assumption,
there exists a graph that is recursively compatible to $E_j$ and uses points in $C_j \cup R_j$ only, and we denote it as $F_j$.
We define a graph $F$ that is the union of $F_j$ for all $j$, and $G$.
Then $w(F) = \Val(I) + w(G)$.

We shall show that $F$ is recursively compatible to $E$. Since $(I, G) \in \mathcal{I}_{E}$, $I$ and $G$ are consistent with
$E$. Since $F_j$ is recursively compatible to $E_j$ for all $j$, it remains to verify $F$ is compatible to $E$.
When we say ``consistency checking procedure'', we refer to Definition~\ref{definition:consistency_checking}.
\begin{compactitem}
	\item $F$ uses points in $C \cap R$ only. This is by definition.
	\item A part $y$ is in $Y$, if and only if $F$ connects all the portals in the part $y$.
		This is by the step $2$ of the consistency checking procedure.
	\item $\BAS$ covers all components of $F$ that intersect $R$.
		This is by step $5$ of the consistency checking procedure.
	\item For $e \in U$, the collection of subsets of $Y$ that $e$ is connected to by $F$ is exactly $g(e)$.
		We note that
		step $3, 4$ of the consistency checking procedure,
		together with the internal constraint that $e' \in \BAS_j$ implies any sibling cluster of $e'$
		is in $\BAS_j \cup \NBAS_j$ for all $j$.
		This implies that 
		$\bigcup_{j}{(\BAS_j \cup \NBAS_j)}$ is a refinement of $\BAS \cup \NBAS$.
		Then, for each $j$, for each $e' \in U_j$, and for each $e \in U$,
		either $e' \subset e$ or $e' \cap e = \emptyset$. Therefore, step $6$ is sufficient to ensure this item.
		(If $e'$ is not a subset of $e$ but $e' \cap e \neq \emptyset$,
		then the $g_j$ mappings in the sub-entries have not sufficient
		information to determine the portals that $e' \cap e$ is connected to.)
	\item Every terminal in $C\cup R$ is visited by $F$.
	This is by the construction of $F$, and by $F_j$ is recursively compatible to $E_j$ for all $j$.
	\item Every isolated terminal of $C$ is connected to at least one portal in $R$ by $F$.
		This is by step $9$ of the consistency checking procedure.
	\item Every terminal pair that both lie in $C$ is either in the same component of $F$,
		or they are connected to $y_1$ and $y_2$ in $Y$ by $F$ and $\{ y_1, y_2 \}$ is a subset of a part in $P$.
		This is by step $7, 8$ of the consistency checking procedure, and by $F_j$ is recursively compatible to $E_j$ for all $j$.
\end{compactitem}
This implies that $F$ is recursively compatible to $E$.

\noindent\textbf{Optimality.}
Then we shall show that $\Val(E)$ is minimum.
Suppose not.
Define $l$ as the weight of the minimum weight graph
that is recursively compatible to $E$ and uses points in $C\cup R$ only.
Define $F'$ to be the corresponding graph recursively compatible to $E$ with weight $l$.
Since $F'$ is recursively compatible to $E$,
there exists $I' := \{ E_t := (C_t, (R_t, Y_t), (\BAS_t, \NBAS_t), (g_t, P_t))\}_t$
and a portal graph $G'$ of $I'$ that are consistent with $E$.
Moreover, there exists a graph $F_t$ that is recursively compatible to $E_t$, for all $t$.

We note that $(I', G') \subset \mathcal{I}_{E}$. 
Therefore, $\sum_{t}{w(F_t)} + w(G') = l < \Val(E) \leq \Val(I') + w(G')$.
This implies that $\sum_{t}{w(F_t)} < \sum_{E' \in I'}{\Val(E')}$, and hence
there exists $t$ such that $w(F_t) < \Val(E_t)$.
However, we know that $F_t$ is recursively compatible to $E_t$, and by assumption,
$\Val(E_t) \leq w(F_t)$.
This is a contradiction.
\end{proof}

\begin{corollary}
	\label{corollary:final_value}
	There exists a feasible solution to $\SFP$ whose weight is the value of the final entry.
\end{corollary}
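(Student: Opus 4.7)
The plan is to invoke Lemma~\ref{lemma:min_compatible_value} on the final entry to extract a graph $F$ of weight $\Val(E_{\text{final}})$, and then verify that the compatibility conditions for the final entry force $F$ to be a feasible $\SFP$ solution.

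First, I need to argue that $\Val(E_{\text{final}}) \neq \infty$, so that Lemma~\ref{lemma:min_compatible_value} applies meaningfully. The idea is that any feasible $\SFP$ solution (for instance, the one produced by the structural property in Lemma~\ref{lemma:struct_property}, or indeed any feasible graph on the terminals) can be turned into a recursively compatible object for the final entry by reading off, at each descendant cluster $C'$ of the root, the appropriate $(R', Y', \BAS', \NBAS', g', P')$ induced by the solution restricted to $C'$ together with its portals. Because the portal-respecting and $(m,r)$-light assumptions give the internal constraints needed for entries, this produces a witness entry collection consistent with $E_{\text{final}}$, which by induction on height shows the value is finite.

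Next, I apply Lemma~\ref{lemma:min_compatible_value} to obtain a graph $F$, using only points in $C \cup R$ where $C$ is the root cluster and $R = \emptyset$, that is recursively compatible to $E_{\text{final}}$ and satisfies $w(F) = \Val(E_{\text{final}})$. It then remains to verify feasibility of $F$ for $\SFP$. Since $C$ is the root, every terminal lies in $C$, so conditions~4 and~6 of compatibility apply. For any terminal pair $\{a,b\} \in \INS$, condition~6 says that either $a$ and $b$ lie in the same component of $F|_{C \cup R} = F$, or they are connected to two parts $y_1, y_2 \in Y$ whose union lies inside a part of $P$. Because $R = \emptyset$ for the final entry, the partition $Y$ is empty and so is $P$, which rules out the second alternative. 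Hence every terminal pair must lie in a common component of $F$, so $F$ is feasible.

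The main obstacle is essentially bookkeeping: making sure the final entry's ``trivial'' choice $R=\BAS=\NBAS=\emptyset$ really does correspond to requiring that all terminal pairs be connected internally, so that no information is lost ``outside the root cluster''. Once this is unpacked, the proof is a direct combination of Lemma~\ref{lemma:min_compatible_value} with the compatibility definition specialized to the root.
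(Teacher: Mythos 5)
Your proposal is correct and matches the approach the paper implicitly relies on: invoke Lemma~\ref{lemma:min_compatible_value} on the final entry to extract a recursively compatible graph of the stated weight, and observe that because the final entry has $R = \emptyset$ (hence $Y = P = \emptyset$) and the root cluster contains every terminal, compatibility condition~6 rules out the ``connected outside'' alternative and forces every terminal pair into a common component, i.e.\ feasibility. The paper treats this as immediate and offers no written proof, so your unpacking is essentially the intended one.

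One small caveat worth being aware of: Lemma~\ref{lemma:min_compatible_value} carries the hypothesis $\Val(E) \neq \infty$, and your route to finiteness goes through Lemma~\ref{lemma:struct_property} (which only holds with constant probability over the sampled radii) plus an implicit appeal to the encoding argument that the paper only spells out afterwards in Lemma~\ref{lemma:compatible_good_solution}. This does not invalidate the corollary as it is used --- in the overall argument finiteness is indeed supplied by Lemma~\ref{lemma:compatible_good_solution} --- but as written, your finiteness step quietly relies on a lemma that, in the paper's ordering, appears after the corollary. A cleaner phrasing would be to state the corollary conditionally on $\Val(E_{\text{final}}) < \infty$, or to note that finiteness will be established separately.
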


\begin{lemma}[Good Solution is Recursively Compatible]
	\label{lemma:compatible_good_solution}
	Suppose for each $i \in [L]$ and $u \in N_i$, $O(k\log{n})$ radii are fixed.
	Suppose $F$ is an $(m, r)$-light solution such that $\Eff$ satisfies the cell property in terms of $F$
	under one of the hierarchical decompositions defined by the radii.
	Then the value of the final entry is at most $w(F)$.
\end{lemma}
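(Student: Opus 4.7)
The plan is to build a witnessing collection of entries from $F$ and invoke Lemma~\ref{lemma:min_compatible_value}. For each cluster $C$ in the (fixed) hierarchical decomposition under which $F$ satisfies the cell property, I would define an entry $E_C := (C, (R, Y), (\BAS, \NBAS), (g, P))$ by reading off $F$: let $R$ be the portals of $C$ used by $F$ to leave $C$ (so $|R| \leq r$ by $(m,r)$-lightness); let $Y$ be the partition of $R$ induced by connectivity in $F|_{C \cup R}$; set $\BAS := \Bas(C)$ and $\NBAS := \NBas(C)$ computed with respect to $F$; let $g(e)$ be the parts of $Y$ reached by $e$ inside $F|_{C \cup R}$; and let $P$ be the partition of $Y$ induced by global connectivity of $F$ outside $C$. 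The internal constraints follow from $(m,r)$-lightness, Lemma~\ref{lemma:bound_eff}, and Lemma~\ref{lemma:bound_can}.

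I would then verify that $F|_{C \cup R}$ is compatible with $E_C$ (essentially by construction, with item 2 --- that $\BAS$ covers every $R$-touching component of $F|_{C \cup R}$ --- following because such components are crossing components of $C$ and hence are covered by $\Bas(C)$ by Definition~\ref{definition:bas}). Next, I would verify that the child collection $I_C := \{E_{C_i}\}_i$ together with the portal graph $G_C := F|_{\bigcup_i R_i}$ is consistent with $E_C$ against the nine checks of Definition~\ref{definition:consistency_checking}. Checks 1, 2, 7, 8, 9 are direct translations of the definitions of $R$, $Y$, $P$ between adjacent levels. Checks 3 and 4 reduce to Lemma~\ref{lemma:refinement}, using Lemma~\ref{lemma:h_increase_one} to handle the two cases $h(\Ht(C_i), w(A)) = h(\Ht(C), w(A))$ and $h(\Ht(C_i), w(A)) = h(\Ht(C), w(A))/s$ for a crossing component $A$ owning a given basic cell, together with Definition~\ref{definition:nbas} for the non-basic side.

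The main obstacle will be Checks 5 and 6, which require that the parent mapping $g$ coincides with the extension $g_i'$ recomputed from the children through the merges induced by $G_C$. The cell property is decisive here: since each $e \in \Dis(\Eff(C))$ meets at most one crossing component of $C$, the portal set reached by $e$ through $F|_{C \cup R}$ can be decomposed unambiguously as the union, over sub-cells $e' \subseteq e$ of the children, of the portal sets reached at the child level, correctly merged by $G_C$. Without the cell property, two distinct crossing components could intersect the same $e$, and the child-level mappings $g_i'$ would encode mutually inconsistent information that no parent $g$ could simultaneously reproduce.

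Finally, to conclude, every edge of $F$ lies either inside some $C_i \cup R_i$ or (by portal-respectfulness at the child level) has both endpoints among $\bigcup_i R_i$ and is thus captured in $G_C$. Hence $w(F|_{C \cup R}) = w(G_C) + \sum_i w(F|_{C_i \cup R_i})$. Unrolling this identity from the height-$0$ base entries (which are singletons matching the base entries of value $0$) up to the root, whose entry coincides with the final entry since $R = \emptyset$ there, gives $\Val(\text{final entry}) \leq w(F)$ as required.
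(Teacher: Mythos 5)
Your proposal follows the paper's own proof in all essentials: for every cluster $C$ in the fixed decomposition, read off an entry $E_C$ from $F$ (active portals $R_C$, their connectivity partition $Y_C$, $\Bas(C)$ and $\NBas(C)$, the cell-to-portal map $g_C$, and a compatible $P_C$), verify compatibility and the nine consistency conditions level by level, and invoke Lemma~\ref{lemma:min_compatible_value} to conclude $\Val(\text{final entry}) \leq w(F)$. One small misplacement of where the cell property does its work: the paper invokes the cell property to establish the \emph{internal constraints} of $E_C$ --- in particular that a valid $P_C$ exists at all, since the constraint requires each $g_C(e)$ to lie inside a single part of $P_C$, and without the cell property a disjointified cell $e$ could meet two globally disconnected crossing components reaching different portal groups, for which no admissible $P_C$ exists --- whereas consistency check 6 is discharged by Lemma~\ref{lemma:refinement} alone (child subcells are contained in or disjoint from parent cells, so the pushforward of $g$ is unambiguous) rather than by the cell property, and checks 3, 4, 5 the paper attributes directly to the definitions of basic and non-basic cells. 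Your closing edge-counting identity is redundant once Lemma~\ref{lemma:min_compatible_value} is available, and as written it can overcount edges with both endpoints in $\bigcup_i R_i$; but the inequality direction you actually need still holds, so this is harmless.
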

\begin{proof}
	We shall show that $F$ is recursively compatible to the final entry,
	and then Lemma~\ref{lemma:min_compatible_value} implies that the value of the final entry is at most $w(F)$.

	Suppose we fix a hierarchical decomposition induced from the given radii,
	such that $F$ is $(m, r)$-light and $\Eff$ satisfies the cell property
	in terms of $F$.
	For each cluster $C$ in the decomposition, 
	we define $F_C := F_{C\cup R}$, 
	define an entry $E_C := (C, (R_C, Y_C), (\BAS_C, \NBAS_C), (g_C, P_C))$ as follows, where $R$ is the set of active portals for $C$.
	\begin{compactitem}
		\item $R_C := R$.
		\item $Y_C$ contains a part $y$, if and only if portals in $y$ is connected by $F_C$.
		\item $\BAS_C := \Bas(C)$, $\NBAS_C := \NBas(C)$.
		\item For each $e\in U_C$, let $g_C(e) := Q$, where $Q$ is the collection of parts in $Y_C$
		that $e$ is connected to by $F_C$.
		\item Define $P_C$ to be any one that satisfies
		\begin{compactitem}
			\item[1.] for each $e\in U_C$, $g_C(e) = Q$ implies $Q$ is a subset of $P_C$;
			\item[2.] for each terminal pair $(a, b)$ that both lie in $C$, if they are not connected by $F_C$ then the subsets of
				portals that $a$ and $b$ are connected by $F_C$ are in a same part of $P_C$.
		\end{compactitem}
	\end{compactitem}
	The internal constraints for an entry is satisfied,
	from the definition of the cells,
	the fact that $\Eff$ satisfies the cell property, Lemma~\ref{lemma:bound_can} and Lemma~\ref{lemma:bound_eff}.

	Then we (uniquely) define $I_C := \{ (C_i, (R_i, Y_i), (\BAS_i, \NBAS_i), (g_i, P_i)) \}_i$
	as the child collection of $E_C$, and define $G_C := F|_{\bigcup_{i}{R_i}}$ as the portal graph of $I_C$.
	
	We then check that $I_C$ and $G_C$ are consistent with $E_C$.
	Step $1, 2, 7, 8, 9$ are immediate. Step $3, 4, 5$ follow from the definition of the basic cells and non-basic cells.
	Inside step $6$, we observe that $g'(e)$ is evaluated by looking at $e' \subset e$ only (instead of considering all $e' \in U_i$), for $e' \in U_i$ for some $i$, and $e\in U$.
	However, this is indeed sufficient, since Lemma~\ref{lemma:refinement} asserts that for any $e \in U$, $e' \in U_i$ for any $i$,
	either $e' \subset e'$ or $e \cap e' = \emptyset$. 

	It remains to check the following for $E_C$, for each cluster $C$.
	\begin{compactitem}
		\item A part $y\in Y_C$, if and only if $F_C$ connects all the portals in the part $y$.
			This is by definition.
		\item $\BAS$ covers all components of $F_C$ that intersect $R_C$.
			This is by definition.
		\item For $e \in U_C$, the collection of subsets of $Y_C$ that $e$
			is connected to by $F_C$ is exactly $g_C(e)$.
			This is by definition.
		\item Every terminal in $C \cup R_C$ is visited by $F_C$. This is by the feasibility of $F$. 
		\item Every isolated terminal of $C$ is connected to at least one portal in $R_C$ by $F_C$. This is by the feasibility of $F$.
		\item Every terminal pair that both lie in $C$ is either in the same component of $F_C$,
		or they are connected to $y_1$ and $y_2$ in $Y_C$ by $F_C$ and $\{ y_1, y_2 \}$ is a subset of a part in $P_C$.
		This is by definition.
	\end{compactitem}
	This finishes the proof.
\end{proof}

Combining Lemma~\ref{lemma:struct_property}, Lemma~\ref{lemma:compatible_good_solution},
Corollary~\ref{corollary:final_value} and Lemma~\ref{lemma:running_time}, we conclude a PTAS for sparse \SFP instances.

\begin{corollary}[PTAS for Sparse \SFP Instances]
	\label{corollary:ptas_sparse_sfp}
	For an instance of $\SFP$ that has a $q$-sparse optimal net-respecting solution, algorithm \DP returns a $(1+\epsilon)$ solution with constant probability,
	running in time
	$O(n^{O(1)^k}) \cdot \exp(\sqrt{\log{n}}\cdot O(\frac{k}{\epsilon})^{O(k)})$, for $q \leq O(s)^{O(k)}\cdot q_0$.
\end{corollary}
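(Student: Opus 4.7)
The plan is to chain together the four cited results. First I would invoke Lemma~\ref{lemma:struct_property} on a fixed $q$-sparse optimal net-respecting solution $F$ of the input instance. Since $q \leq O(s)^{O(k)} \cdot q_0$, the lemma guarantees that, with constant probability (call this event $\mathcal{E}$) over the $O(k\log n)$ radii sampled for every net-point, there exists a configuration of those radii defining a hierarchical decomposition under which some $(m,r)$-light solution $F'$ exists with $\Eff$ (in terms of $F'$) satisfying the cell property, where $\E[w(F')] \leq (1+O(\epsilon))\cdot w(F)$, and the parameters $m,r$ are exactly the ones hard-wired into the DP's internal constraints.

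Next I would exploit that \DP implicitly optimizes over all configurations induced by the sampled radii, since the cluster identification inside an entry records the relevant radii. On event $\mathcal{E}$, the specific good configuration and the solution $F'$ satisfy the hypotheses of Lemma~\ref{lemma:compatible_good_solution}, so $F'$ is recursively compatible with the final entry for that configuration, yielding $\Val(\text{final}) \leq w(F')$. Since \DP returns $\Val(\text{final})$ minimized over all configurations, its output value is at most $w(F')$ on $\mathcal{E}$. Corollary~\ref{corollary:final_value} then guarantees that this value is realized by an actual feasible \SFP solution, so \DP indeed outputs a feasible solution of weight at most $w(F')$. Taking expectations over the remaining randomness conditioned on $\mathcal{E}$ gives an expected output weight of at most $(1+O(\epsilon))\cdot w(F) = (1+O(\epsilon))\cdot \OPT^{nr}(\INS)$, and since $\OPT^{nr} \leq (1+O(\epsilon))\cdot \OPT$ by the net-respecting construction in Section~\ref{sec:prelim}, a standard Markov-type argument converts this into a $(1+O(\epsilon))$-approximation with constant probability; rescaling $\epsilon$ gives the claimed factor. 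The running time bound is immediate from Lemma~\ref{lemma:running_time}.

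The only subtle point, which I would be careful about in the write-up, is the composition of two sources of randomness: the ``constant probability'' that the good configuration appears among the sampled radii (event $\mathcal{E}$ from Lemma~\ref{lemma:struct_property}), and the expectation bound on $w(F')$ conditioned on $\mathcal{E}$. I would note that these combine cleanly: conditioned on $\mathcal{E}$ the algorithm is deterministic (it simply enumerates configurations via the DP and picks the best), so the expected cost on $\mathcal{E}$ is at most $(1+O(\epsilon))\cdot \OPT$, and one more Markov step inside $\mathcal{E}$ yields a $(1+\epsilon)$-approximation with constant probability overall. Everything else in the corollary, including the feasibility of the returned object and the polynomial-in-$n$, subexponential-in-$\sqrt{\log n}$ running time, is already packaged by the lemmas cited above, so no further argument is needed.
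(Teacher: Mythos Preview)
Your proposal is correct and follows exactly the approach the paper takes: the paper's own ``proof'' of this corollary is simply the sentence ``Combining Lemma~\ref{lemma:struct_property}, Lemma~\ref{lemma:compatible_good_solution}, Corollary~\ref{corollary:final_value} and Lemma~\ref{lemma:running_time}, we conclude a PTAS for sparse \SFP instances,'' and you have spelled out precisely how these four pieces chain together. Your added care about the interplay between the constant-probability event $\mathcal{E}$ and the expectation bound on $w(F')$ is a reasonable elaboration of details the paper leaves implicit.
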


\bibliography{main}


\end{document}